\theoremstyle{plain}
\newtheorem{theorem}{Theorem}[section]
\newtheorem{proposition}[theorem]{Proposition}
\theoremstyle{remark}
\newtheorem*{remark}{Remark}
\newcommand{\R}{\mathbb{R}}
\renewcommand{\O}{\mathcal{O}}
\newcommand{\T}{^\top}
\newcommand{\N}[1]{\mathcal{N}\left(#1\right)}
\newcommand{\lrp}[1]{\left(#1\right)}
\newcommand{\lrb}[1]{\left[#1\right]}
\newcommand{\lrs}[1]{\left\{#1\right\}}
\newcommand{\diag}[1]{\operatorname{diag}\left(#1\right)}
\newcommand{\tr}[1]{\operatorname{tr}\left(#1\right)}
\DeclareMathOperator*{\argmin}{arg\,min}
\DeclareMathOperator*{\argmax}{arg\,max}
\newcommand{\norm}[1]{\left\lVert#1\right\rVert}
\newcommand{\BIC}{\operatorname{BIC}}
\newcommand{\Skew}{\operatorname{Skew}}
\newcommand{\Fl}{\operatorname{Fl}}
\begin{document}

\begin{frontmatter}
\title{The curse of isotropy: from principal components to principal subspaces}
\runtitle{The curse of isotropy}

\begin{aug}
\author[A]{\fnms{Tom}~\snm{Szwagier}\ead[label=e1]{tom.szwagier@inria.fr}\orcid{0000-0002-2903-551X}} 
\and
\author[B]{\fnms{Xavier}~\snm{Pennec}\ead[label=e2]{xavier.pennec@inria.fr}\orcid{0000-0002-6617-7664}} 


\address[A]{Tom Szwagier is PhD Candidate, Université Côte d'Azur and Inria, Sophia-Antipolis, France\printead[presep={\ }]{e1}.}

\address[B]{Xavier Pennec is Senior Research Scientist, Université Côte d'Azur and Inria, Sophia-Antipolis, France\printead[presep={\ }]{e2}.}

\end{aug}

\begin{abstract}
Principal component analysis is a ubiquitous tool in exploratory data analysis. It is widely used by applied scientists for visualization and interpretability purposes. We raise an important issue (the curse of isotropy) about the interpretation of principal components with close eigenvalues. They may indeed suffer from an important rotational variability, which is a pitfall for interpretation. Through the lens of a probabilistic covariance model parameterized with flags of subspaces, we show that the curse of isotropy cannot be overlooked in practice. In this context, we propose to transition from ill-defined principal components to more-interpretable principal subspaces. The final methodology (principal subspace analysis) is extremely simple and shows promising results on a variety of datasets from different fields.
\end{abstract}

\begin{keyword}
\kwd{Principal Component Analysis}
\kwd{Isotropy}
\kwd{Interpretability}
\kwd{Parsimonious Models}
\kwd{Flag Manifolds}
\end{keyword}

\end{frontmatter}

\section{Introduction}\label{sec:intro}

Principal component analysis (PCA) \cite{jolliffe_principal_2002} is a universal method in data analysis. It gives the main modes of variation in the data by diagonalizing the empirical covariance matrix. The eigenvectors associated with the largest eigenvalues are the \textit{principal components}, and the subspace they span is used for dimension reduction and visualization.
Additionally, principal components can be used for exploratory data analysis and interpretability purposes. It has been extensively used on structured anatomical data (with components related to morphological features), in atmospheric sciences (with components related to climate patterns), computer vision (with so-called \textit{eigenfaces~\cite{sirovich_low-dimensional_1987}}) and many other fields. 
We refer to the chapters 4 and 11 of \cite{jolliffe_principal_2002} for detailed examples of principal component interpretation.

Let us assume that a dataset has been sampled from a multivariate Gaussian distribution.
If all the population covariance eigenvalues are \textit{simple} (i.e. distinct), then we can associate to each eigenvalue a unique eigenvector (up to sign {and scale}). Now, if some eigenvalues are \textit{multiple}, then those are associated with multidimensional eigenspaces, i.e. an infinite number of eigenvectors. This implies that the principal components associated with those multiple eigenvalues exhibit a large \textit{intersample variability}. More specifically, for any dataset size $n$, each independent $n$-sample from the distribution can yield totally different principal components, with a full rotational uncertainty within the eigenspaces. Therefore, under this multiple-eigenvalue assumption, principal components are unstable---regardless of $n$---which is fatal to data interpretability. We call this issue the \textit{curse~of~isotropy}.

In real datasets, empirical covariance eigenvalues are never exactly equal (they are almost surely different from a measure-theoretical point of view, cf. Theorem~\ref{appthm:PSA}), but some may be relatively close. In this case, it might be wiser to assume that \textit{close} eigenvalues are actually \textit{equal}---especially for small $n$---in order to avoid overfitting some spurious patterns caused by \textit{sampling errors}~\cite{north_sampling_1982}. Under this assumption, the dataset suffers from the curse of isotropy and one must be careful about interpreting the associated principal components. Therefore, identifying the curse of isotropy in practice boils down to answering the following question: \textit{when should we assume that the dataset has been sampled from a multivariate Gaussian distribution with repeated covariance eigenvalues?}

In this paper, we answer the question with an \textit{explicit} guideline, derived from two key concepts: \textit{parsimonious Gaussian modeling} and \textit{flags of subspaces}.
More specifically, we introduce a latent variable generative model called \textit{principal subspace analysis} (PSA). This model assumes a Gaussian density with repeated eigenvalues, where the sequence of eigenvalue multiplicities is specified by the so-called \textit{type} of the model. We show that PSA generalizes the celebrated Probabilistic PCA (PPCA) of Tipping and Bishop~\cite{tipping_probabilistic_1999} and unifies it with Isotropic PPCA (IPPCA)~\cite{bouveyron_high-dimensional_2007,bouveyron_intrinsic_2011}---a parsimonious version of PPCA suited to high dimensions. PSA models have a rich geometry relying on flag manifolds and stratify the space of covariance matrices. This enables us to assess the drop of model complexity caused by equalizing some eigenvalues and to perform efficient model selection based on parsimony-inducing criteria such as the Bayesian information criterion (BIC)---other criteria are investigated in Section~\ref{appsec:MS} with similar conclusions.
We show that two adjacent sample eigenvalues should be assumed equal when their \textit{relative eigengap} is lower than a given threshold. This threshold depends on $n$ but is independent of the dimension $p$.

The results are striking: in almost all the datasets that we analyze, the curse of isotropy arises.
This questions the numerous scientific works relying on the interpretation of principal components.
While this could sound fatal to exploratory data analysis, we show that the curse of isotropy can actually be leveraged to improve data interpretability.
Indeed, in such a situation, we suggest to give up \textit{principal components} and transition to more-interpretable \textit{principal subspaces}. Taking advantage of our generative model and factor rotation methods, we propose several qualitative and quantitative methods to increase the interpretability of principal components. We test the resulting PSA methodology on synthetic and real datasets and get promising results. More precisely, while principal components with close eigenvalues may be fuzzy{---as arbitrary linear combinations of latent variables---the principal subspace they span may contain more interpretable features.
}

\section{The curse of isotropy}

Let us consider a dataset sampled independently from a two-dimensional \textit{isotropic} Gaussian distribution. This implies that the eigenvalues of the population covariance matrix are equal. The sample covariance matrix, however, is an approximation of the population covariance matrix, whose accuracy improves with the number of observed samples~\cite{tyler_asymptotic_1981}. Notably, the empirical eigenvalues are almost surely distinct (cf. Theorem~\ref{appthm:PSA}). 
Therefore, PCA outputs the unique eigenvectors (up to sign) associated with each eigenvalue.
If we repeat this experiment several times independently and plot the principal components, we get Fig~\ref{fig:isotropy}.
\begin{figure}[t]
\centering
\includegraphics[width=\linewidth]{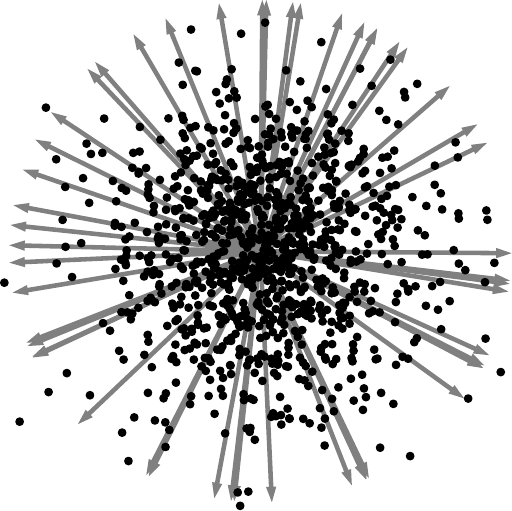}
\caption{Covariance eigenvectors of a dataset sampled from a two-dimensional isotropic Gaussian, repeated independently 25 times. Principal components have an isotropic intersample variability.}
\label{fig:isotropy}
\end{figure}
As we can see, the principal components are evenly spread in all directions---i.e. isotropically.  We call this phenomenon \textit{the curse of isotropy}. It is a curse since it yields principal components with high intersample variability and without any preferred direction. The observed components could therefore be random combinations of \textit{actually} interpretable components.

A legitimate question might then be: \textit{why (and when) should we assume that a given dataset has been sampled from a Gaussian distribution with repeated eigenvalues?} 
The Gaussian assumption is notably justified by the central limit theorem, the entropy maximization and the attractive computational properties that make Gaussian distributions the cornerstone of machine learning generative models~\cite{bishop_pattern_2006}.
Now, regarding the multiple-eigenvalue assumption, we have to go back to one of the founding principles of modeling that is the \textit{law of parsimony}, also known as \textit{Occam's razor}: ``The simplest explanation is usually the best one''. This principle is particularly applied in statistical modeling, where the limited number of observed samples makes overparameterized models overfitting~\cite{myung_counting_2000}. Notably, covariance matrices (which have ${O}(p^2)$ parameters) can almost never be correctly estimated in practice, especially in high dimensions~{\cite{pourahmadi_covariance_2011}}. Therefore, more parsimonious models have to be considered, like isotropic Gaussians (which have $1$ parameter---the variance), where all the covariance eigenvalues are equal. 
In the following, we show that a Gaussian model with \textit{repeated eigenvalues}, i.e. isotropic in some multidimensional eigenspaces, has less parameters than one with \textit{distinct eigenvalues} and therefore provides a simpler explanation of the data. Then, using parsimonious model selection criteria such as the BIC, we are able to decide which eigenvalues should be assumed equal.

\section{Identifying the curse of isotropy}
In order to spot the curse of isotropy, we go through the lens of statistical modeling and introduce the PSA generative model. This model assumes a Gaussian distribution with repeated covariance eigenvalues. 
It enjoys an explicit maximum likelihood estimate with a rich geometry enabling effective model selection.

\subsection{PSA model}
Let ${\gamma} \coloneqq (\gamma_1, \dots, \gamma_d)$ be a \textit{composition} of a positive integer $p$---i.e. a sequence of positive integers that sums up to $p$.
We define the PSA model of \emph{type} ${\gamma}$ as the family of Gaussian distributions $~{p(x | \mu, \Sigma) \coloneqq \mathcal{N}(x | \mu, \Sigma)}$, where $~{\mu \in \R^p}$ is a mean vector and $~{\Sigma = \sum_{k=1}^d \lambda_k Q_k {Q_k}\T\in S_p^{++}}$ is a covariance matrix with repeated eigenvalues $\lambda_1 > \dots > \lambda_d > 0$ of respective multiplicity $\gamma_1, \dots, \gamma_d$ and associated eigenspaces $\mathrm{Im}(Q_1), \dots, \mathrm{Im}(Q_d)$.
These distributions can be rewritten as a (linear-Gaussian) latent variable generative model
\begin{equation}\label{eq:PSA_model}
{x} = \sum_{k=1}^{d-1} \sigma_k {Q}_k {z}_k + {\mu} + {\epsilon},
\end{equation}
where $~{\sigma_1 > \dots > \sigma_{d-1} > 0}$ are decreasing scaling factors,
${Q}_k \in \R^{p \times \gamma_k}$ are mutually-orthogonal $\gamma_k$-frames, $~{{z}_k \sim \N{{0}, {I}_{\gamma_k}}}$ are independent latent variables and ${\epsilon} \sim \N{{0}, \sigma^2 {I}_{p}}$ is an isotropic Gaussian noise. 
An illustration of the generative model is provided in Fig~\ref{fig:PSA}. 
PPCA and IPPCA models can then be reinterpreted as PSA models, of respective types $~{{\gamma} = (1, \dots, 1, p - q)}$ and ${\gamma} = (q, p - q)$, where $q < p$ is the intrinsic dimension (cf. Section~\ref{appsec:PSA}).
\begin{figure}
\centering
\includegraphics[width=\linewidth]{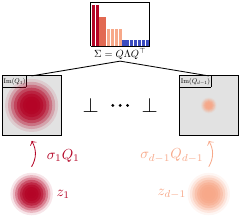}
\caption{PSA generative model, assuming that the observed data was first sampled from a sequence of independent low-dimensional normal latent variables, then linearly mapped to mutually-orthogonal subspaces and finally shifted and added an isotropic Gaussian noise~\eqref{eq:PSA_model}. The resulting density is a multivariate Gaussian with repeated eigenvalues, of respective multiplicities ${\gamma} = (2, 2, 4, 7)$.}
\label{fig:PSA}
\end{figure}

\subsection{Geometry and inference}
From a geometric point of view, the fitted density is isotropic on a sequence of mutually-orthogonal subspaces $\operatorname{Im}(Q_1) \perp \dots \perp \operatorname{Im}(Q_{d})$ of respective dimensions $\gamma_1, \dots, \gamma_d$.
Such a sequence is called a \emph{flag} of linear subspaces of \emph{type} ${\gamma}$.
Therefore, flags of type $\gamma$---which are diffeomorphic to $\O(p) / (\O(\gamma_1) \times \dots \times \O(\gamma_d))$~\cite{arnold_modes_1972, ye_optimization_2022}---naturally parameterize PSA models. 
Consequently, Stiefel manifolds and Grassmannians---which are particular cases of flag manifolds---respectively parameterize PPCA and IPPCA models (cf. Section~\ref{appsec:PSA}).
The remaining model parameters are the subspace variances $(\lambda_1, \dots, \lambda_d) \in \R^{d}$ and the mean ${\mu} \in \R^p$.
Thus, the \textit{complexity} (dimension of the parameter space) of the PSA model of type ${\gamma}$ is
\begin{equation}\label{eq:kappa}
    \kappa({\gamma}) \coloneqq p + d + \frac{p(p-1)}{2} - \sum_{k=1}^{d} \frac {\gamma_k (\gamma_k - 1)} {2}.
\end{equation}
We can notably see that the decrease in model complexity is quadratic in the number of equalized eigenvalues.

One of the strength of the PSA models is that their maximum likelihood estimate is \textit{explicit}, similarly to PPCA and IPPCA. In a nutshell, we show in Theorem~\ref{appthm:PSA} that the most likely mean vector {$\hat\mu$} is the \textit{empirical mean}, the most likely {eigenvalues $\hat\lambda_1, \dots, \hat\lambda_d$} are the \textit{block-averaged sample eigenvalues} according to the type $\gamma$, and the most likely flag {$(\operatorname{Im}(\hat Q_1), \dots, \operatorname{Im}(\hat Q_{d}))$} is the sequence of mutually-orthogonal subspaces spanned by the associated eigenvectors. {Denoting $\ell_1\geq\dots\geq\ell_p$ the sample eigenvalues, $q_k \coloneqq \sum_{l=1}^k \gamma_k$ the accumulated dimensions, and $~{\hat{\lambda}_k \coloneqq \frac{1}{\gamma_k}\sum_{j=q_{k-1}+1}^{q_{k}} \ell_j}$, the block-averaged sample eigenvalues, we get} the following expression for the maximum likelihood
\begin{equation}\label{eq:PSA_ML}
    \ln \hat{\mathcal{L}} (\gamma) = -\frac n 2 \left(p \ln(2\pi) + \sum_{k=1}^d \gamma_k \ln{{\hat{\lambda}_k}} + p\right).
\end{equation}

\subsection{Identifying the curse of isotropy in practice}
The Bayesian information criterion~\cite{schwarz_estimating_1978} is defined as 
\begin{equation}\label{eq:BIC}
    \operatorname{BIC} (\gamma) \coloneqq \kappa (\gamma) \ln n - 2 \ln \hat{\mathcal{L}} (\gamma).
\end{equation}
It is a widely-used model selection criterion, making a tradeoff between model complexity and goodness-of-fit, to prevent from overfitting given the number of observed samples. The formula results from an asymptotic approximation of the Bayesian model evidence. Given a dataset, one can compare the BIC of a PSA model with repeated eigenvalues to the BIC of a PSA model with distinct eigenvalues. The model with the lowest BIC is selected over the other one.

As discussed previously, two adjacent sample eigenvalues with a relatively small gap may be prone to isotropic PC variability. 
To identify such situations where the curse of isotropy may arise, we compare a \textit{full} covariance model $\gamma = (1, \dots, 1)$ with an \textit{equalized} covariance model $~{\gamma' = (1, \dots, 1, 2, 1, \dots, 1)}$ where eigenvalues $j$ and $j+1$ are assumed equal.
Denoting {$\delta(\ell_j, \ell_{j+1})$} $\coloneqq \frac{\ell_{j} - \ell_{j+1}}{\ell_j}$ the \textit{relative eigengap} between the two sample eigenvalues, we show in Theorem~\ref{appthm:releigengap} that

{
\begin{equation}\label{eq:releigengap_threshold}
    \operatorname{BIC}(\gamma') < \operatorname{BIC}(\gamma) \iff \delta(\ell_j, \ell_{j+1}) < \delta^{\mathrm{BIC}}(n),
\end{equation}
with $\delta^{\mathrm{BIC}}(n) = 2 (1 - n^{\frac2n} + n^{\frac1n}\sqrt{n^{\frac2n} - 1})$.
}

This condition---independent of $p$---is illustrated in Fig~\ref{fig:releigengap_curves} {(dark blue)}.
\begin{figure}
    \centering
    \includegraphics[width=\linewidth]{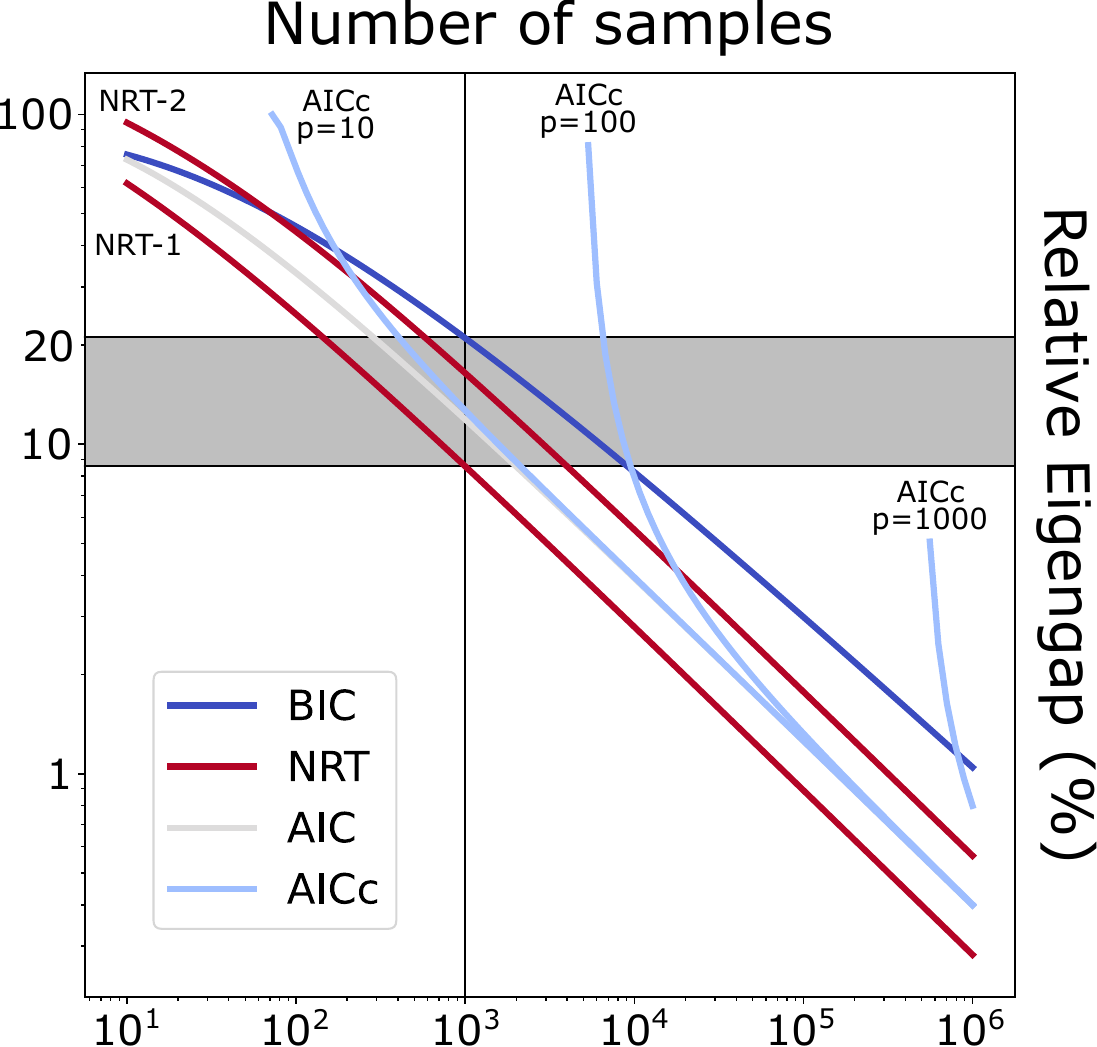}
    \caption{{Plot of the relative eigengap thresholds---under which two adjacent sample eigenvalues should be assumed equal---as a function of $n$, for different model selection criteria: the Bayesian information criterion (BIC)~\cite{schwarz_estimating_1978}, the Akaike information criterion (AIC)~\cite{akaike_new_1974}, its small-sample version (AICc)~\cite{hurvich_regression_1989}, and North's rule-of-thumbs (NRT)~\cite{north_sampling_1982}, which are all thoroughly worked out in Section~\ref{appsec:MS}. 
    For $n=1000$, all the methods have a relative eigengap threshold roughly between $10\%$ and $20\%$, which substantiates the importance of the curse of isotropy, whatever the chosen methodology.}}
	\label{fig:releigengap_curves}
\end{figure}
We notably deduce by substitution that for $n = 1000$ samples, all the adjacent sample eigenvalues with a relative eigengap lower than $\delta = 21\%$ should be assumed equal. In other words, given two sample eigenvalues of respective magnitude $1$ and $0.8$, one needs \textit{at least} $1000$ samples to overcome the curse of isotropy. \textit{This is rarely the case in practice.} To illustrate this, we test the condition~\eqref{eq:releigengap_threshold} on many classical datasets from the UCI Machine Learning Repository (cf. Section~\ref{appsec:data}), with $n/p$ ratios ranging from $10$ to $10^4$.
For each dataset, we report the pairs of adjacent eigenvalues that are below the relative eigengap threshold in Fig~\ref{fig:releigengap_UCI}.
\begin{figure}
\centering
\includegraphics[width=\linewidth]{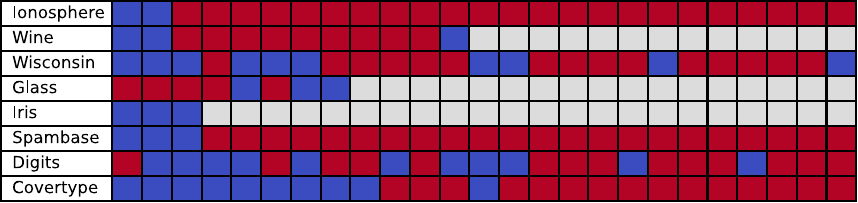}
\caption{Practical identification of the curse of isotropy on several classical datasets from the UCI Machine Learning Repository. A red case in column $j$ indicates that eigenvalues $j$ and $j+1$ have a relative eigengap below the threshold~\eqref{eq:releigengap_threshold} and should be equalized. Blue is above and gray is undefined (we only plot the $25$ leading eigenvalue pairs). We can clearly see that the curse of isotropy is not a negligible phenomenon in practice.}
\label{fig:releigengap_UCI}
\end{figure}
The outcomes are striking: all datasets but one have some eigenvalue pairs below the threshold. This does not only concern the smallest eigenvalues---which are usually tossed away because considered as noise---but also the {largest} ones---which are usually interpreted by applied scientists.
This shows that the curse of isotropy is not a negligible phenomenon at all and that particular care should be taken before interpreting the principal components.
Note that~\eqref{eq:releigengap_threshold} involves the \textit{relative} eigengap between adjacent eigenvalues and not the \textit{absolute} one, meaning that an exponentially-decreasing sample eigenvalue profile can actually highly suffer from the curse of isotropy. In other words, PSA models are not just suited to piecewise-constant-like sample covariance profiles.

{The power of the relative eigengap---seen as a test statistic to identify the curse of isotropy---is evaluated in Section~\ref{appsec:evaluation}. 
The condition~\eqref{eq:releigengap_threshold} tends to equalize more eigenvalues than necessary when the population relative eigengap (\textit{effect size}) and number of samples (\textit{sample size}) are small. But interestingly, this (too parsimonious) model misspecification tends to not only reduce the \textit{variance} of the underlying estimator, but also its \textit{bias}. The interest of PSA models therefore goes beyond the assumption that the true covariance matrix has multiple eigenvalues. 
}

{To go beyond the BIC, which} is known for its tendency to select underparameterized models~\cite{bishop_pattern_2006}, we also investigate in Section~\ref{appsec:MS} the eigenvalue-equalization guideline under other model selection criteria {such as} the Akaike information criterion (AIC)~\cite{akaike_new_1974} and {sampling error-based approaches} {(North's rule-of-thumbs, NRT)}~\cite{north_sampling_1982}. We get relative eigengaps around $10-20\%$ for $n=1000$, and experimental results substantiating the curse of isotropy's importance.

\subsection{Stratification and efficient model selection}
We now explicit the stratified structure of PSA models and show how it enables to design efficient model selection strategies to choose which groups of eigenvalues to equalize. More details are given in Section~\ref{appsec:MS}.

The space of symmetric matrices can be stratified according to the sequence of eigenvalue multiplicities \cite{arnold_modes_1972,groisser_geometric_2017,breiding_geometry_2018}. This implies that the PSA models in dimension $p$ form a stratified exponential family~\cite{geiger_stratified_2001} of cardinal $2^{p-1}$, partially-ordered~\cite{taeb_model_2024} by the stratum-inclusion relation.
We illustrate the family of PSA models in Fig~\ref{fig:hasse_complexity}.
\begin{figure}
\centering
\includegraphics[width=\linewidth]{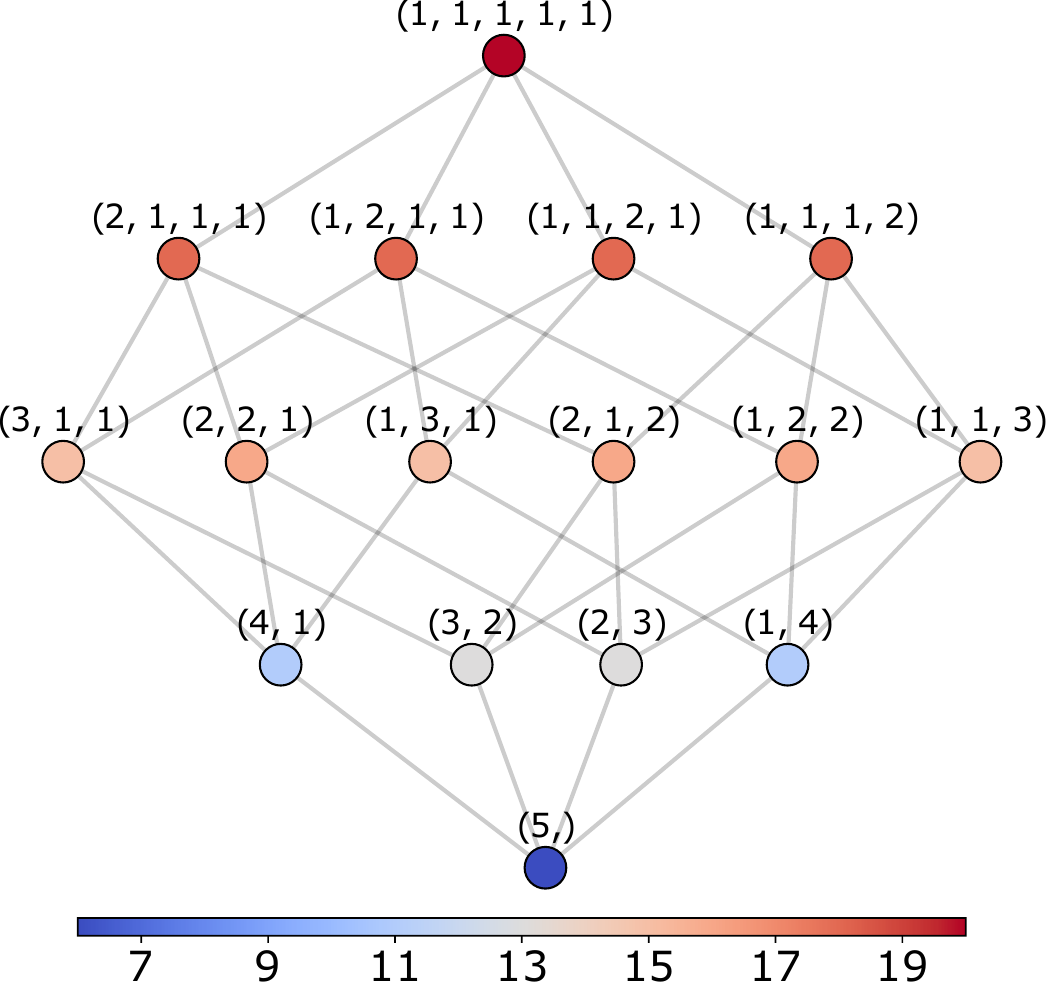}
\caption{Hasse diagram of $5$-dimensional PSA models. Each node represents a model. The associated label and color represent respectively the model type and its number of free parameters. The family contains $16$ models: the isotropic Gaussian is the bottom node, the full covariance model is the top node, the five PPCA models are on the right side and the four IPPCA models are located on the second level.}
\label{fig:hasse_complexity}
\end{figure}

In order to prevent from greedily exploring the whole family for model selection, we propose a simple yet efficient model selection technique based on the stratified structure of this family.
The \textit{hierarchical clustering strategy} consists in performing a hierarchical clustering of the sample eigenvalues, based on chosen \textit{pairwise distance} (e.g. the relative eigengap {$\delta(\ell_j, \ell_{j+1}) = (\ell_{j} - \ell_{j+1}) / \ell_j$}) and \textit{cluster-linkage criterion} (e.g. single-linkage {$\Delta(\Lambda_1, \Lambda_2) = \min_{\ell_1, \ell_2 \in \Lambda_1 \times \Lambda_2} \delta(\ell_1, \ell_2)$}). This strategy{, summarized in Algorithm~\ref{alg:hierarchical}, } yields a hierarchical subfamily of $p$ models with decreasing complexity, from which we can more efficiently select the model minimizing the BIC.
\begin{algorithm}[b]
   \caption{{Hierarchical clustering strategy for PSA}}
   \label{alg:hierarchical}
\begin{algorithmic}
   \State {\bfseries Input:} $\ell_1 \geq \dots \geq \ell_p, \Delta$ \Comment{sample eigenvalues and distance}
   \State ${\gamma}^{1} \gets \lrp{1, \dots, 1}, \quad {{\Lambda}}^{1} \gets \lrp{\lrs{\ell_1}, \dots, \lrs{\ell_p}}$ \Comment{full cov. init.}
   \For{$t = 1 \dots p-1$}
    \State $\Delta^{t} \gets (\Delta({\Lambda}^{t}_1, {\Lambda}^{t}_{2}), \dots, \Delta({\Lambda}^{t}_{p-t}, {\Lambda}^{t}_{p-t+1}))$ 
    \State $k^{t} \gets \argmin \Delta^{t}$ 
    \State ${{\Lambda}}^{t+1} \gets \lrp{{\Lambda}^t_1, \dots, {\Lambda}^t_{{k^t}-1}, {\Lambda}^t_{k^t} \cup {\Lambda}^t_{{k^t}+1}, {\Lambda}^t_{{k^t}+2}, \dots, {\Lambda}^t_d}$ 
    \State ${\gamma}^{t+1} \gets \lrp{{\gamma}^t_1, \dots, {\gamma}^t_{{k^t}-1}, {\gamma}^t_{k^t} + {\gamma}^t_{{k^t}+1}, {\gamma}^t_{{k^t}+2}, \dots, {\gamma}^t_d}$ 
\EndFor
    \State $\hat\gamma \gets 
    \argmin_{\gamma\in\lrp{{\gamma}^{t}}_{t=1}^p} \operatorname{BIC}(\gamma)$
    \State {\bfseries Output:} $\hat\gamma$ \Comment{selected eigenvalues multiplicities}
\end{algorithmic}
\end{algorithm}
We prove the \textit{asymptotic consistency} of the hierarchical clustering strategy in Proposition~\ref{appprop:hierarchical_heuristic}, as well as introduce other strategies. 
{We evaluate the model selection accuracy of the hierarchical clustering strategy in Section~\ref{appsec:evaluation}. We get a sharp transition between the ``small $n$ small $\delta$'' and the ``large $n$ large $\delta$'' regimes, where the accuracy goes from $0$ to $100\%$.}

\section{From principal components to principal subspaces}

To summarize the previous section, parsimonious considerations invite us to block-average eigenvalues whose relative gaps are close---given the number of observed samples. The associated PSA model is now parameterized with \textit{eigenspaces} instead of individual \textit{eigenvectors} and we are therefore facing the curse of isotropy.
In this section, we propose {a few ideas} to improve data interpretability {in this context,} by transitioning from \textit{principal components} to \textit{principal subspaces}.

A first idea, rather \textit{quantitative}, is to look for rotations of principal components inside {the principal subspace they span} in order to increase {an interpretability-related criterion $f$}{:
\begin{equation}
    Q'_k = Q_k \,\argmax_{R_k\in\O(\gamma_k)} f(Q_k R_k).
\end{equation}
}
Indeed, as explained previously, the curse of isotropy might cause principal components to be rotated versions of more interpretable {latent variables}.
\textit{Varimax} rotation~\cite{kaiser_varimax_1958,rohe_vintage_2023} enables for instance to get rotated components with sparse loadings.
Many other criteria {$f$} can be considered depending on the data type {and the objective. For instance, if the data are images, then one can use local entropy, structured sparsity~\cite{jenatton_structured_2010} or total variation criteria to get sharp components}.
{The orthogonal transformations $R_k\in\O(\gamma_k)$ can also be replaced with more general linear transformations $A_k\in \R^{\gamma_k\times \gamma_k}$ if one does not need orthogonal components. An interesting idea in that sense is to perform an independent component analysis (ICA)~\cite{hyvarinen_independent_2000} inside each principal subspace. Indeed, under the PSA model, the projected distribution is isotropic Gaussian, but under another model (e.g. Laplacian), it might have privileged directions.  This ``PSA+ICA'' idea interestingly provides independent components with a hierarchy (related to the explained variance) while independent components are usually unordered.
}

A second idea, rather \textit{qualitative} {and exploratory}, is to generate samples from the multidimensional principal subspaces via Eq.~\eqref{eq:PSA_model} (cf. Fig~\ref{fig:PSA}) {and inspect them visually}. Those samples might have common characteristics like {similar} frequencies or {other} invariances~\cite{hyvarinen_emergence_2000}. {Instead of generating Gaussian samples from the principal subspaces}, one can {generate uniform samples on an \textit{inscribed sphere}} to {explore the principal subspaces more exhaustively}. 
{Finally, if one has an intuition about how the variability modes should look like (as it can be the case in climate science~\cite[Section~4.3]{jolliffe_principal_2002} for instance), or if one possesses interpretable co-variables (e.g. the age associated with a patient's image), then one can use these extra features to enhance the interpretation of the principal subspaces.
}

\section{Experiments}
Eventually, PSA is grounded in a generative model with a rich geometry, yet the methodology is very simple and can be summarized in the three following steps:
\textit{eigendecomposition} of the sample covariance matrix, \textit{block-averaging} of the eigenvalues with small relative eigengaps (or more formally, PSA model selection), \textit{interpretation} of the resulting principal subspaces via factor rotation or latent subspace sampling. 
In this section we apply the PSA methodology to several synthetic and real datasets in a variety of fields. The experiments show that principal components associated with relatively-close eigenvalues are generally {fuzzy} due to the curse of isotropy. Therefore, equalizing the problematic eigenvalues and lifting the analysis to principal subspaces dramatically enhances exploratory data analysis.

\subsection{Laplacian eigenfunctions}
In this experiment, we generate a synthetic dataset consisting in linear combinations of Laplacian eigenfunctions (also known as \textit{quasimodes}~\cite{arnold_modes_1972}) with variance being a decreasing function of the Laplacian eigenvalue. This kind of generative model has been extensively used in many different areas, notably climate sciences~\cite{north_sampling_1982} (for modeling atmospheric fields on the earth) and computer vision (for modeling shadows on faces under varying illumination conditions~\cite{basri_lambertian_2003} or low-frequency patches in natural images~\cite{field_relations_1987}). The global idea behind those models is that natural symmetries are present in the shapes under study (face, earth, square domain etc.) and lead to multiple eigenvalues in their Laplacian matrix, and therefore to multiple eigenvalues in the covariance matrix of homogeneous stochastic processes on those shapes.

We generate $n=600$ points on a square grid with 64 pixels on each side. We take a combination of the 	$q=9$ leading eigenmodes with variance scaling like $\exp(-\lambda)$ (where $\lambda$ is the Laplacian eigenvalue) and add an isotropic noise.
We fit a PSA model of type $\gamma' = (1, 2, 1, 2, 2, 1, 4087)$ (corresponding to the expected Laplacian eigenvalue multiplicities on a square domain) and compare it to the associated PPCA model $\gamma = (1, \dots, 1, 4087)$. We get a lower BIC for the PSA model, then perform {ICA} in{side} each eigenspace, and finally uniformly sample from the unit sphere inscribed in the 2D principal subspaces. The results are shown in Fig~\ref{fig:synthetic_exp}.
\begin{figure}
\centering
\includegraphics[width=\linewidth]{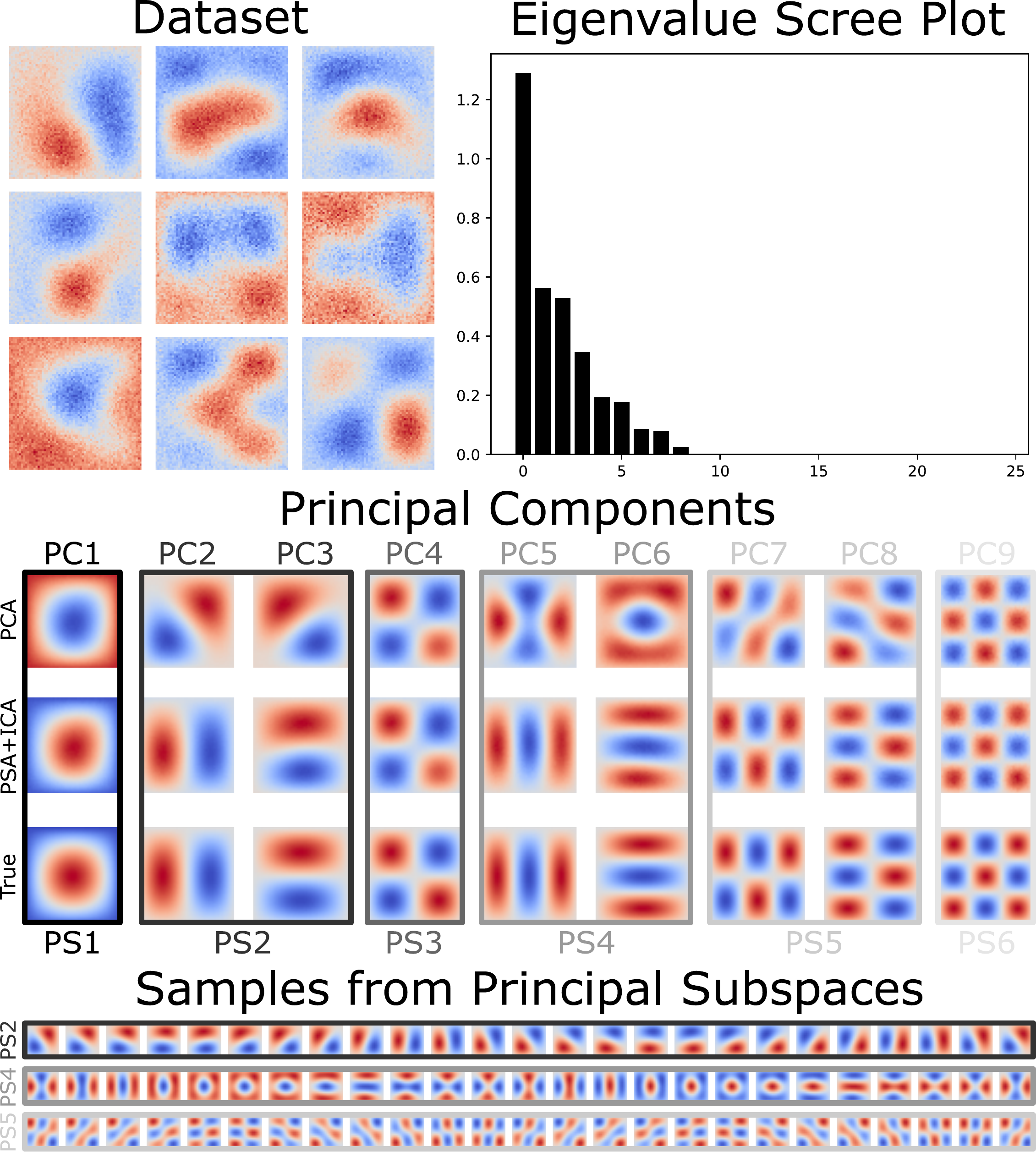}
\caption{PCA vs PSA on the Laplacian eigenfunction dataset. 
Top: Dataset and covariance eigenvalue scree plot.
Middle: Principal components (observed, rotated and true). The observed principal components are linear combinations of the true eigenmodes---i.e. quasimodes---especially $(5,6)$ and $(7, 8)$. 
After {ICA}, one recovers the original eigenmodes.
Bottom: Principal subspaces resulting from a PSA model of type $\gamma' = (1, 2, 1, 2, 2, 1, 4087)$. We sample from those 2D subspaces and obtain equal-frequency quasimodes.
}
\label{fig:synthetic_exp}
\end{figure}
We can see that the principal components are linear combinations of the original eigenmodes, i.e. quasimodes. With {ICA} in{side} the associated subspaces, we better recover the original modes. {Moreover, the principal subspaces are effectively ordered according to their intrinsic frequencies (which can be measured by the number of ``stripes'' in the images) and the equal-frequency quasimodes are gathered in the same subspaces.}

\subsection{Natural image patches}
In this experiment, we consider patches extracted from natural images, as done in many seminal works investigating biological vision via unsupervised machine learning methods~\cite{field_relations_1987,olshausen_emergence_1996, hyvarinen_emergence_2000,mairal_online_2010}.
We consider 10 flower images from the Natural Images database (cf. Section~\ref{appsec:data}) and randomly extract $n=500$ $(8, 8)$-pixel patches from those. After removing the DC component (i.e. the mean value) to each patch, like usually done in such studies~\cite{hyvarinen_emergence_2000} and looking at the sample eigenvalue profile, we decide to fit a PSA model of type $\gamma' = (2, 3, {4, 55})$ and compare it to the associated PPCA model $\gamma = (1, \dots, 1, {55})$. We get a lower BIC with the PSA model. Then, we uniformly sample from the unit {sphere inscribed in the first (2D) principal subspace and randomly (Gaussian) sample from the second (3D) and third (4D) principal subspaces}. We report the results in Fig~\ref{fig:exp_patch}.
\begin{figure}[t]
\centering
\includegraphics[width=\linewidth]{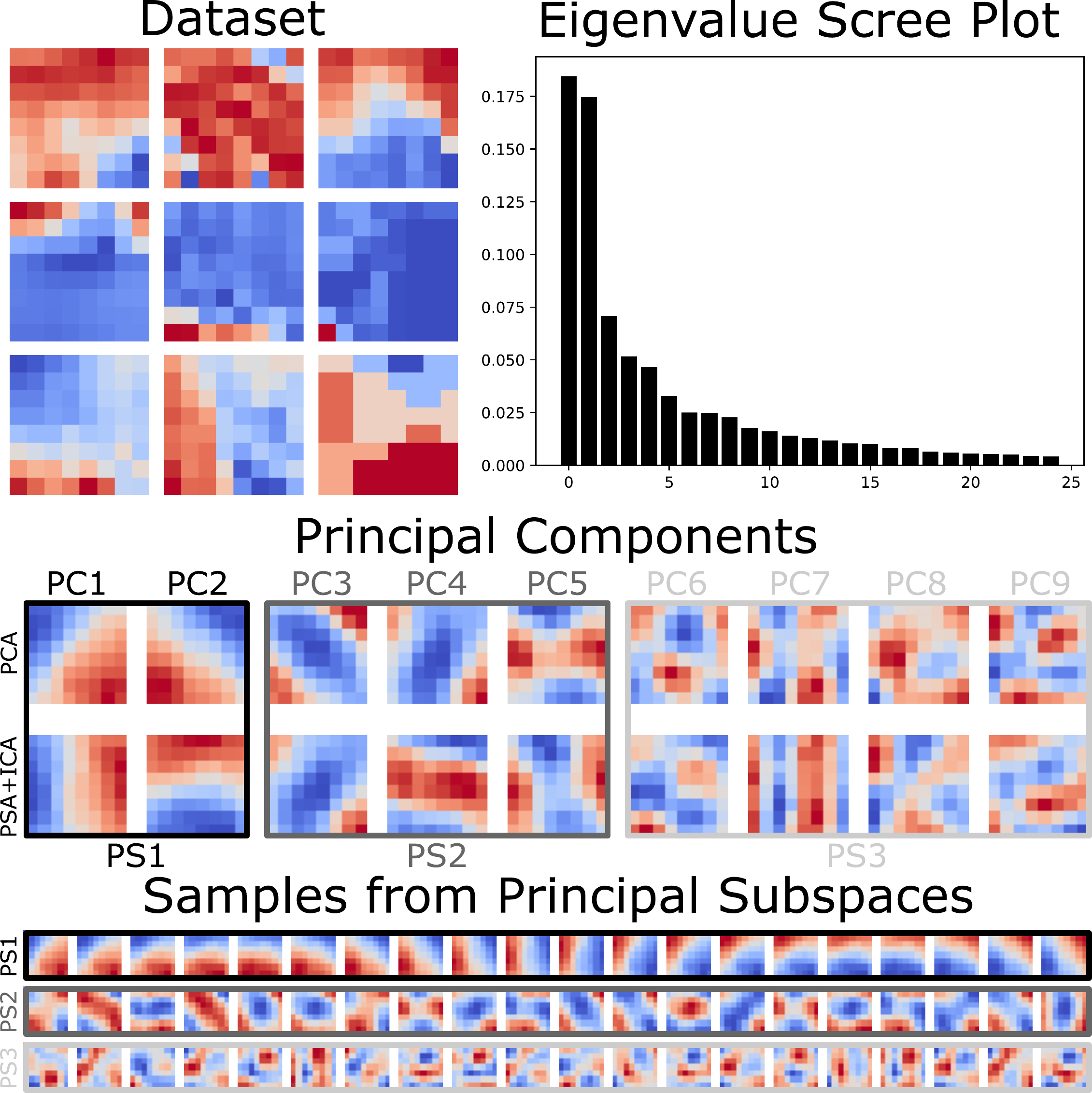}
\caption{PCA vs PSA on the natural image patch dataset. 
Top: Dataset and covariance eigenvalue scree plot.
Middle: Principal components.
Bottom: Principal subspaces resulting from a PSA model of type $\gamma'=(2, 3, 4, 55)$. We sample from those 2D and 3D subspaces and notice the emergence of {decreasing}-frequency feature subspaces with {(limited) invariances}. 
We insist on the fact that without principal subspace analysis, we would not-necessarily have been able to detect those multidimensional patterns.}
\label{fig:exp_patch}
\end{figure}
While principal components do not look particularly interpretable individually, grouping them into principal subspaces with isotropic variability brings out low-frequency subspaces with {(limited) invariances~\cite{hyvarinen_emergence_2000}}.
From a curse-of-isotropy point of view, the observed principal components are random samples from the illustrated {principal subspaces}.

\subsection{Eigenfaces}
In this experiment, we consider the Carnegie Mellon University (CMU) Face Image database (cf. Section~\ref{appsec:data}). It consists in 640 grayscale pictures of people from varying pose, expression, and eye conditions. We extract $n=31$ $(60, 64)$-pixel images of the subject \textit{Choon}. Inspired by the seminal paper~\cite{basri_lambertian_2003} (which establishes a link between face shadowing under varying illumination conditions and spherical harmonics), we fit a PSA model of type $\gamma' = (1, 3, 5, 3831)$ and compare it to the associated PPCA model $\gamma = (1, \dots, 1, 3831)$. We get a lower BIC and perform {ICA} in the second {principal} subspace, which is 3-dimensional. The results are illustrated in Fig~\ref{fig:exp_CMU}.
\begin{figure}
\centering
\includegraphics[width=\linewidth]{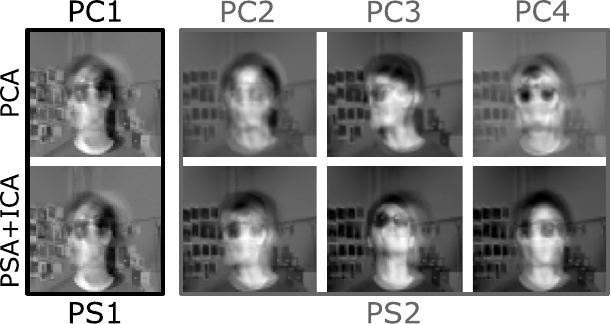}
\caption{PCA vs PSA on the CMU Face Image dataset.
Top: Eigenfaces.
Bottom: Eigenfaces after {ICA} within the second principal subspace (spanned by components 2, 3 and 4).
This experiment shows that the curse of isotropy can yield blurry eigenfaces, corresponding to linear combinations of much more interpretable components that we can recover with PSA{+ICA}.
}
\label{fig:exp_CMU}
\end{figure}
While principal components 2, 3, and 4 are fuzzy and uninterpretable, we can see that they actually correspond to linear combinations of three \textit{much more interpretable} factors{, related to head movements}. 

{Another possible approach that we tried is to generate sample images from the second principal subspace, ordered according to their local entropy, and then select among the samples with the lowest entropy the most visually-insightful ones. We also recovered head movements that are slightly sharper than the ones in Fig~\ref{fig:exp_CMU}.}

\subsection{Structured data}
In this experiment, we consider a structured dataset taken from the UCI ML repository. The \textit{Glass identification} dataset (cf. Section~\ref{appsec:data}) from the USA Forensic Science Service contains chemical features about different types of glasses, with applications in criminology. We fit a PSA model of type $\gamma' = (5, 4)$ and compare it to the associated PPCA model $\gamma = (1, \dots, 1, 4)$. We get a lower BIC and perform varimax rotation in the first feature subspace, of dimension 5. We report the loadings of the sample eigenvectors and compare them to the PSA factors after rotation in Fig~\ref{fig:exp_rotation}.
We see that the PSA factors are more interpretable than the principal components, in the sense that they express as sparser combinations of the original variables. Moreover, contrary to classical factor rotation methods done after PCA (cf. Chapter~11 of \cite{jolliffe_principal_2002}), we here do not lose any hierarchy in the principal components in terms of explained variance, since under the PSA model of type $\gamma' = (5, 4)$, the five components have equal variance.
\begin{figure}
\centering
\includegraphics[width=\linewidth]{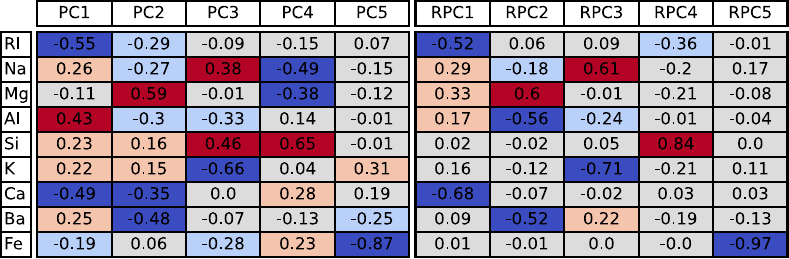}
\caption{PCA vs PSA on the Glass identification dataset. We fit a PSA model of type $\gamma' = (5, 4)$ and perform varimax rotation in the first principal subspace. We can see that the rotated PCs (right) are much sparser in the original variables than the PCs (left), while having the same estimated variance (under the PSA model). 
The colors are stratified similarly as in Section~4.1 of \cite{jolliffe_principal_2002} to help interpretability. For each eigenvector, the cases in dark red and dark blue correspond to coefficients whose absolute value is greater than half of the maximal absolute coefficient, the ones with light red and light blue to coefficients whose absolute value is between one quarter and one half, and the ones in gray are below, considered as negligible.}
\label{fig:exp_rotation}
\end{figure}

\section{Related works}\label{sec:related}
In the climate research community, a celebrated work often cited as \textit{North's rule-of-thumb}~\cite{north_sampling_1982} warns scientists against close eigenvalues in the Karhunen-Loève expansion of a meteorological field. Indeed, the associated principal components---referred to as \textit{empirical orthogonal functions} (EOF)---suffer from large sampling errors, which is very problematic due to the key role EOF's play in this field for exploratory data analysis. The authors provide a perturbation-theoretical rule-of-thumb to decide which eigenvalues form \textit{degenerate multiplets}. The rule as stated in the paper is quite vague, however we are able in Section~\ref{appsec:MS} to reformulate its practical software implementation as a relative eigengap threshold and to compare it to our criterion~\eqref{eq:releigengap_threshold}. We show that this threshold is much lower than ours (e.g. $8.6\%$ instead of $21\%$ for $1000$ samples), therefore our result has a much larger impact on the practical methodology of PCA.

More broadly, several works have mentioned close-eigenvalues in PCA or in general symmetric matrices. 
A paper from Jolliffe~\cite{jolliffe_rotation_1989} shows the advantages of factor rotation inside subspaces spanned by principal components with close eigenvalues for {tabular} data. 
{Permuting eigenvectors with similar eigenvalues is commonplace in spectral shape analysis~\cite[Section~2.3]{lombaert_focusr_2013}.} 
Eigenvalue equality has also been studied formally in the context of oscillatory systems~\cite{arnold_modes_1972,lazutkin_kam_1993,gershkovich_problem_2004} diffusion tensor imaging~\cite{groisser_geometric_2017}, spectral geometry~\cite{besson_multiplicy_1988}, statistical tests~\cite{anderson_asymptotic_1963,tyler_asymptotic_1981,schwartzman_inference_2008,rabenoro_geometric_2024} etc.

Finally, the use of flags for statistical analysis has been particularly well illustrated with the example of \textit{independent subspace analysis}~\cite{hyvarinen_emergence_2000}, from which the name of our model is drawn. The authors notice the emergence of phase and shift-invariant features by maximizing the independence between the norms of projections of samples into so-called \textit{independent feature subspaces}. The learning algorithm is later recast as an optimization problem on flag manifolds~\cite{nishimori_riemannian_2006}. 
Flags also implicitly arise in general subspace methods under the name \textit{mutually orthogonal subspaces}, like in the mutually-orthogonal class-subspaces of Watanabe and Pakvasa~\cite{watanabe_subspace_1973} and the adaptive-subspace self-organizing maps of Kohonen~\cite{kohonen_emergence_1996}.
More recently, PCA was also reformulated as an optimization problem on flag manifolds~\cite{pennec_barycentric_2018}, raising perspectives for multilevel data analysis on manifolds.

\section{Discussion}
We raised an important issue---the \textit{curse of isotropy}---about the isotropic variability of principal components under Gaussian models with repeated covariance eigenvalues, and showed that these models should often be assumed in practice according to the principle of parsimony. We developed a simple methodology---\textit{principal subspace analysis}---based on generative modeling and flags of subspaces to spot this curse in practice and transition from fuzzy principal components to much-more-interpretable principal subspaces.

Principal subspace analysis paves the way to numerous extensions.
First, one could deal with non-Gaussian data ({elliptical distributions}, Gaussians on manifolds~\cite{pennec_intrinsic_2006}, {Lie group orbits~\cite{ennes_liedetect_2025},} deep generative models etc.). 
In that case, the maximum likelihood estimate might not be explicit and one might require tools from optimization on Riemannian manifolds~\cite{absil_optimization_2009} (flag manifolds~\cite{ye_optimization_2022, zhu_practical_2024,szwagier_nested_2025}, symmetric positive-definite matrices~\cite{groisser_geometric_2017}{, etc.}) {or stratified spaces~\cite{leygonie_gradient_2023,olikier_first-order_2023}}.
Second, one should investigate {alternative approaches for grouping similar eigenvalues.}
{Some ideas---such as penalizing the likelihood with $\ell^1$-penalties on the eigengaps~\cite{szwagier_eigengap_2025}, bootstrap-based eigenvalue-eigenvector stability analysis and Bayesian frameworks~\cite{minka_automatic_2000}---are discussed in Section~\ref{appsec:alternatives}.} 
Third, since PSA models are nothing but parsimonious Gaussian models, one could simply extend them into parsimonious Gaussian \textit{mixture} models~{\cite{tipping_mixtures_1999,bouveyron_high-dimensional_2007,szwagier_parsimonious_2025}}. The eigenvalue-equalization principle could actually be applied to any problem relying on symmetric matrices, like 
variational {inference}~\cite{lambert_limited-memory_2023} or spectral geometry.
Notably, we think that the PSA methodology could extend to spectral graph theory and applications~\cite{ng_spectral_2001,belkin_laplacian_2003,lefevre_perturbation_2023}, where relatively-close Laplacian eigenvalues are common (related to shape symmetries) and might be especially problematic for spectral embedding and spectral matching~\cite{lombaert_focusr_2013}.
Fourth, any method relying on flags of subspaces~\cite{nishimori_riemannian_2006, ma_flag_2021, szwagier_rethinking_2023, mankovich_chordal_2023, mankovich_fun_2024,mankovich_flag_2025} could benefit from our framework to select an adapted flag type, whose choice has been canonical or heuristic up to now.

\begin{acks}[Acknowledgments]
The authors would like to thank the anonymous referees, the Associate Editor and the Editor for their constructive comments and ideas that improved the quality of this paper. 
\end{acks}

\begin{funding}
This work was supported by the ERC grant \#786854 G-Statistics from the European Research Council under the European Union’s Horizon 2020 research and innovation program and by the French government through the 3IA Côte d’Azur Investments ANR-23-IACL-0001 managed by the National Research Agency.
\end{funding}

\begin{supplement}
\stitle{Code}
\sdescription{The code is available on GitHub: \href{https://github.com/tomszwagier/principal-subspace-analysis}{tomszwagier/principal-subspace-analysis}.}
\end{supplement}
\begin{supplement}
\stitle{Appendix}
\sdescription{
The following self-contained appendix details the principal subspace analysis methodology. It includes proofs, data information and additional theoretical and practical results highlighting the importance of the curse of isotropy.}
\end{supplement}


\bibliographystyle{imsart-number} 
\bibliography{bibliography}       

\onecolumn
\appendix

\section{Reminders on probabilistic principal component analysis}
Principal component analysis (PCA) is a ubiquitous tool in statistics, which however lacks a probabilistic formulation.
Such a framework can indeed be useful in a variety of contexts like decision-making, generative modeling, missing data and model selection.
The Probabilistic PCA model of~\cite{tipping_probabilistic_1999} circumvents this issue, and we describe it in this section.

\subsection{Model}
Let $\lrp{{x}_i}_{i=1}^n$ be a $p$-dimensional dataset and $q \in \lrb{0, p-1}$ a lower dimension. In PPCA, the observed data is assumed to stem from a $q$-dimensional latent variable via a linear-Gaussian model

\begin{equation}\label{appeq:PPCA_model}
    {x} = {W} {z} + {\mu} + {\epsilon},
\end{equation}
with ${z} \sim \N{0, {I}_q}$, ${W} \in \R^{p \times q}$, ${\mu} \in \R^p$, ${\epsilon} \sim \N{0, \sigma^2 {I}_p}$ and $\sigma^2 > 0$. 

Through classical probability theory, one can show that the observed data is modeled as following a multivariate Gaussian distribution
\begin{equation}
    {x} \sim \N{{\mu}, {W} {W}\T + \sigma^2 {I}_p}.
\end{equation}
An analysis of the covariance matrix reveals that the distribution is actually anisotropic on the first $q$ dimensions and isotropic on the remaining $p - q$ ones. Hence there is an implicit constraint on the covariance model of the data, which is that the lowest $p - q$ eigenvalues are assumed to be all equal.

\subsection{Maximum likelihood}
The PPCA model parameters are the shift ${\mu}$, the linear map ${W}$ and the noise factor $\sigma^2$. Let some observed dataset $\lrp{{x}_i}_{i=1}^n$, $\overline {{x}} \coloneqq \frac 1 n \sum_{i=1}^n {x}_i$ its mean and $~{{S} \coloneqq \sum_{j=1}^p \ell_j {v}_j {{v}_j}\T}$ its sample covariance matrix, with its eigenvalues $\ell_1 \geq \dots \geq \ell_p \geq 0$ and associated eigenvectors ${v}_1 \perp \dots \perp {v}_p$. One can explicitly infer the parameters that are the most likely to have generated these data using maximum likelihood estimation.
It is shown in the original PPCA paper that the most likely shift is the empirical mean, the most likely linear map is the composition of a scaling by the $q$ {largest} eigenvalues ${L}_q\coloneqq\diag{\ell_1, \dots, \ell_q}$ (up to the noise) and an orthogonal transformation by the associated $q$ eigenvectors ${V}_q\coloneqq\lrb{{{v}}_1|\dots|{{v}}_q}$, and finally the most likely noise factor is the average of the $p - q$ discarded eigenvalues
\begin{equation}\label{appeq:PPCA_ML}
\hat{{\mu}} = \overline {{x}} , \hspace*{15mm}
\hat{{W}} = {V}_q \lrp{{L}_q - \hat{\sigma}^2 {I}_q}^{\frac 1 2} , \hspace*{15mm}
\hat{\sigma}^2 = \frac 1 {p - q} \sum_{j=q+1}^p \ell_j.
\end{equation}
One can then easily express the maximum log-likelihood
\begin{equation}
    \ln \hat{\mathcal{L}}(q) \coloneqq -\frac n 2 \lrp{p \ln(2\pi) + \sum_{j=1}^q \ln{\ell_j} + (p - q) \ln\lrp{\frac 1 {p - q} \sum_{j=q+1}^p \ell_j} + p}.
\end{equation}

\subsection{Parsimony and model selection}
The previously described PPCA is already a parsimonious statistical model. Indeed, it not only makes the assumption that the observed data follows a multivariate Gaussian distribution, which is the entropy-maximizing distribution at a fixed mean and covariance, but it also reduces the number of covariance parameters by constraining the last $p-q$ eigenvalues to be equal.
The covariance matrix $\Sigma \coloneqq {W} {W}\T + \sigma^2 {I}_p$ is parameterized by ${W} \in \R^{p\times q}$ and $\sigma^2$. It is shown in the original PPCA paper to have $\kappa(q) \coloneqq p q - \frac{q (q-1)}{2} + 1$ free parameters---the removal of $\frac{q (q-1)}{2}$ parameters being due to the rotational-invariance of the latent variable $z\in\R^q$. Although not evident at first sight with this expression of $\kappa$, we have a drop of complexity---with respect to the full covariance model which is of dimension $\frac {p(p+1)}{2}$---due to the equality constraint on the low eigenvalues, and the number of parameters decreases along with $q$.
As discussed in the next section, we can give an insightful geometric interpretation to the number of free parameters in the PPCA model using Stiefel manifolds.

For a given data dimension $p$, a PPCA model is indexed by its latent variable dimension $q \in \lrb{0, p-1}$. The process of model selection then consists in comparing different PPCA models and choosing the one that optimizes a criterion, like the Bayesian information criterion (BIC) or more PPCA-oriented ones like Bayesian PCA~\cite{bishop_bayesian_1998} or Minka's criterion~\cite{minka_automatic_2000}. They often rely on a tradeoff between goodness-of-fit (via maximum likelihood) and complexity (via the number of parameters), weighted by the number of samples.

\subsection{Isotropic PPCA}
Isotropic PPCA (IPPCA)~{\cite{bouveyron_intrinsic_2011}} is an even more constrained covariance model with only two distinct eigenvalues. For $a > b$ and ${U} \in \R^{p \times q}$ such that ${U}\T {U} = {I}_q$, one defines it as
\begin{equation}
    \Sigma \coloneqq \lrp{a - b} {U} {U}\T + b {I}_p.
\end{equation}
Such a parsimonious model is shown to be efficient in high-dimensional classification problems~\cite{bouveyron_high-dimensional_2007}.
The authors derive the maximum likelihood of such a model, which is highly related to the one of PPCA, where this time the $q$ first sample covariance eigenvalues are also averaged to fit the model. They also show that the maximum likelihood criterion alone is surprisingly asymptotically consistent for selecting the true intrinsic dimension under the assumptions of IPPCA.

\section{Principal subspace analysis}\label{appsec:PSA}  

Inspired by the complexity drop induced by the isotropy in the noise space in PPCA, we aim at investigating more general isotropy constraints on the full data space.
In this section, we introduce PSA, a covariance generative model with a general constraint on the sequence of eigenvalue multiplicities. PSA generalizes PPCA and IPPCA and unifies them in a new family of models parameterized by flag manifolds. Flag manifolds are themselves generalizations of Stiefel manifolds and Grassmannians, hence the link between PPCA, IPPCA and PSA that is detailed in this section.

\subsection{Model}
We recall that in combinatorics, a \emph{composition} of an integer $p$ is an ordered sequence of positive integers that sums up to $p$.
It has to be distinguished from
a \emph{partition} of an integer, which doesn't take into account the ordering of the parts. 

Let ${\gamma} \coloneqq (\gamma_1, \gamma_2, \dots, \gamma_d) \in \mathcal{C}(p)$ be a composition of a positive integer $p$.
We define the PSA model of \emph{type} ${\gamma}$ as
\begin{equation}\label{appeq:PSA_model}
{x} = \sum_{k=1}^{d-1} \sigma_k {Q}_k {z}_k + {\mu} + {\epsilon}.
\end{equation}
In this formula, $~{\sigma_1 > \dots > \sigma_{d-1} > 0}$ are decreasing scaling factors,
${Q}_k \in \R^{p \times \gamma_k}$ are mutually-orthogonal $\gamma_k$-frames (i.e. they verify ${{Q}_k}\T {Q}_{k'} = \delta_{k k'} {I}$ in Kronecker notation)
and $~{{z}_k \sim \N{{0}, {I}_{\gamma_k}}}$ are independent latent variables.
${\mu} \in \R^p$, $\sigma^2 > 0$ and $~{{\epsilon} \sim \N{{0}, \sigma^2 {I}_{p}}}$ are the classical shift, variance and isotropic noise present in PPCA.

Similarly as in PPCA, we can compute the population density
\begin{equation}
    {x} \sim \N{{\mu}, \hspace{1mm} \sum_{k=1}^{d-1} {\sigma_k}^2 {Q}_k {{Q}_k}\T + \sigma^2 {I}_p}.
\end{equation}
The expression of the covariance matrix $~{{\Sigma} \coloneqq \sum_k {\sigma_k}^2 {Q}_k {{Q}_k}\T + \sigma^2 {I}_p \in \R^{p\times p}}$ can be simplified by gathering all the orthonormal frames into one orthogonal matrix $~{{Q} \coloneqq \lrb{{Q}_1 | \dots | {Q}_{d-1} | {Q}_d} \in \O(p)}$ where ${Q}_d \in \R^{p \times \gamma_d}$ is an orthogonal completion of the previous frames. 
Writing ${\Lambda} \coloneqq \diag{\lambda_1 {I}_{\gamma_1}, \dots, \lambda_d {I}_{\gamma_{d}}}$, with 
$\lambda_k \coloneqq {\sigma_k}^2 + \sigma^2$ for $k\in \lrb{1, d-1}$ and $\lambda_d \coloneqq \sigma^2$, one gets
\begin{equation}\label{appeq:cov_simpl}
    {\Sigma} = {Q} {\Lambda} {Q}\T.
\end{equation}
Hence, the fitted density of PSA is a multivariate Gaussian with repeated eigenvalues $~{\lambda_1 > \dots > \lambda_d > 0}$ of respective multiplicity $~{\gamma_1, \dots, \gamma_d}$. 
An illustration of the generative model is provided in Fig~\ref{fig:PSA}.
Therefore, PPCA and IPPCA can be seen as PSA models, with respective types $~{{\gamma} = (1, \dots, 1, p - q)}$ and ${\gamma} = (q, p - q)$.
From a geometric point of view, the fitted density is isotropic on the  eigenspaces of ${\Sigma}$, which constitute 
a sequence of mutually-orthogonal subspaces of respective dimension $\gamma_1, \dots, \gamma_d$, whose direct sum generates the data space.
Such a sequence is called a \emph{flag} of linear subspaces of \emph{type} ${\gamma}$~{\cite{ye_optimization_2022}}.
Hence flags are natural objects to geometrically interpret PSA, and so a fortiori PPCA and IPPCA. We detail this point in the next section.

\subsection{Type}
Just like the latent variable dimension $q \in \lrb{0, p-1}$ is a central notion in PPCA, the type ${\gamma} \in \mathcal{C}(p)$ is a central notion in PSA. In this subsection, we introduce the concepts of \emph{refinement} and \emph{${\gamma}$-composition} to make its analysis more convenient.

Let ${\gamma} \coloneqq (\gamma_1, \gamma_2, \dots, \gamma_d) \in \mathcal{C}(p)$. We say that ${\gamma}' \in \mathcal{C}(p)$ is a \emph{refinement} of ${\gamma}$, and note ${\gamma} \preceq {\gamma}'$, if we can write ${\gamma}' \coloneqq ({\gamma}'_1, {\gamma}'_2, \dots, {\gamma}'_d)$, with ${\gamma}'_k \in \mathcal{C}(\gamma_k), \forall k \in \lrb{1, d}$. For instance, one has $(2, 3) \preceq (1, 1, 2, 1)$, while $(2, 3) \npreceq (3, 2)$ and $(3, 2) \npreceq (2, 3)$.

Let ${\gamma} \coloneqq (\gamma_1, \gamma_2, \dots, \gamma_d) \in \mathcal{C}(p)$. 
Then each integer between $1$ and $p$ can be uniquely assigned a \emph{part} of the composition, indexed between $1$ and $d$. 
We define the \emph{$\gamma$-composition function} $\phi_{{\gamma}}\colon \lrb{1, p} \to \lrb{1, d}$ to be this surjective map, such that $\phi_{{\gamma}}(j)$ is the index $k$ of the part the integer $j$ belongs to. 
For instance, one has $~{\phi_{(2, 3)}(1) = \phi_{(2, 3)}(2) = 1}$ and $~{\phi_{(2, 3)}(3) =  \phi_{(2, 3)}(4) = \phi_{(2, 3)}(5) = 2}$.
Then, intuitively and with slight abuse of notation, each object of size $p$ can be partitioned into $d$ sub-objects of respective size $\gamma_k$, for $k \in \lrb{1, d}$. 
We call it the \emph{${\gamma}$-composition} of an object. We give two examples. 
Let ${Q} \in \O(p)$. The {${\gamma}$-composition} of ${Q}$ is the sequence ${Q}^{{\gamma}} \coloneqq \lrp{{Q}_1, \dots, {Q}_d}$ such that $~{{Q}_k \in \R^{p\times\gamma_k}, \forall k \in \lrb{1, d}}$ and $~{{Q} = \lrb{{Q}_1 | \dots | {Q}_d}}$.
Let ${L} \coloneqq \lrp{\ell_1, \dots, \ell_p}$ be a sequence of decreasing eigenvalues. The {${\gamma}$-composition} of ${L}$ is the sequence  ${{L}}^{{\gamma}} \coloneqq \lrp{{{L}}_1, \dots, {{L}}_d}$ such that $~{{{L}}_k \in \R^{\gamma_k}}, ~{\forall k \in \lrb{1, d}}$ and ${L} = \lrb{{{L}}_1 |  \dots | {{L}}_d}$. 
We call \emph{${\gamma}$-averaging} of ${L}$ the sequence $\overline{{{L}}^{{\gamma}}} \coloneqq \lrp{\overline{{{L}}_1},  \dots, \overline{{{L}}_d}}\in \R^d$ of average eigenvalues in ${{L}}^{{\gamma}}$.

\subsection{Maximum likelihood}
Similarly as for PPCA, the log-likelihood of the model can be easily computed
\begin{equation}\label{appeq:PSA_LL}
\ln {{\mathcal{L}}} \lrp{{\mu}, {\Sigma}} = -\frac n 2 \lrp{p \ln(2\pi) + \ln |{\Sigma}| + \tr{{\Sigma}^{-1} {{C}}}},
\end{equation}
with
${{C}} = \frac 1 n \sum_{i=1}^n ({x}_i - {\mu}) {({x}_i - {\mu})}\T$.
We now show that the maximum likelihood estimate for PSA consists in the eigenvalue decomposition of the sample covariance matrix followed by a block-averaging of adjacent eigenvalues such that the imposed type ${\gamma}$ is respected; in other words, a ${\gamma}$-averaging of the eigenvalues.
Before that, let us naturally extend the notion of \emph{type} to symmetric matrices, as the sequence of multiplicities of its ordered-descending eigenvalues.
\begin{theorem}
\label{appthm:PSA}
Let $\lrp{x_i}_{i=1}^n$ be a p-dimensional dataset, $\overline {{x}} \coloneqq \frac 1 n \sum_{i=1}^n x_i$ its mean and $~{{S} \coloneqq \sum_{j=1}^p \ell_j {v}_j {{v}_j}\T}$ its sample covariance matrix, with $~{\ell_1 \geq \dots \geq \ell_p \geq 0}$ its eigenvalues and $\lrb{{v}_1 | \dots | {v}_p} \coloneqq {V} \in \O(p)$ its eigenvectors. 
The maximum likelihood parameters of PSA are
\begin{equation}
    \hat {{\mu}} = \overline {{x}}, \hspace{15mm}
    \hat {{Q}} = {{V}}, \hspace{15mm}
    \lrp{\hat{\lambda}_1, \dots, \hat{\lambda}_d} = \overline{{\lrp{\ell_1, \dots, \ell_p}}^{{\gamma}}}.
\end{equation}
The parameters $\hat {{\mu}}$ and $\hat{\lambda}_1, \dots, \hat{\lambda}_d$ are unique. $\hat {{Q}}$ is not unique but the flag of linear subspaces generated by its ${\gamma}$-composition is ``practically'' unique. More precisely, the flag is unique if and only if the type of $S$ is a refinement of $\gamma$, which is almost sure when $S$ is full-rank---when $S$ is rank-deficient, this is almost sure as long as all the null eigenvalues are gathered in the same subspace.
\end{theorem}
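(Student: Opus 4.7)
The plan is to maximize the Gaussian log-likelihood~\eqref{appeq:PSA_LL} sequentially over $\mu$, then over the orthogonal factor $Q$, then over the variances $(\lambda_1, \dots, \lambda_d)$, and finally to address uniqueness via the sample eigenvalue multiplicities.

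First, I would decompose $C = S + (\bar{x} - \mu)(\bar{x} - \mu)\T$, which turns the $\mu$-dependence of $\tr{\Sigma^{-1} C}$ into a strictly convex quadratic form $(\bar{x} - \mu)\T \Sigma^{-1} (\bar{x} - \mu)$. Since $\Sigma^{-1} \in S_p^{++}$, it is uniquely minimized at $\hat\mu = \bar{x}$ regardless of $\Sigma$, so the mean decouples from the covariance.

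Next, writing $\Sigma = Q \Lambda Q\T$ and noting that $\ln|\Sigma| = \sum_k \gamma_k \ln \lambda_k$ does not depend on $Q$, the remaining task over $Q \in \O(p)$ is to minimize $\tr{\Sigma^{-1} S}$. The eigenvalues of $\Sigma^{-1}$ sorted in decreasing order are $\lambda_d^{-1}, \dots, \lambda_1^{-1}$ with multiplicities $\gamma_d, \dots, \gamma_1$, while those of $S$ are $\ell_1 \geq \dots \geq \ell_p$. I would apply the Von Neumann / Hardy--Littlewood--P\'olya rearrangement trace inequality for symmetric matrices to obtain
\begin{equation*}
\tr{\Sigma^{-1} S} \;\geq\; \sum_{k=1}^d \lambda_k^{-1} \sum_{j \in B_k} \ell_j \;=\; \sum_{k=1}^d \gamma_k \lambda_k^{-1} \overline{L_k},
\end{equation*}
where $B_k$ denotes the $k$-th contiguous block of indices of $\gamma$. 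Equality holds if and only if $\Sigma$ and $S$ are simultaneously diagonalizable with their eigenvalues sorted in matching order, i.e. iff the $\lambda_k$-eigenspace of $\Sigma$ equals $\mathrm{span}\{v_j : j \in B_k\}$, which is exactly achieved by $\hat Q = V$ (up to an arbitrary orthogonal rotation within each block $B_k$).

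Substituting back yields the strictly convex one-dimensional problems $\min_{\lambda_k > 0} \gamma_k (\ln \lambda_k + \lambda_k^{-1} \overline{L_k})$, each with unique solution $\hat\lambda_k = \overline{L_k}$; the strict ordering $\hat\lambda_1 > \dots > \hat\lambda_d$ holds precisely when no sample eigenvalue straddles a block boundary, and plugging in recovers~\eqref{eq:PSA_ML}. The more delicate point---which I expect to be the main obstacle---is the flag uniqueness: the rearrangement equality case forces each $\hat\lambda_k$-eigenspace to be a direct sum of eigenspaces of $S$ whose eigenvalues lie in $B_k$, so an ambiguity in the flag arises exactly when an eigenspace of $S$ spans the boundary between two consecutive blocks, i.e.\ when the type of $S$ fails to refine $\gamma$. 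For the almost-sure statement, I would invoke that the locus of symmetric matrices with a repeated eigenvalue is the zero set of the eigenvalue-discriminant polynomial, a Lebesgue-measure-zero subset of $S_p^{++}$; under an absolutely continuous sampling distribution, $S$ almost surely has distinct eigenvalues, its type is $(1, \dots, 1)$ which refines every $\gamma$, and the flag is almost surely unique.
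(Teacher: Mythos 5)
Your proof is correct and follows the same overall architecture as the paper's (sequential optimization over $\mu$, then $Q$, then $\Lambda$, followed by a measure-zero argument for almost-sure uniqueness), but the central step---optimizing over $Q$---is handled by a genuinely different argument. The paper proceeds locally: it notes that $g(Q)=\tr{Q\Lambda^{-1}Q\T S}$ attains its minimum on the compact manifold $\O(p)$, computes the differential along the tangent space $\Skew_p\hat Q$, deduces the commutation relation $\Sigma^{-1}S=S\Sigma^{-1}$ and hence simultaneous diagonalizability, and then solves a discrete rearrangement problem over permutations $\psi$ to sort the sample eigenvalues into the blocks of $\gamma$. You instead invoke the von Neumann / Hardy--Littlewood--P\'olya trace inequality to get a global lower bound on $\tr{\Sigma^{-1}S}$ directly, with the equality case delivering both the optimizer $\hat Q = V$ and the flag-uniqueness discussion in one stroke. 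Your route is shorter and avoids the existence-plus-stationarity detour, at the cost of importing the (standard but nontrivial) equality case of the trace inequality, which is essentially equivalent to the commutation-plus-rearrangement argument the paper carries out by hand. The remaining steps---$\hat\mu=\overline{x}$ via the decomposition $C=S+(\overline{x}-\mu)(\overline{x}-\mu)\T$ rather than the paper's gradient computation, the strictly convex one-dimensional problems giving $\hat\lambda_k=\overline{L_k}$, and the discriminant-locus argument showing that the type of $S$ almost surely refines $\gamma$---match the paper's in substance.
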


\begin{proof}
Original results about the maximum likelihood estimation of covariance eigenvalues and eigenvectors from multivariate Gaussian distributions with repeated covariance eigenvalues date back from the celebrated paper of~\cite{anderson_asymptotic_1963}.
We provide an independent proof for completeness with a particular emphasis on geometry, flags of linear subspaces, and uniqueness.
We successively find the optimal $\hat {{\mu}} \in \R^p$, $\hat {{Q}} \in \O(p)$ and $\hat{\lambda}_k \in \R$.

The log-likelihood expresses as a function of ${{\mu}} \in  \R^p$ in the following way
\begin{equation}\ln \mathcal{L}({{\mu}}) = -\frac n 2 \tr{{{\Sigma}}^{-1} {{C}}} + \operatorname{constant},
\end{equation}
with ${{C}} = \frac 1 n \sum_{i=1}^n ({{x}}_i - {{\mu}}) ({{x}}_i - {{\mu}})\T$.
The optimal shift $\hat{{{\mu}}}$ is thus
\begin{equation}
\hat{{{\mu}}} = \argmin_{\substack{{{\mu}} \in \R^p}}
\sum_{i=1}^n ({{x}}_i - {{\mu}})\T {{\Sigma}}^{-1} ({{x}}_i - {{\mu}}) \coloneqq f({{\mu}}).
\end{equation}
The gradient of ${{x}} \mapsto ({{x}} - {{\mu}})\T {{\Sigma}}^{-1} ({{x}} - {{\mu}})$ is ${{x}} \mapsto 2 {{\Sigma}}^{-1} ({{x}} - {{\mu}})$.
Hence, setting the gradient of $f$ to $0$ at $\hat {{\mu}}$, one gets
$~{
\sum_{i} 2 {{\Sigma}}^{-1} ({{x}}_i - \hat {{\mu}}) = 0
}$,
whose solution is $\hat {{\mu}} = \bar {{x}}$.
Hence $\hat {{C}}$ is actually the sample covariance matrix of the dataset, which will be denoted ${{S}}$ (as in the theorem statement) from now on.

The log-likelihood expresses as a function of ${{Q}}$ in the following way
\begin{equation}\ln \mathcal{L}({{Q}}) = -\frac n 2 \lrp{\ln |{{\Sigma}}| + \tr{{{\Sigma}}^{-1} {{S}}}} + \operatorname{constant},
\end{equation}
with ${{\Sigma}} = {{Q}} {{\Lambda}} {{Q}}\T$.
Hence $|{{\Sigma}}|$ is independent of ${{Q}}$ and the optimal orthogonal transformation $\hat{{{Q}}}$ is
\begin{equation}
\hat{{{Q}}} = \argmin_{\substack{{{Q}} \in \O(p)}} \tr{{{\Sigma}}^{-1} {{S}}} = \tr{{{Q}} {{\Lambda}}^{-1} {{Q}}\T {{S}}}  \coloneqq  g({{Q}}).
\end{equation}
As $g$ is a smooth function on $\O(p)$ which is a compact manifold, $\hat{{{Q}}}$ exists and 
\begin{equation}
    dg_{\hat{{{Q}}}}\colon \mathcal{T}_{\hat {{Q}}}(\O(p)) \ni {{\delta}} \mapsto \tr{\lrp{{{\delta}} {{\Lambda}}^{-1} {\hat {{Q}}}\T + {\hat {{Q}}} {{\Lambda}}^{-1} {{\delta}}\T} {{S}}} \in \R    
\end{equation}
vanishes.
It is known that $~{\mathcal{T}_{\hat {{Q}}}(\O(p)) = \Skew_p {\hat {{Q}}}}$, therefore one has for all ${{A}} \in \Skew_p$
\begin{equation}
dg_{\hat{{{Q}}}}({{A}} {\hat {{Q}}}) = 
\tr{\lrp{({{A}} {\hat {{Q}}}) {{\Lambda}}^{-1} {\hat {{Q}}}\T + {\hat {{Q}}} {{\Lambda}}^{-1} ({{A}} {\hat {{Q}}})\T} {{S}}} = \tr{{{A}} ({{\Sigma}}^{-1} {{S}} - {{S}} {{\Sigma}}^{-1})} = 0.
\end{equation}
Therefore ${{\Sigma}}^{-1} {{S}} - {{S}} {{\Sigma}}^{-1} = 0$. Hence, ${{S}}$ and  ${{\Sigma}}^{-1}$ are two symmetric matrices that commute, so they must be simultaneously diagonalizable in an orthonormal basis.
Since the trace is basis-invariant, $g$ simply rewrites as a function of the eigenvalues
\begin{equation}\label{appeq:g_psi}
    g({{Q}}) = \sum_{k=1}^d \lambda_k^{-1} \lrp{\sum_{j \in \phi_{{\gamma}}^{-1} (\lrs{k})} \ell_{\psi(j)}},
\end{equation}
where $\psi \in S_p$ is a permutation and $\phi_{{\gamma}}^{-1} (\lrs{k})$ is the set of indexes in the $k$-th part of the composition ${\gamma}$.
We now need to find the permutation $\hat \psi \in S_p$ that minimizes $g$.
First, since $~{\lambda_1 > \dots > \lambda_d > 0}$ by assumption, then $~{(\lambda_1^{-1}, \dots, \lambda_d^{-1}})$ is an increasing sequence.
Therefore, $(\ell_{\hat{\psi}(\phi_{{\gamma}}^{-1} \lrs{1})}, \dots, \ell_{\hat{\psi}(\phi_{{\gamma}}^{-1} \lrs{d})})$ must be a non-increasing sequence, in that for $k_1 < k_2$, the eigenvalues in the $k_1$-th part of ${\gamma}$ must
be greater than or equal to the eigenvalues in the $k_2$-th part.
Indeed, for $\lambda < \lambda'$, if $\ell < \ell'$, then $~{\lambda \ell' + \lambda' \ell < \lambda \ell + \lambda' \ell'}$.
Second, for such a $\hat{\psi}$ sorting the eigenvalues in non-increasing order in between parts, we can easily check that the inequality between eigenvalues of distinct parts is strict if and only if the type of ${{\Sigma}}$ is a refinement of ${{\gamma}}$. 
If so, the minimizing $\hat{\psi}$ is unique up to permutations within each part of ${\gamma}$. 
Therefore, it is not $\hat {{Q}}$ itself but the sequence of eigenspaces of $\hat {{Q}}$ generated by its ${\gamma}$-composition that is unique, and we have ${(\operatorname{Im}(\hat{{{Q}}}_1), \dots, \operatorname{Im}(\hat{{{Q}}}_d)) = (\operatorname{Im}(V_1), \dots, \operatorname{Im}(V_d))}$. Hence, the accurate space to describe the parameter $\hat {{Q}}$ is actually the space of flags of type $\gamma$.

An important remark is that the uniqueness condition will almost surely be met when $S$ is full-rank.
Indeed, the set of $p \times p$ symmetric matrices with repeated eigenvalues has null Lebesgue measure (it is a consequence of Sard's theorem applied to the discriminant polynomial function (as defined in~\cite{breiding_geometry_2018}).
Therefore, since sample covariance matrices are measurable functions with absolutely continuous (Gaussian) densities with respect to Lebesgue measure, a randomly drawn matrix ${{S}}$ almost surely has distinct eigenvalues. 
Consequently, its type is $\lrp{1, \dots, 1}$, which is a refinement of any possible type in $\mathcal{C}(p)$.
Note that the full-rank assumption avoids having multiple null eigenvalues with nonzero measure.

The log-likelihood expresses as a function of ${{\Lambda}}$ in the following way
\begin{equation}\ln \mathcal{L}({{\Lambda}}) = -\frac n 2 \lrp{\ln |{{\Sigma}}| + \tr{{{\Sigma}}^{-1} {{S}}}} + \operatorname{constant},
\end{equation}
with ${{\Sigma}} = \hat {{Q}} {{\Lambda}} {\hat {{Q}}}\T$. 
First, one has $\ln|{{\Sigma}}| = \sum_{k=1}^d \gamma_k \ln \lambda_k$. 
Second, according to the previous results, one has $\tr{{{\Sigma}}^{-1} {{S}}} = \sum_{k=1}^d \lambda_k^{-1} \lrp{\sum_{j \in \phi_{{\gamma}}^{-1} \lrs{k}} \ell_j}$.
The optimal eigenvalues $\lrp{\hat{\lambda}_1, \dots, \hat{\lambda}_d}$ are thus
\begin{equation}
\lrp{\hat{\lambda}_1, \dots, \hat{\lambda}_d} = \argmin_{\substack{\lambda_1, \dots, \lambda_d \in \R}} \sum_{k=1}^d \gamma_k \ln \lambda_k + \lambda_k^{-1} \lrp{\sum_{j \in \phi_{{\gamma}}^{-1} \lrs{k}} \ell_j}  \coloneqq  h(\lambda_1, \dots, \lambda_d).
\end{equation}
As 
$\frac {\partial h} {\partial \lambda_k} = \frac{\gamma_k}{\lambda_k} - \lambda_k^{-2} \lrp{\sum_{j \in \phi_{{\gamma}}^{-1} \lrs{k}} \ell_j}$, we get that $\hat {\lambda}_k = \frac 1 {\gamma_k} \lrp{\sum_{j \in \phi_{{\gamma}}^{-1} \lrs{k}} \ell_j} = \overline{L_k}$.
\end{proof}
One can then easily express the maximum log-likelihood of PSA
\begin{equation}\label{appeq:PSA_ML}
    \ln \hat{\mathcal{L}}(\gamma) = -\frac n 2 \lrp{p \ln(2\pi) + \sum_{k=1}^d \gamma_k \ln{\overline{L_k}} + p}.
\end{equation}

\subsection{Geometric interpretation with flag manifolds}
As discussed in the previous subsections, the appropriate parameter space for ${Q}$ in PSA is the space of flags of type ${\gamma}$, noted $\operatorname{Flag}({\gamma})$.
The geometry of such a space is well known~\cite{ye_optimization_2022}.
In a few words, each subspace $\mathcal{V}_k$ of dimension $\gamma_k$ can be endowed with an orthonormal basis $Q_k \coloneqq [q_k^1|\dots|q_k^{\gamma_k}]\in \R^{p\times\gamma_k}$. This basis is invariant to rotations within the subspace---i.e. for  $R_k\in\O(\gamma_k)$, $Q'_k \coloneqq Q_k R_k$ is still an orthonormal basis of $\mathcal{V}_k$. Concatenating such orthonormal frames for all the mutually-orthogonal subspaces of a flag creates an orthogonal matrix $Q \coloneqq [Q_1|\dots|Q_d]\in\O(p)$.
Eventually, $\operatorname{Flag}({\gamma})$ is a smooth quotient manifold, consisting in equivalence classes of orthogonal matrices:
\begin{equation}\label{eq:quotient}
    \operatorname{Flag}({\gamma}) \cong \O(p) / \lrp{\O(\gamma_1) \times \dots \times \O(\gamma_d)}.
\end{equation}
This result enables the accurate computation of the number of parameters in PSA.
Before that, let us note that the other parameters are ${\mu} \in \R^p$ and $~{{\Lambda} \in \operatorname{D}({\gamma}) \coloneqq \lrs{\diag{\lambda_1 {I}_{\gamma_1}, \dots, \lambda_d {I}_{\gamma_d}} \in \R^{p \times p} \colon \lambda_1 > \dots > \lambda_d > 0}}$, which can be seen as a convex cone of $\R^d$.
\begin{proposition}
\label{appprop:PSA_nparam}
The number of free parameters in PSA is
\begin{equation}\label{appeq:PSA_kappa}
    \kappa(\gamma) \coloneqq p + d + \frac{p(p-1)}{2} - \sum_{k=1}^{d} \frac {\gamma_k (\gamma_k - 1)} {2}.
\end{equation}
\end{proposition}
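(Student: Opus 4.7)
The plan is to compute the number of free parameters by summing the dimensions of the three disjoint parameter components appearing in the PSA model: the mean $\mu$, the ordered block-diagonal eigenvalue matrix $\Lambda$, and the flag $Q$.

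First I would account for the easy pieces. The mean $\mu \in \R^p$ contributes $p$ parameters. The eigenvalue parameter lies in the open convex cone $\operatorname{D}(\gamma) \subset \R^d$ cut out by the strict inequalities $\lambda_1 > \dots > \lambda_d > 0$; being an open subset of $\R^d$, it has dimension $d$ and therefore contributes $d$ parameters. These two contributions together give the $p + d$ term.

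Next I would compute the dimension of the flag part, which is where all the structure sits. Using the quotient identification $\operatorname{Flag}(\gamma) \cong \O(p) / (\O(\gamma_1) \times \dots \times \O(\gamma_d))$ recalled just before the proposition, the dimension of the flag manifold equals the difference of dimensions of the total space and the fiber. Since $\dim \O(k) = \frac{k(k-1)}{2}$, this yields
\begin{equation*}
\dim \operatorname{Flag}(\gamma) = \frac{p(p-1)}{2} - \sum_{k=1}^{d} \frac{\gamma_k(\gamma_k - 1)}{2}.
\end{equation*}
The geometric content here, which makes the subtraction legitimate, is that the fibers are exactly the stabilizers of a flag under the $\O(p)$-action (rotations within each $\gamma_k$-dimensional block leave the flag invariant), so the quotient is a smooth manifold whose dimension is obtained by this formula.

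Adding the three contributions $p$, $d$, and $\frac{p(p-1)}{2} - \sum_k \frac{\gamma_k(\gamma_k-1)}{2}$ gives exactly $\kappa(\gamma)$ as stated. There is no real obstacle here: the only nontrivial ingredient is the flag manifold dimension, which is immediate from the quotient description already established; the rest is bookkeeping. As a sanity check I would verify the two extreme cases, namely the full-covariance model $\gamma = (1, \dots, 1)$ (where all $\gamma_k(\gamma_k-1)/2$ terms vanish, recovering $p + \frac{p(p+1)}{2}$ free parameters of an unconstrained Gaussian) and the isotropic model $\gamma = (p)$ (yielding $p + 1$ parameters, the mean plus a single variance), both of which match~\eqref{appeq:PSA_kappa}.
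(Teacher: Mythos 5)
Your proposal is correct and follows exactly the argument the paper intends: the $p+d$ terms come from the mean and the open cone of ordered eigenvalues, and the remaining terms are the dimension of $\operatorname{Flag}(\gamma)$ computed from the quotient $\O(p)/(\O(\gamma_1)\times\dots\times\O(\gamma_d))$ established just before the proposition. The sanity checks on the full-covariance and isotropic extremes are a nice addition but the substance is identical to the paper's reasoning.
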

This geometric interpretation sheds light on PPCA, which---we remind---is a special case of PSA with ${\gamma} = \lrp{1, \dots, 1, p-q}$. First, as flags of type $(1, \dots, 1, p-q)$ belong to Stiefel manifolds (up to changes of signs), we can naturally parameterize PPCA models with those spaces, which is already commonly done in the literature~\cite{minka_automatic_2000}. Second, we can now see PPCA as removing $~{(p - q - 1) + \frac{(p-q)(p-q-1)}{2}}$ parameters with respect to the full covariance model by imposing an isotropy constraint on the noise space. PSA then goes beyond the noise space and results in even more parsimonious models.

We can extend this analysis to the IPPCA model, which---we remind---is a special case of PSA with ${\gamma} = \lrp{q, p-q}$. Hence we can parameterize it with flags of type $\lrp{q, p-q}$, which belong to Grassmannians.
With that in mind, we notice that our formula~\eqref{appeq:PSA_kappa} differs from the one given in~\cite{bouveyron_intrinsic_2011}. We think that this paper overestimates the number of free parameters by implicitly assuming eigenvectors living on Stiefel manifolds like in PPCA, whereas the isotropy in the signal space yields an additional rotational invariance which makes them actually live on Grassmannians. Therefore IPPCA is even more parsimonious than originally considered.

\section{Model selection}\label{appsec:MS}
As discussed previously, sample covariance matrices almost surely have distinct eigenvalues. This makes the full covariance model the most likely to have generated some observed data.
However, it does not mean that the true parameters---that are the eigenvectors and the eigenvalues---can be individually precisely inferred, especially in the small-data regime.
Hence, one can wonder if a covariance model with repeated eigenvalues and multidimensional eigenspaces would not be more robust.
The results of the previous section enable us to provide a possible answer, through PSA model selection. 
First, we study the inference of two adjacent eigenvalues and their associated eigenvectors. We show that when the relative eigengap is small and the number of samples is limited, one should prefer a PSA model with repeated eigenvalues---i.e. block-average the eigenvalues and gather the associated eigenvectors in a multidimensional eigenspace.
Second, to extend this result to more than two eigenvalues, we develop a general model selection framework based on the stratified structure of PSA models.

\subsection{Bayesian information criterion}
The Bayesian information criterion (BIC) is defined as 
\begin{equation}\label{appeq:BIC}
    \operatorname{BIC}(\gamma) = \kappa(\gamma) \ln n - 2 \ln \hat{\mathcal{L}}(\gamma),
\end{equation}
where $\kappa$ is the number of free parameters~\eqref{appeq:PSA_kappa} and $\ln \hat{\mathcal{L}}$ is the maximum log-likelihood~\eqref{appeq:PSA_ML}.
It is a widely-used model selection criterion, making a tradeoff between model complexity $\kappa$ and goodness-of-fit $\hat{\mathcal{L}}$. The formula results from an asymptotic approximation of the model evidence.
In this section, we use the BIC for PSA model selection. The model with lowest BIC is considered as the best model. In the two-eigenvalue case, we get an explicit criterion based on eigenvalue gaps to decide if we must assume that they are equal, and in the more general case, we propose efficient model comparison strategies. We also investigate other model selection criteria than the BIC for completeness in this section, and get similar conclusions.

\subsection{The two-eigenvalue case}
In order to better understand the dynamics of PSA model selection, we lead the experiment of quantifying the BIC variation induced by the equalization of two adjacent eigenvalues. More precisely and without loss of generality, we compare the BIC of a \emph{full covariance model} ${\gamma} = \lrp{1, \dots, 1}$ to the one of an \emph{equalized covariance model} ${\gamma}' = \lrp{1 \dots 1, 2, 1 \dots 1}$, where the eigenvalue $\lambda_j$ has multiplicity $2$.

\begin{theorem}
\label{appthm:releigengap}
Let $\lrp{x_i}_{i=1}^n$ be a $p$-dimensional dataset with $n$ samples, $\ell_j \geq \ell_{j+1}$ two adjacent sample eigenvalues and $\delta_j = \frac{\ell_{j} - \ell_{j+1}}{\ell_j}$ be their \emph{relative eigengap}. 
If
\begin{equation}\label{appeq:releigengap_BIC}
    \delta_j < 2\lrp{1 - n^{\frac 2 n} + n^{\frac 1 n}\sqrt{n^{\frac 2 n} - 1}},
\end{equation}
then the equalized covariance model has a lower BIC than the full one.
\end{theorem}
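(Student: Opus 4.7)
The plan is to reduce the BIC comparison to a quadratic inequality in the eigenvalue ratio $r := \ell_{j+1}/\ell_j$ and then solve it. First I would compute the complexity difference using formula~\eqref{appeq:PSA_kappa}. For $\gamma=(1,\dots,1)$ we have $d=p$ and $\sum_k \gamma_k(\gamma_k-1)/2 = 0$, while for $\gamma'$ we have $d=p-1$ and the single block of size $2$ contributes $1$. A direct substitution gives $\kappa(\gamma) - \kappa(\gamma') = 2$, which is independent of $p$ and of the index $j$; this is the only place the dimension could have entered, so the final criterion will be $p$-free.

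Next I would compute the log-likelihood gap using~\eqref{appeq:PSA_ML}. The only terms that differ are the ones over indices $j$ and $j+1$: for $\gamma$ they contribute $\ln \ell_j + \ln \ell_{j+1}$, while for $\gamma'$ the two eigenvalues are block-averaged so they contribute $2\ln\frac{\ell_j+\ell_{j+1}}{2}$. Hence
\begin{equation}
-2\ln\hat{\mathcal{L}}(\gamma') + 2\ln\hat{\mathcal{L}}(\gamma) \;=\; n\ln\frac{(\ell_j+\ell_{j+1})^2}{4\ell_j\ell_{j+1}},
\end{equation}
which is nonnegative by AM--GM, as expected for a more constrained model.

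Combining these two pieces, the inequality $\mathrm{BIC}(\gamma') < \mathrm{BIC}(\gamma)$ becomes
\begin{equation}
n\ln\frac{(\ell_j+\ell_{j+1})^2}{4\ell_j\ell_{j+1}} < 2\ln n \;\iff\; \frac{(1+r)^2}{4r} < n^{2/n},
\end{equation}
with $r=\ell_{j+1}/\ell_j\in(0,1]$. I would then rearrange to the quadratic $r^2 - (4n^{2/n}-2)r + 1 < 0$ and solve it. Its two roots $r_\pm$ satisfy $r_-r_+=1$ and $r_\pm = (2n^{2/n}-1) \pm 2n^{1/n}\sqrt{n^{2/n}-1}$, so since $r\le 1$ the admissible region is $r > r_-$. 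Translating back via $\delta_j = 1 - r$ yields
\begin{equation}
\delta_j \;<\; 1 - r_- \;=\; 2\bigl(1 - n^{2/n} + n^{1/n}\sqrt{n^{2/n}-1}\bigr),
\end{equation}
which is exactly~\eqref{appeq:releigengap_BIC}.

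The whole argument is essentially just substitution and the quadratic formula, so there is no real obstacle; the only place needing a bit of care is checking that $r_-\le 1\le r_+$ (so that the inequality on $\delta_j$ is the right-hand cutoff and not a two-sided window) and that the radicand $n^{2/n}-1$ is nonnegative for $n\ge 1$, which follows since $n^{2/n}\ge 1$. I would also briefly note that the criterion depends only on $n$ and on the adjacent pair, not on $p$ nor on the remaining eigenvalues, because the non-equalized eigenvalues and the $p\ln(2\pi)+p$ terms in~\eqref{appeq:PSA_ML} cancel in the BIC difference.
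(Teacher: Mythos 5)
Your proposal is correct and follows essentially the same route as the paper's proof: both reduce the BIC comparison to the two affected eigenvalue terms plus the complexity gap of $2$, and then solve a quadratic (the paper directly in $\delta_j$, you in $r=\ell_{j+1}/\ell_j$, related by $\delta_j = 1-r$). Your explicit check that $r_- \le 1 \le r_+$ (so the condition is one-sided) is a minor point of care that the paper's proof glosses over, but it does not change the argument.
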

\begin{proof}
Since $n$ and $p$ are constant within model selection, the BIC can be rewritten (up to constant terms and factors) as
\begin{equation}\label{appeq:BIC_simpl}
\operatorname{BIC} (\gamma) \coloneqq \lrp{d - \sum_{k=1}^d \frac{\gamma_k (\gamma_k - 1)} {2}} \frac{\ln{n}}{n} + \sum_{k=1}^d \gamma_k \ln{\overline{L_k}}.
\end{equation}
We compare the BIC of the full covariance model ${\gamma} = \lrp{1, \dots, 1}$ to the one of the equalized covariance model ${\gamma}' = \lrp{1, \dots, 1, 2, 1, \dots 1}$ where the $j$-th eigenvalue has been equalized with the $j+1$-th. This boils down to studying the sign of the function $\Delta \BIC = \BIC({{\gamma}}) - \BIC({{\gamma}}')$. One gets  
\begin{align}
\Delta \BIC &= p \frac{\ln n}{n} + \sum_{k=1}^p \ln \ell_k - \lrp{p - 2} \frac{\ln n}{n} - \sum_{k \notin \lrs{j, j+1}} \ln \ell_k - 2 \ln\lrp{\frac{\ell_j + \ell_{j+1}} 2},\\
&= 2\frac{\ln n}{n} + \ln \ell_j + \ln \ell_{j+1} - 2 \ln\lrp{\frac{\ell_j + \ell_{j+1}} 2},\\
&= 2\frac{\ln n}{n} + \ln \ell_j + \ln \lrp{\ell_j \lrp{1 - \delta_j}} - 2 \ln\lrp{\frac{\ell_j \lrp{2 - \delta_j}} 2},\\
&= 2\frac{\ln n}{n} + \ln{\lrp{1 - \delta_j}} - 2 \ln\lrp{1 - \frac{\delta_j} 2},\\
&= 2\frac{\ln n}{n} - \ln\lrp{\frac{\lrp{1 - \frac{\delta_j} 2}^2}{1 - \delta_j}}.
\end{align}
Hence, one has
\begin{equation}
\Delta \BIC = 0 \iff \exp\lrp{2\frac{\ln n}{n}} = \frac{\lrp{1 - \frac{\delta_j} 2}^2}{1 - \delta_j} \iff \frac {\delta_j^2} 4 - \lrp{1 - \exp\lrp{2 \frac{\ln n} n}} \delta_j + 1 - \exp\lrp{2 \frac{\ln n} n} = 0.    
\end{equation}
It is a polynomial equation whose positive solution is unique when $n \geq 1$ and is
\begin{equation}
\delta(n) =  2 - 2 \exp\lrp{2 \frac{\ln n} n} + 2\sqrt{\exp\lrp{4 \frac{\ln n} n} - \exp\lrp{2 \frac{\ln n} n}}.
\end{equation}
\end{proof}

\subsection{Comparison with North's rule-of-thumb}~\label{appsubsec:north}
A rule-of-thumb for determining which sample eigenvalue pairs might lead to large PC sampling error is proposed in~\cite{north_sampling_1982}.
The authors show that the asymptotic sampling error of a population eigenvalue $\lambda$ is $\Delta\lambda \coloneqq \lambda(\frac2n)^{\frac12}$ in the Gaussian setting. North's rule-of-thumb (NRT) states that when one population eigenvalue's sampling error is comparable to or larger than its distance to an adjacent eigenvalue, then the PC's sampling error is comparable to the associated adjacent PC.
Note that this is not an explicit rule (compared to our relative eigengap threshold~\eqref{appeq:releigengap_BIC}) since one has to choose the level of uncertainty, and---most of all---it is based on the \textit{true} eigenvalues (on which the confidence intervals are based) which are unknown.
However, this rule has been applied in many contexts and it is commonly implemented in the following way~\cite{sinkr}. 
For each sample eigenvalue pair $\ell_{j} \geq \ell_{j+1}$, compute the 1 sigma error intervals $I_{j} = [\ell_{j} - \ell_{j}\sqrt{\frac2n}, \ell_{j} + \ell_{j}\sqrt{\frac2n}]$ and $I_{j+1} = [\ell_{j+1} - \ell_{j+1}\sqrt{\frac2n}, \ell_{j+1} + \ell_{j+1}\sqrt{\frac2n}]$. If $I_j\cap I_{j+1} \neq \emptyset$, then the associated principal components suffer from large sampling errors and might be random mixtures of the true eigenvectors. We reformulate it as a relative eigengap threshold.
\begin{proposition}\label{appprop:releigengap_North}
North's rule-of-thumb (as implemented in practice) boils down to the relative eigengap threshold
\begin{equation}\label{appeq:releigengap_North}
    \delta_j \leq \frac{2\sqrt{\frac2n}}{1+\sqrt{\frac2n}}.
\end{equation}
\end{proposition}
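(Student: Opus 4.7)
The plan is to directly translate the ``intervals overlap'' condition into an inequality on $\delta_j$. Since $\ell_j \geq \ell_{j+1}$, the upper endpoint of $I_{j+1}$ sits below the upper endpoint of $I_j$, and the lower endpoint of $I_j$ sits above the lower endpoint of $I_{j+1}$. Hence the two intervals $I_j$ and $I_{j+1}$ have nonempty intersection if and only if the lower endpoint of $I_j$ does not exceed the upper endpoint of $I_{j+1}$, i.e.
\begin{equation*}
    \ell_j\Bigl(1 - \sqrt{\tfrac{2}{n}}\Bigr) \;\leq\; \ell_{j+1}\Bigl(1 + \sqrt{\tfrac{2}{n}}\Bigr).
\end{equation*}

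Next I would divide both sides by $\ell_j > 0$ (which is legitimate because $\ell_j \geq \ell_{j+1} \geq 0$ and the degenerate case $\ell_j = 0$ makes $\delta_j$ undefined) and substitute $\ell_{j+1}/\ell_j = 1 - \delta_j$ from the definition of the relative eigengap. This turns the condition into a linear inequality in $\delta_j$. Expanding the right-hand side $(1-\delta_j)(1+\sqrt{2/n})$ and moving the $\delta_j$ terms to one side yields
\begin{equation*}
    \delta_j \Bigl(1 + \sqrt{\tfrac{2}{n}}\Bigr) \;\leq\; 2\sqrt{\tfrac{2}{n}},
\end{equation*}
from which the claimed threshold follows by division (the factor $1+\sqrt{2/n}$ is strictly positive).

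There is essentially no obstacle: the whole argument is algebraic manipulation of a one-dimensional overlap condition. The only subtle point worth mentioning is the sign convention (which endpoint of which interval dominates), which is handled by the ordering $\ell_j \geq \ell_{j+1}$. Once that is fixed, the equivalence is sharp, so \eqref{appeq:releigengap_North} is both necessary and sufficient for the practical implementation of North's rule-of-thumb to flag the pair $(\ell_j, \ell_{j+1})$.
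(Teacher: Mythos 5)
Your proposal is correct and follows essentially the same route as the paper: reduce the interval-overlap condition to the single inequality $\ell_j(1-\sqrt{2/n}) \leq \ell_{j+1}(1+\sqrt{2/n})$, divide by $\ell_j$, and rearrange in terms of $\delta_j$. The paper's only cosmetic difference is rewriting $1+\ell_{j+1}/\ell_j$ as $2-\delta_j$ instead of expanding $(1-\delta_j)(1+\sqrt{2/n})$, which leads to the same threshold.
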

\begin{proof}
The sampling error interval overlap condition writes as
\begin{align}
    \ell_{j} - \sqrt{\frac2n} \ell_{j} \leq \ell_{j+1} + \sqrt{\frac2n} \ell_{j+1} & \iff \frac{\ell_j - \ell_{j+1}}{\ell_j} \leq \sqrt{\frac2n} \lrp{1 + \frac{\ell_{j+1}}{\ell_j}},\\
    & \iff \frac{\ell_j - \ell_{j+1}}{\ell_j} \leq \sqrt{\frac2n} \lrp{2 - \frac{\ell_j - \ell_{j+1}}{\ell_j}},\\
    & \iff \frac{\ell_j - \ell_{j+1}}{\ell_j} \leq \frac{2 \sqrt{\frac2n}}{1 + \sqrt{\frac2n}}.
\end{align}
\end{proof}
This threshold is reported in Fig~\ref{fig:releigengap_curves}, under the name NRT-1 (for 1 sigma sampling errors). We also report North's rule-of-thumb for 2 sigma sampling errors (NRT-2), yielding a relative eigengap threshold of $\frac{4 \sqrt{\frac2n}}{1 + 2\sqrt{\frac2n}}$.
We see that the relative eigengap NRT-1 is much smaller than ours (e.g. $8.6\%$ instead of $21\%$ for $1000$ samples). Therefore, although warning scientists about close sample eigenvalues in principal component analysis, North's rule-of-thumb largely overlooks the curse of isotropy compared to our method.
To see the practical effect of this lower threshold, we test this condition on the same real datasets as in Fig~\ref{fig:releigengap_UCI}. The results are in Fig~\ref{appfig:releigengaps_real}.
\begin{figure}
    \centering
    \includegraphics[width=\linewidth]{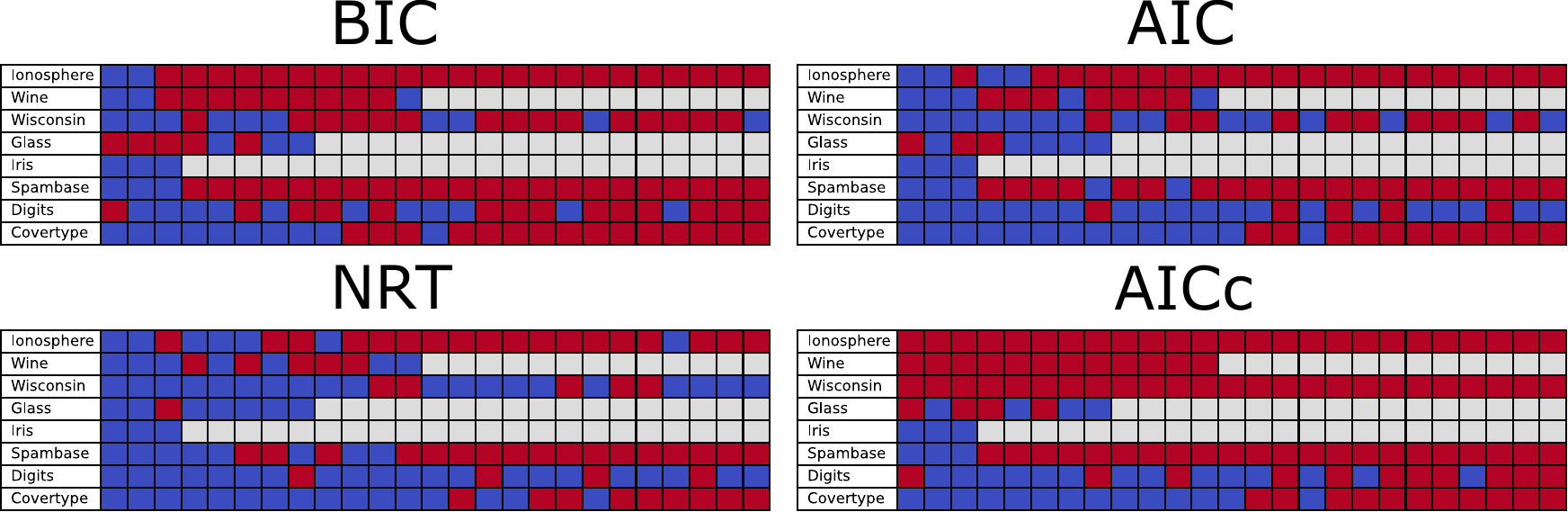}
    \caption{Practical effect of the different relative eigengap conditions using the BIC (Theorem~\ref{appthm:releigengap}), NRT (Proposition~\ref{appprop:releigengap_North}), AIC (Proposition~\ref{appprop:releigengap_AIC}) and AICc (Proposition~\ref{appprop:releigengap_AICc}) for several classical datasets from the UCI Machine Learning repository. A red case in column $j$ indicates that the $(j, j+1)$ eigenvalue pair is below the relative eigengap threshold and should be equalized. Blue indicates above, and gray that the pair does not exist. We only plot the $25$ leading eigenvalue pairs.
    We can see that all the methods suggest to consider principal subspaces of dimension greater than $1$, including North's rule-of-thumb which has the lowest relative eigengap thresholds (cf. Fig~\ref{fig:releigengap_curves}).
    Moreover, the number of eigenvalues to equalize seems to increase from NRT to AIC to BIC, but the low-sample correction of AIC seems to equalize even more eigenvalues than the BIC. This stresses that accounting for low-sample sizes is an important issue in the curse of isotropy. 
    }
	\label{appfig:releigengaps_real}
\end{figure}
We can see that the curse of isotropy remains a nonnegligible phenomenon with North's rule, even though it is less marked than with the BIC.
We think that North's rule (as implemented in practice) underestimates the phenomenon, notably because it uses 1 sigma uncertainties and since it is based on sample eigenvalues instead of true eigenvalues in the implementations. We recall that 1 sigma uncertainties (NRT-1) correspond to $68\%$ error bars while 2 sigma uncertainties (NRT-2) correspond to $95\%$ error bars and yield a relative eigengap threshold of $16\%$, which is much closer to our results with the BIC.
An interesting perspective would be to consider our guideline instead of the less-impactful North's rule in seminal climate science papers which made some conclusions out of possibly degenerate principal components.

\subsection{Comparison with other model selection criteria}
Although being widely used in model selection, the BIC is well-known for its heavy complexity penalization, tending to select over-parsimonious models~\cite{burnham_model_2004}. Another widely-used criterion is the Akaike information criterion~\cite{akaike_new_1974}. It is defined as
\begin{equation}\label{appeq:AIC}
    \operatorname{AIC}(\gamma) = 2 \kappa(\gamma) - 2 \ln \hat{\mathcal{L}}(\gamma)
\end{equation}
where $\kappa$ is the number of free parameters~\eqref{appeq:PSA_kappa} and $\ln \hat{\mathcal{L}}$ is the maximum log-likelihood~\eqref{appeq:PSA_ML}.
Comparing an equalized covariance model to one with distinct eigenvalues like in Theorem~\ref{appthm:releigengap} but this time using the AIC yields another relative eigengap condition.
\begin{proposition}\label{appprop:releigengap_AIC}
Let $\lrp{x_i}_{i=1}^n$ be a $p$-dimensional dataset with $n$ samples, $\ell_j \geq \ell_{j+1}$ two adjacent sample eigenvalues and $\delta_j = \frac{\ell_{j} - \ell_{j+1}}{\ell_j}$ their \emph{relative eigengap}. 
If
\begin{equation}\label{appeq:releigengap_AIC}
    \delta_j < 2\lrp{1 - e^{\frac4n} + e^{\frac2n}\sqrt{e^{\frac4n} - 1}}
\end{equation} 
then the equalized covariance model has a lower AIC than the full one.
\end{proposition}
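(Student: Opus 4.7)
The plan is to transpose the proof of Theorem~\ref{appthm:releigengap} verbatim, swapping the BIC for the AIC. Since the AIC penalty is $2\kappa(\gamma)$ instead of $\kappa(\gamma)\ln n$, the only change in the rescaled criterion analogous to~\eqref{appeq:BIC_simpl} is that the complexity weight $\ln n / n$ becomes $2/n$. All the likelihood manipulations carry through unchanged, because the maximum-likelihood formula~\eqref{appeq:PSA_ML} and the identity $\kappa(\gamma) - \kappa(\gamma') = 2$ between the full model $\gamma = (1,\dots,1)$ and the equalized model $\gamma' = (1,\dots,1,2,1,\dots,1)$ are both independent of the selection criterion.

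Running the same chain of equalities as in the proof of Theorem~\ref{appthm:releigengap} with this single substitution yields
\begin{equation*}
\Delta\mathrm{AIC} = \frac{4}{n} - \ln\lrp{\frac{\lrp{1 - \delta_j/2}^2}{1 - \delta_j}}.
\end{equation*}
Setting $\Delta\mathrm{AIC} = 0$ gives the same quadratic in $\delta_j$ as in the BIC case, but with $\exp(2\ln n / n)$ replaced by $e^{4/n}$, namely $\frac{\delta_j^2}{4} - (1 - e^{4/n})\delta_j + (1 - e^{4/n}) = 0$. Its discriminant equals $e^{4/n}(e^{4/n} - 1) > 0$ for $n \geq 1$, and the product of its roots $4(1 - e^{4/n})$ is negative, so exactly one root is positive; the quadratic formula then delivers $\delta(n) = 2\lrp{1 - e^{4/n} + e^{2/n}\sqrt{e^{4/n} - 1}}$, which is the announced threshold.

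Since $\delta \mapsto (1 - \delta/2)^2/(1 - \delta)$ is strictly increasing on $[0, 1)$, the map $\delta_j \mapsto \Delta\mathrm{AIC}$ is strictly decreasing on this interval, so the strict inequality $\delta_j < \delta(n)$ translates directly into $\Delta\mathrm{AIC} > 0$, i.e., a strictly lower AIC for the equalized model. There is essentially no obstacle here: this is a routine substitution of constants in an already-established calculation, and the only minor care is the root selection, which is unambiguous thanks to $1 - e^{4/n} < 0$ for $n \geq 1$.
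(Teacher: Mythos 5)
Your proposal is correct and follows exactly the paper's own route: the paper likewise reduces the AIC comparison to the BIC computation of Theorem~\ref{appthm:releigengap} by substituting the complexity weight $\frac{\ln n}{n}$ with $\frac{2}{n}$, obtaining $\Delta\mathrm{AIC} = \frac{4}{n} - \ln\bigl(\frac{(1-\delta_j/2)^2}{1-\delta_j}\bigr)$ and the same positive root of the resulting quadratic. Your added remarks on root selection and on the monotonicity of $\delta \mapsto (1-\delta/2)^2/(1-\delta)$ are correct and in fact make explicit a step the paper leaves implicit.
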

\begin{proof}
The proof is essentially the same as the one of Theorem~\ref{appthm:releigengap}.
Since $n$ and $p$ are constant within model selection, the AIC can be rewritten (up to constant terms and factors) as
\begin{equation}\label{appeq:AIC_simpl}
\operatorname{AIC} (\gamma) \coloneqq \lrp{d - \sum_{k=1}^d \frac{\gamma_k (\gamma_k - 1)} {2}} \frac{2}{n} + \sum_{k=1}^d \gamma_k \ln{\overline{L_k}}
\end{equation}
Replacing $\frac{\ln n}{n}$ with $\frac{2}{n}$ in the proof of Theorem~\ref{appthm:releigengap}, we finally get the result that 
\begin{equation}
    \delta(n) =  2 - 2 \exp\lrp{\frac{4} n} + 2\sqrt{\exp\lrp{\frac{8} n} - \exp\lrp{\frac{4} n}}.
\end{equation}
\end{proof}
This threshold is reported in Fig~\ref{fig:releigengap_curves}. We see that this relative eigengap is smaller than ours~\eqref{appeq:releigengap_BIC} (e.g. $12\%$ instead of $21\%$ for $1000$ samples), but {larger} than North's rule~\eqref{appeq:releigengap_North}. This result is interesting since AIC is known for tending to select overparameterized models, especially for small sample sizes~\cite{burnham_model_2004} (cf. next paragraph). 
Despite this, the relative eigengap condition with AIC is more impactful than North's rule.
To see the practical effect of the AIC threshold of~\eqref{appeq:releigengap_AIC}, we also report the relative eigengap condition on real datasets in Fig~\ref{appfig:releigengaps_real}.
We see that many eigenvalue pairs should be assumed equal---slightly less than with BIC. Therefore, even with another model selection criterion, the curse of isotropy is still a nonnegligible phenomenon in real datasets, and the principal subspace analysis methodology enables to leverage it to improve interpretability.

Additionally, we provide a relative eigengap condition for the AICc~\cite{hurvich_regression_1989}, which is a small-sample correction to the AIC. In practice, the AICc is advised over the AIC for $n/\kappa < 40$~\cite{burnham_model_2004}.
The AICc is defined as
\begin{equation}\label{appeq:AICc}
    \operatorname{AICc}(\gamma) = 2 \kappa(\gamma) {\frac{n}{n-\kappa(\gamma)-1}} - 2 \ln \hat{\mathcal{L}}(\gamma)
\end{equation}
where $\kappa$ is the number of free parameters~\eqref{appeq:PSA_kappa} and $\ln \hat{\mathcal{L}}$ is the maximum log-likelihood~\eqref{appeq:PSA_ML}.
One can see that this corrected criterion converges asymptotically to the AIC. 
Comparing an equalized covariance model to one with distinct eigenvalues like in Theorem~\ref{appthm:releigengap} but this time with the AICc yields the following relative eigengap condition.
\begin{proposition}\label{appprop:releigengap_AICc}
Let $\lrp{x_i}_{i=1}^n$ be a $p$-dimensional dataset with $n > \frac{p(p+3)}{2} + 1$ samples, $\ell_j \geq \ell_{j+1}$ two adjacent sample eigenvalues, $\delta_j = \frac{\ell_{j} - \ell_{j+1}}{\ell_j}$ their \emph{relative eigengap} and $\varphi = \frac{4n-4}{\lrp{n - \frac{p(p+3)}{2}}^2 - 1}$.
If
\begin{equation}\label{appeq:releigengap_AICc}
    \delta_j < 2\lrp{1 - e^{\varphi} + e^{\frac{\varphi}{2}}\sqrt{e^{\varphi} - 1}}
\end{equation} 
then the equalized covariance model has a lower AICc than the full one.
\end{proposition}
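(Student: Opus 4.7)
The strategy mimics the proofs of Theorem~\ref{appthm:releigengap} and Proposition~\ref{appprop:releigengap_AIC}: I would compare the \emph{full} covariance model $\gamma = (1,\dots,1)$ with the \emph{equalized} model $\gamma' = (1,\dots,1,2,1,\dots,1)$ (the $j$-th eigenvalue doubled) by studying the sign of $\Delta\mathrm{AICc} := \mathrm{AICc}(\gamma) - \mathrm{AICc}(\gamma')$. The new wrinkle, compared with BIC and AIC, is that the AICc penalty $2\kappa\, n/(n-\kappa-1)$ is nonlinear in $\kappa$, so the complexity term no longer collapses to a constant increment: I will need to track how it depends on the two models' dimensions before the proof reduces to the same polynomial identity as before.

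First, I would evaluate the complexities via~\eqref{appeq:PSA_kappa}, obtaining $\kappa := \kappa(\gamma) = \frac{p(p+3)}{2}$ and $\kappa' := \kappa(\gamma') = \kappa - 2$. The hypothesis $n > \frac{p(p+3)}{2} + 1$ reads $n - \kappa > 1$, which keeps both denominators $n - \kappa - 1$ and $n - \kappa' - 1 = n - \kappa + 1$ strictly positive, so the AICc is well-defined at both models. The penalty difference then factors as
\[
2\kappa \cdot \tfrac{n}{n-\kappa-1} - 2\kappa'\cdot \tfrac{n}{n-\kappa'-1} = \frac{2n\bigl[\kappa(n-\kappa+1) - (\kappa-2)(n-\kappa-1)\bigr]}{(n-\kappa-1)(n-\kappa+1)}.
\]
A direct expansion shows that the quadratic and linear-in-$\kappa$ terms in the numerator cancel to leave $2(n-1)$, while the denominator is $(n-\kappa)^2 - 1$. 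Hence the entire penalty gap collapses to $\frac{4n(n-1)}{(n-\kappa)^2 - 1} = n\varphi$, with $\varphi$ as in the statement.

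Second, the goodness-of-fit part is identical to the computation in Theorem~\ref{appthm:releigengap}: using~\eqref{appeq:PSA_ML}, only the $j$ and $j+1$ terms fail to cancel and one gets $-2\bigl[\ln\hat{\mathcal{L}}(\gamma) - \ln\hat{\mathcal{L}}(\gamma')\bigr] = -n\ln\bigl((1-\delta_j/2)^2/(1-\delta_j)\bigr)$. Dividing $\Delta\mathrm{AICc}$ by $n$ and setting the result to zero yields
\[
\ln\frac{(1-\delta_j/2)^2}{1-\delta_j} = \varphi,
\]
which is structurally identical to the equations solved in the BIC and AIC cases with $2\ln(n)/n$ and $4/n$ replaced by $\varphi$. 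The same quadratic in $\delta_j$ then produces the unique positive root stated in~\eqref{appeq:releigengap_AICc}, and monotonicity of both sides in $\delta_j \in [0,1)$ converts the equality at the threshold into the claimed inequality $\Delta\mathrm{AICc} > 0$ below it.

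The only real obstacle is the algebraic verification that $\kappa(n-\kappa+1) - (\kappa-2)(n-\kappa-1) = 2(n-1)$, i.e. that the $\kappa$-nonlinearity of the AICc condenses cleanly into the single factor $\varphi$; this is where the proof differs substantively from the BIC/AIC analogues. Once this identity is in hand, everything else is a direct transcription of the earlier arguments, with the hypothesis $n > \frac{p(p+3)}{2} + 1$ serving both to keep the AICc finite and to ensure that the relevant root of the quadratic remains the physically meaningful one, with $\delta_j \in [0,1)$.
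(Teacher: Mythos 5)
Your proposal is correct and follows essentially the same route as the paper: compute the AICc penalty gap between the full and equalized models ($\kappa = \tfrac{p(p+3)}{2}$ vs.\ $\kappa-2$), verify that it collapses to $\varphi$ (your identity $\kappa(n-\kappa+1)-(\kappa-2)(n-\kappa-1)=2(n-1)$ checks out and is exactly the simplification the paper performs with explicit $p$'s), and then substitute $\varphi$ for $2\ln(n)/n$ in the quadratic from Theorem~\ref{appthm:releigengap}. The only cosmetic difference is that the paper works with the AICc divided by $n$ from the outset, whereas you carry the factor $n$ through and divide at the end.
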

\begin{proof}
The proof is essentially the same as in Thm~\ref{appthm:releigengap} and Proposition~\ref{appprop:releigengap_AIC}.
Since $n$ and $p$ are constant within model selection, the AICc can be rewritten (up to constant terms and factors) as
\begin{equation}\label{appeq:AICc_simpl}
\operatorname{AICc} (\gamma) \coloneqq \frac{2\kappa(\gamma)}{n - \kappa(\gamma) - 1} + \sum_{k=1}^d \gamma_k \ln{\overline{L_k}}
\end{equation}
We compare the AICc of the full covariance model ${\gamma} = \lrp{1, \dots, 1}$ to the one of the equalized covariance model ${\gamma}' = \lrp{1, \dots, 1, 2, 1, \dots 1}$ where the $j$-th eigenvalue has been equalized with the $j+1$-th. This boils down to studying the sign of the function $\Delta \operatorname{AICc} = \operatorname{AICc}({{\gamma}}) - \operatorname{AICc}({{\gamma}}')$. One gets
\begin{align}
\Delta \operatorname{AICc} &= \frac{p(p+3)}{n - \frac{p(p+3)}{2} - 1} - \frac{p(p+3) - 4}{n - \lrp{\frac{p(p+3)}{2} - 2} - 1} + \ln \ell_j + \ln \ell_{j+1} - 2 \ln\lrp{\frac{\ell_j + \ell_{j+1}} 2}\\
&= \frac{4n-4}{\lrp{n-\frac{p(p+3)}{2}}^2 - 1} + \ln \ell_j + \ln \ell_{j+1} - 2 \ln\lrp{\frac{\ell_j + \ell_{j+1}} 2}
\end{align}
Replacing $2\frac{\ln n}{n}$ with $\varphi = \frac{4n-4}{\lrp{n-\frac{p(p+3)}{2}}^2 - 1}$ in the proof of Theorem~\ref{appthm:releigengap}, we finally get the result that 
\begin{equation}
    \delta(n) =  2 - 2 \exp\lrp{\varphi} + 2\sqrt{\exp\lrp{2\varphi} - \exp\lrp{\varphi}}.    
\end{equation}
\end{proof}
Contrary to the other criteria (Thm~\ref{appthm:releigengap}, Proposition~\ref{appprop:releigengap_North} and Proposition~\ref{appprop:releigengap_AIC}), this threshold depends on the dimension $p$. Therefore, we plot it for several $p$ in Fig~\ref{fig:releigengap_curves}. 
We can see that this relative eigengap converges to the AIC for large $n$, but is {larger} than the one with the BIC~\eqref{appeq:releigengap_BIC} when the number of samples is close to the number of model parameters.
We also test this condition on the same real datasets as in Fig~\ref{fig:releigengap_UCI} and report the results in Fig~\ref{appfig:releigengaps_real}.
We see that many eigenvalue pairs are ill-defined, especially in high-dimensional datasets where those are even more numerous than with the BIC.

\subsection{Efficient model selection}
Given a dimension $p$, PPCA has $p$ models, ranging from the isotropic Gaussian ($q=0$) to the full covariance model ($q=p-1$). We can naturally equip the set of PPCA models with the  \emph{less-than-or-equal} relation $\leq$ on the latent variable dimension $q$, which makes it a totally ordered set. The complexity of the model then increases with $q$.

The characterization of the PSA family structure is a bit more technical, as it requires to study the hierarchy of types, involving the concept of integer composition. Fortunately, this analysis can be lifted to the stratification of symmetric matrices according to the multiplicities of the eigenvalues, which is already well-known~\cite{arnold_modes_1972,groisser_geometric_2017,breiding_geometry_2018}. Therefore, without proof, we can state the following result.

\begin{proposition}\label{appprop:pos}
The family of $p$-dimensional PSA models induces a stratification of the space of symmetric positive-definite (SPD) matrices $S_p^{++}$ according to the type ${\gamma}$.
The refinement relation $\preceq$ makes it a partially ordered set of cardinal $2^{p-1}$.
\end{proposition}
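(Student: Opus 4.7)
The plan is to separate the proposition into three independent pieces: (i) the multiplicity data of each covariance matrix defines a partition of $S_p^{++}$ that is actually a stratification, (ii) refinement is a partial order on $\mathcal{C}(p)$, and (iii) $|\mathcal{C}(p)| = 2^{p-1}$.

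First I would define the \emph{type map} $\tau \colon S_p^{++} \to \mathcal{C}(p)$ by sending each SPD matrix to the composition of eigenvalue multiplicities obtained after sorting its spectrum in strictly decreasing order. This map is unambiguously defined and surjective, so the preimages $S_\gamma := \tau^{-1}(\gamma)$ partition $S_p^{++}$. Each $S_\gamma$ coincides with the set of covariance matrices realizable by a type-$\gamma$ PSA model, and from Thm.~\ref{appthm:PSA} it can be parameterized by $\operatorname{Flag}(\gamma) \times \operatorname{D}(\gamma)$ via the eigen-decomposition. At this point I would invoke the classical eigenvalue-multiplicity stratification theorem of \citet{arnold_modes_1972,groisser_geometric_2017,breiding_geometry_2018} to conclude that each $S_\gamma$ is a smooth embedded submanifold whose frontier consists exactly of the strata $S_{\gamma'}$ with $\gamma' \preceq \gamma$, reflecting the fact that letting adjacent distinct eigenvalues collide sends a finer stratum onto a coarser one.

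Next I would verify the order-theoretic axioms for $\preceq$ directly from the definition. Reflexivity holds because $(\gamma_k) \in \mathcal{C}(\gamma_k)$ trivially. Transitivity follows by concatenating nested block refinements: writing a refinement of a refinement, part by part, produces a refinement of the original composition. For antisymmetry, if $\gamma \preceq \gamma'$ then the number of parts satisfies $d(\gamma) \leq d(\gamma')$; so simultaneously $\gamma \preceq \gamma'$ and $\gamma' \preceq \gamma$ force equal part counts, whence each inner refinement must be trivial and $\gamma = \gamma'$.

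For the cardinality, I would exhibit the classical bijection between $\mathcal{C}(p)$ and the power set of $\{1, \dots, p-1\}$: write $p$ as a row of $p$ units and designate a subset of the $p-1$ internal gaps as ``cuts''; the sizes of the resulting runs read off a unique composition of $p$. This gives $|\mathcal{C}(p)| = 2^{p-1}$ immediately. The main obstacle in the whole argument is really the stratification piece---showing both that $S_\gamma$ is locally a smooth manifold and that the frontier relation coincides exactly with $\preceq$---and this is where I would lean on the cited literature rather than redo the perturbation-theoretic analysis from scratch; the remaining steps are either straightforward combinatorial checks or immediate consequences of Thm.~\ref{appthm:PSA}.
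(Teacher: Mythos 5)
Your plan is correct and matches the paper's treatment: the paper states this proposition explicitly \emph{without proof}, deferring the eigenvalue-multiplicity stratification of symmetric matrices to exactly the references you invoke (Arnold; Groisser et al.; Breiding et al.). The pieces you add yourself --- the poset axioms for $\preceq$ (including the part-count argument for antisymmetry) and the cuts-in-gaps bijection giving $|\mathcal{C}(p)| = 2^{p-1}$ --- are the standard and correct ways to fill in what the paper leaves implicit, and your frontier condition is stated with the right orientation of $\preceq$.
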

Hence the set of PSA models at a given data dimension can be represented using a Hasse diagram, as done in Fig~\ref{fig:hasse_complexity}.
We see that PSA contains PPCA, IPPCA, and many new models. 
PSA therefore has the advantage of possibly providing more adapted models than PPCA and IPPCA, but also the drawback of requiring more comparisons for model selection. 
In high dimension{s} this becomes quickly computationally heavy, therefore we need to define strategies for selecting only a few number of models to compare. The previously derived partial order $\preceq$ on the set of PSA models allows simple efficient strategies for model selection. In the following subsubsections, we detail those strategies and prove additional properties.

\subsubsection{Relative eigengap threshold clustering of eigenvalues}
The \textit{relative eigengap threshold strategy} consists in clustering the eigenvalues whose relative eigengap $\delta_j \coloneqq \frac{\ell_{j} - \ell_{j+1}}{\ell_j}$ is below a given threshold, e.g. the one of Theorem~\ref{appthm:releigengap}. This clustering uniquely determines a PSA type $\gamma$, from which we apply maximum likelihood estimation, i.e. we block-average the corresponding eigenvalue clusters.
This rule is extremely simple but it may select overly parsimonious models, since distant eigenvalues may end up in the same cluster by propagation. Therefore, we provide a more-advanced strategy in the following subsubsection.

\subsubsection{Hierarchical clustering of eigenvalues}
In this strategy, the subset of candidate models is generated by the \emph{hierarchical clustering} of the sample eigenvalues. The general principle of hierarchical clustering is to agglomerate one by one the eigenvalues into clusters, thanks to a so-called \emph{cluster-linkage criterion}, which is a measure of dissimilarity between clusters.
More precisely, here we choose a \textit{continuous} pairwise distance $\delta$ between adjacent eigenvalues (such as the relative eigengap defined in Theorem~\ref{appthm:releigengap}), and a linkage criterion $\Delta$ between eigenvalue clusters, making sense with respect to our model selection problem (such as the single-linkage criterion $\Delta(\Lambda_1, \Lambda_2) = \min_{\ell_1, \ell_2 \in \Lambda_1 \times \Lambda_2} \delta(\ell_1, \ell_2)$ or the centroid-linkage criterion $\Delta(\Lambda_1, \Lambda_2) = \delta(\overline{\Lambda_1}, \overline{\Lambda_2})$). 
The method is detailed in Algorithm~\ref{alg:hierarchical} and illustrated in Fig~\ref{appfig:hierarchical_clustering}.
\begin{figure}
    \centering
    \includegraphics[width=\linewidth]{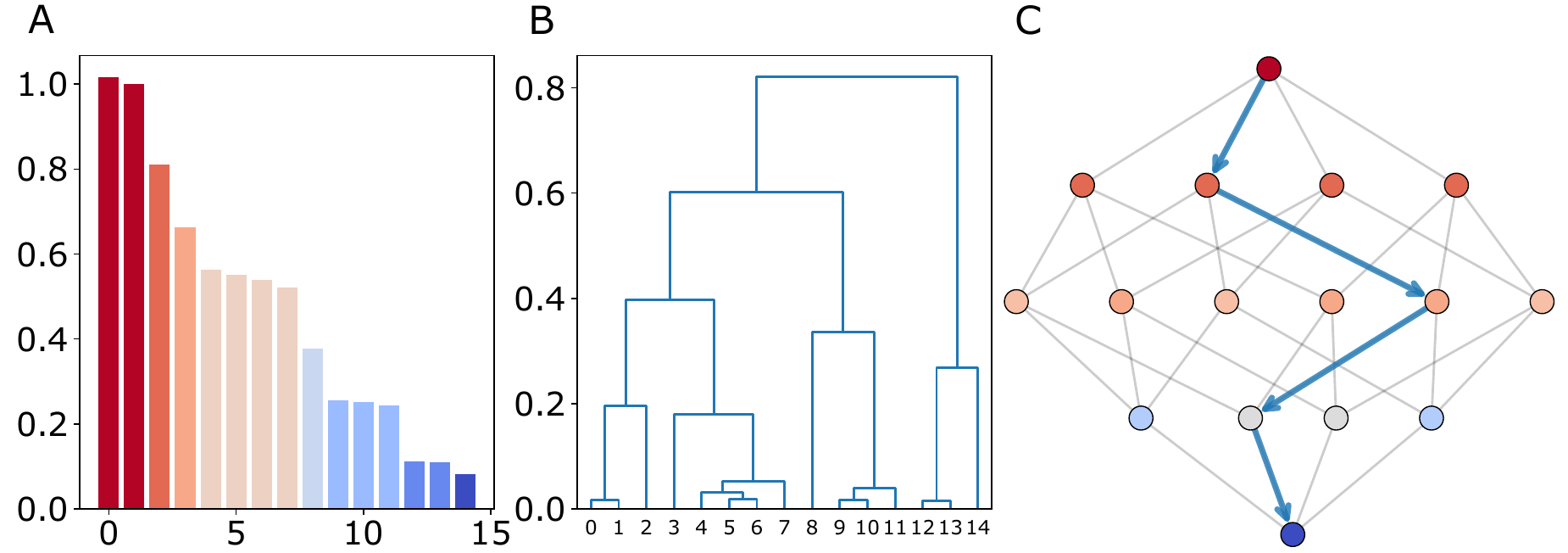}
    \caption{Hierarchical clustering of sample eigenvalues, using the relative eigengap distance for $\delta$ and the centroid-linkage criterion for $\Delta$.
    (A) Sample eigenvalues, whose colors correspond to a given step $t=8$ of the hierarchical clustering, with $~{{\gamma}^{t} = (2, 1, 1, 4, 1, 3, 2, 1)}$.
	(B) Hierarchical clustering dendrogram.
	(C) Conceptual illustration of the hierarchical clustering strategy. This heuristic generates a sequence of PSA models $({\gamma}^t)_{t=1}^p$ of  decreasing complexity, starting from the full covariance model and ending at the isotropic covariance model. This can be visualized as a trajectory in the Hasse diagram of PSA models (cf. Fig~\ref{fig:hasse_complexity}).
	}
	\label{appfig:hierarchical_clustering}
\end{figure}
The hierarchical clustering strategy creates a \emph{trajectory} $({\gamma}^t)_{t=1}^p$ in the Hasse diagram of PSA models (cf. Fig~\ref{fig:hasse_complexity}). The sequence starts from ${\gamma}^1 = \lrp{1, \dots, 1}$, the full covariance model, in which each eigenvalue is in its own cluster. Then, one by one, the eigenvalues that are the closest in terms of distance $\Delta$ are agglomerated, and the inter-cluster distances are updated. The algorithm ends when one reaches the isotropic covariance model, ${\gamma}^p = \lrp{p}$, in which all the eigenvalues are in the same cluster. This corresponds to an \textit{agglomerative} approach in the hierarchical clustering vocabulary, in opposition to a \textit{divisive} approach, that we could similarly develop for this strategy.

The hierarchical clustering strategy hence generates a subfamily of $p$ models that can be then compared within a classical model selection framework. In order to assess the quality of such a strategy, we show the following consistency result.

\begin{proposition}[Asymptotic consistency of the hierarchical clustering strategy]\label{appprop:hierarchical_heuristic}
The hierarchical clustering strategy generates a subfamily of PSA models that almost surely contains the true PSA model for $n$ large enough.
\end{proposition}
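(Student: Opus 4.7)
Let $\gamma^{*} = (\gamma^{*}_1, \ldots, \gamma^{*}_{d^{*}})$ be the true type and $\lambda^{*}_1 > \cdots > \lambda^{*}_{d^{*}} > 0$ the true eigenvalues of the data-generating Gaussian. My plan is to show that the trajectory $(\gamma^{t})_{t=1}^{p}$ produced by Algorithm~\ref{appalg:hierarchical} almost surely reaches $\gamma^{*}$ exactly at step $t = p - d^{*} + 1$, provided $n$ is large enough. The guiding intuition is that, as $n \to \infty$, within-group sample eigenvalues become arbitrarily close while between-group sample eigenvalues stay separated, so the greedy merging rule first exhausts all within-group pairs.

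The first step is a convergence argument. The strong law of large numbers gives $S \xrightarrow{a.s.} \Sigma^{*}$, and by continuity of the spectrum (Weyl's inequality) the ordered sample eigenvalues converge almost surely to the ordered population eigenvalues with multiplicity: $\ell_{j} \xrightarrow{a.s.} \lambda^{*}_{\phi_{\gamma^{*}}(j)}$ for each $j \in \lrb{1, p}$. Continuity of the pairwise distance $\delta$ (the relative eigengap is continuous on $(0,\infty)^{2}$) then yields the key dichotomy: adjacent sample eigenvalues belonging to the same true group satisfy $\delta(\ell_{j}, \ell_{j+1}) \xrightarrow{a.s.} 0$, whereas adjacent pairs straddling two different groups satisfy $\delta(\ell_{j}, \ell_{j+1}) \xrightarrow{a.s.} \delta(\lambda^{*}_{k}, \lambda^{*}_{k+1}) > 0$. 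Since there are only finitely many such pairs, almost surely there exists $N$ such that for every $n \geq N$ the within-group distances are strictly smaller than every between-group distance.

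The second step is an induction on the merging step $t$. I claim that, on the almost-sure event above and for $n \geq N$, every merge performed at step $t \leq p - d^{*}$ is a within-group merge, in the sense that both clusters being fused are contained in the preimage of the same true eigenvalue under $\phi_{\gamma^{*}}$. The base case $t = 1$ is precisely the dichotomy just proved. For the induction step, assume all merges up to step $t$ have been within-group, so each cluster of $\Lambda^{t}$ is a contiguous subset of one true group. Under either the single-linkage or the centroid-linkage criterion, the linkage distance between two clusters lying in the same true group still tends to $0$ (for centroid-linkage this follows because each cluster centroid converges to the common $\lambda^{*}_{k}$; for single-linkage it is even more direct), while the linkage distance between any two clusters drawn from different true groups remains bounded below by a positive constant derived from the between-group relative eigengaps. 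Hence the next $\argmin$ is again within-group. After exactly $p - d^{*}$ such merges, each true group has been fully agglomerated, so $\gamma^{t+1} = \gamma^{*}$, which proves the proposition.

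The main obstacle is the inductive step, specifically verifying that the merging operation preserves both halves of the dichotomy (within-group linkages shrinking to $0$ versus between-group linkages staying bounded away from $0$) for a general admissible linkage criterion. This requires continuity of $\Delta$ with respect to its arguments and a uniform lower bound on between-group linkages that holds throughout the trajectory, not only at initialization. For single- and centroid-linkage the argument is straightforward, but extending the result to arbitrary monotone linkages (e.g.\ complete- or average-linkage) would demand an explicit continuity/monotonicity hypothesis on $\Delta$ that one should state cleanly as an assumption of the proposition.
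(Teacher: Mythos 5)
Your proposal is correct and follows essentially the same route as the paper's proof: almost-sure convergence of the sample eigenvalues to the repeated population eigenvalues, the resulting dichotomy between vanishing within-group distances and between-group distances bounded away from zero, and the conclusion that the greedy agglomeration exhausts all within-group merges before any between-group merge, so the true type appears in the trajectory. The paper cites Tyler (1981) for the eigenvalue convergence and argues the merging order informally via continuity of $\Delta$; your explicit induction on the merge step, and your observation that one needs the between-group lower bound to persist for the clusters formed along the trajectory (immediate for single- and centroid-linkage), makes precise a step the paper leaves implicit.
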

\begin{proof}
Let us assume that the true generative model is stratified with type ${{\gamma}} \in \mathcal{C}(p)$. 
We can then write the population covariance matrix as ${{\Sigma}} = \sum_{k=1}^{d} \lambda_k {{Q}}_k {{{Q}}_k}\T$ with $~{\lambda_1 > \dots > \lambda_{d} > 0}$ and ${{Q}} \coloneqq \lrb{{{Q}}_1|\dots|{{Q}}_{d}} \in \O(p)$. 
Let $n$ be the number of independent samples and ${{S}}_n \coloneqq \sum_{j=1}^{p} \ell_j({{S}}_n) {{v}}_j({{S}}_n) {{{v}}_j({{S}}_n)}\T$ with $\ell_1 \geq \dots \geq \ell_p$ and ${{V}} \coloneqq \lrb{{{v}}_1|\dots|{{v}}_p} \in \O(p)$. 
According to Tyler (1981), Lemma~2.1~(i), one then has almost surely, as $n$ goes to infinity, $\ell_j({{S}}_n) \to \lambda_{\phi_{{{\gamma}}}(j)}$, where $\phi_{{{\gamma}}}$ is the ${{\gamma}}$-composition function.
Hence for $n$ large enough, by continuity of the distance function $\Delta$, the gaps between eigenvalues in the same part of the ${{\gamma}}$-composition will be arbitrarily close to $0$, while the other will be arbitrarily close to the true values $\lrs{\Delta\lrp{\lambda_k, \lambda_{k+1}}, k \in \lrb{1, d-1}}$, which are all positive.
Hence the hierarchical clustering method will first agglomerate the eigenvalues that are in the same part of ${{\gamma}}$, and second the distinct blocks, by increasing order of pairwise distance. The last model of the first phase will be exactly the true model.
\end{proof}
Hence, the hierarchical clustering strategy generates a hierarchical subfamily of models of decreasing complexities, including the true PSA model for $n$ large enough. The true model can be then recovered using asymptotically consistent model selection criteria on the subfamily.
We now propose a second strategy that is not hierarchical but instead makes a prior assumption on the model complexity and then selects the one that has the maximum likelihood among all the candidates.

\subsubsection{Prior on the number of distinct eigenvalues}
In this strategy, we perform model selection at a given level of the Hasse diagram (cf. Fig~\ref{fig:hasse_complexity}). More precisely, we consider as candidates only the models that have a given type length $d$, like done in IPPCA with $d=2$.
The type-length prior strategy reduces the search space like the previous strategy, this time to $\binom{p-1}{d-1}$ models. In contrast to the hierarchical clustering strategy which creates a hierarchy of models with decreasing complexity, we here rather fix the complexity range of the candidate models, by working on one floor of the Hasse diagram, and then try to find the model of best fit.

Just like in the hierarchical clustering strategy, we could use the BIC to choose the best model among this reduced family.
For completeness, we provide an additional criterion that is nothing but the maximum likelihood itself. 
We indeed manage to extend to PSA the surprising result from~\cite{bouveyron_intrinsic_2011} stating that the maximum likelihood criterion alone asymptotically consistently finds the true intrinsic dimension within the IPPCA setting. 
Intuitively, this can be explained by the fact that we a priori fix the complexity of the candidate models and therefore we can focus on the other side of the weighing scale that is the goodness of fit.
As this criterion empirically yields competitive results with respect to other classical model selection criteria in the large sample, low signal-to-noise ratio regime, we expect it to be of interest in PSA as well.
\begin{proposition}[Asymptotic consistency of the maximum likelihood for fixed $d$]\label{appprop:fixed_length_heuristic}
If the true PSA model has $d$ distinct eigenvalues, then maximum likelihood model selection within the subfamily of PSA models of type-length $d$ almost surely recovers the true model for $n$ large enough.
\end{proposition}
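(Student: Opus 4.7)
The plan is to exploit the explicit maximum-likelihood formula~\eqref{appeq:PSA_ML} together with the asymptotic consistency of sample eigenvalues used in the proof of Prop.~\ref{appprop:hierarchical_heuristic}. Since $p\ln(2\pi) + p$ and the factor $-n/2$ are constant across all candidate models, maximum likelihood selection over type-length-$d$ types reduces to minimizing
\begin{equation}
\Phi_n(\gamma) := \sum_{k=1}^{d} \gamma_k \ln \overline{L_k}(\gamma),
\end{equation}
where $\overline{L_k}(\gamma)$ is the arithmetic mean of the sample eigenvalues $\ell_j(S_n)$ indexed by the $k$-th block of $\gamma$. Denote the true type by $\gamma^\star = (\gamma_1^\star,\dots,\gamma_d^\star)$ with distinct population eigenvalues $\lambda_1^\star > \dots > \lambda_d^\star$.

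First, I would invoke Tyler's lemma (as quoted in Prop.~\ref{appprop:hierarchical_heuristic}) to obtain the almost sure pointwise convergence $\ell_j(S_n) \to \lambda^\star_{\phi_{\gamma^\star}(j)}$. Because $\ln$ is continuous on $\R_{>0}$, this implies that for any candidate type $\gamma'$ of length $d$,
\begin{equation}
\Phi_n(\gamma') \xrightarrow[n\to\infty]{\text{a.s.}} \Phi_\infty(\gamma') := \sum_{k=1}^{d} \gamma_k' \ln\!\left( \frac{1}{\gamma_k'}\sum_{j\in I_k'} \lambda^\star_{\phi_{\gamma^\star}(j)} \right),
\end{equation}
where $I_k'$ is the $k$-th consecutive block of indices determined by $\gamma'$. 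In particular, for $\gamma' = \gamma^\star$ each block average collapses to a single $\lambda_k^\star$, giving $\Phi_\infty(\gamma^\star) = \sum_{k=1}^d \gamma_k^\star \ln \lambda_k^\star$.

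The crux is to compare $\Phi_\infty(\gamma')$ with $\Phi_\infty(\gamma^\star)$. By the strict concavity of $\ln$ (Jensen's inequality applied blockwise), one has
\begin{equation}
\gamma_k' \ln\!\left(\frac{1}{\gamma_k'}\sum_{j\in I_k'} \lambda^\star_{\phi_{\gamma^\star}(j)}\right) \;\geq\; \sum_{j\in I_k'} \ln \lambda^\star_{\phi_{\gamma^\star}(j)},
\end{equation}
with equality if and only if all values $\lambda^\star_{\phi_{\gamma^\star}(j)}$ for $j\in I_k'$ coincide, i.e.\ the block $I_k'$ is contained in a single part of $\gamma^\star$. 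Summing over $k$ yields $\Phi_\infty(\gamma') \geq \Phi_\infty(\gamma^\star)$.

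The uniqueness step is the only point requiring care. If equality holds, each of the $d$ consecutive blocks of $\gamma'$ must be included in some consecutive block of $\gamma^\star$, inducing a (monotone) map between the two partitions of $\{1,\dots,p\}$ into $d$ intervals. Since both $\gamma'$ and $\gamma^\star$ have exactly $d$ parts summing to $p$, this inclusion forces the partitions to coincide, hence $\gamma' = \gamma^\star$. Consequently, for every $\gamma' \neq \gamma^\star$ of length $d$ we have the strict inequality $\Phi_\infty(\gamma') > \Phi_\infty(\gamma^\star)$. Almost sure convergence over the finite subfamily of $\binom{p-1}{d-1}$ candidates then gives, for $n$ large enough, $\Phi_n(\gamma^\star) < \Phi_n(\gamma')$ for all other candidates simultaneously, so maximum likelihood selects $\gamma^\star$. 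The main technical step is really the Jensen argument together with the observation that fixing the type-length to $d$ is what rules out the trivial refinement counterexamples (under which a finer $\gamma'$ would always match the data more closely).
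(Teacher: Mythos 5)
Your proposal follows essentially the same route as the paper: reduce maximum likelihood selection to minimizing $\sum_k \gamma_k \ln \overline{L_k}$, pass to the almost-sure limit of the sample eigenvalues, and apply Jensen's inequality blockwise to show the true type minimizes the limit. You additionally make explicit the equality case of Jensen (equality forces $\gamma'$ to refine $\gamma^\star$, hence $\gamma'=\gamma^\star$ since both have $d$ parts), which the paper leaves implicit but which is needed for the strict inequality and the uniform conclusion over the finite candidate family; this is a welcome tightening rather than a different proof.
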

\begin{proof}
Let us assume that the true generative model is stratified with type $~{{{\gamma}}^*\coloneqq\lrp{\gamma_1^*, \dots, \gamma_d^*}}$, of length $d$, and let $\lambda_1 > \dots > \lambda_{d} > 0$ be the eigenvalues of the associated population covariance matrix.
Then, similarly as in the previous proof, almost surely, asymptotically, the sample covariance matrix eigenvalues are the ones of the population covariance matrix.
Hence, for any PSA model of type ${{\gamma}} \coloneqq \lrp{\gamma_1, \dots, \gamma_d}$, the maximum likelihood writes
\begin{equation}
\ln{\hat{\mathcal{L}}}{(\gamma)} {=} -\frac n 2 \lrp{p \ln 2\pi + \sum_{k=1}^{d} \gamma_k \ln \lrp{\frac{1}{\gamma_k}\sum_{j \in \phi_{{{\gamma}}}^{-1} \lrs{k}} \lambda_{\phi_{{{\gamma}}^*}(j)}}}.
\end{equation}
As $n$ and $p$ are fixed when we compare the models, they do not intervene in the model selection. Hence, the search of the optimal model in terms of maximum likelihood boils down to the following problem 
\begin{equation}
    \argmin_{\substack{{{\gamma}} \in \mathcal{C}(p)\\ \#{{{\gamma}}}=d}}
    \sum_{k=1}^{d} \gamma_k \ln \lrp{\frac{1}{\gamma_k}\sum_{j \in \phi_{{{\gamma}}}^{-1} \lrs{k}} \lambda_{\phi_{{{\gamma}}^*}(j)}} \coloneqq f({{\gamma}}).
\end{equation}
One has $f({{\gamma}}) = \sum_{k=1}^{d} \gamma_k \ln (\frac{1}{\gamma_k}\sum_{k'=1}^{d}
c_{kk'} \lambda_{k'})$, where $c_{kk'}$ is the cardinal of the intersection of the $k$-th part of ${{\gamma}}$ with the $k'$-th part of ${{\gamma}}^*$.
Then, by definition, one has $\sum_{k'=1}^{d} c_{kk'} = \gamma_k$ and $\sum_{k=1}^{d} c_{kk'} = {{\gamma}}^*_{k'}$. Hence, using Jensen's inequality,
\begin{equation}
f({{\gamma}}) \geq \sum_{k=1}^{d} \gamma_k \lrp{\sum_{k'=1}^{d} \frac{c_{kk'}}{\gamma_k}\ln \lambda_{k'}} = \sum_{k,k'=1}^{d} c_{kk'} \ln \lambda_{k'} = \sum_{k'=1}^{d} {{\gamma}}^*_{k'} \ln \lambda_{k'} = f({{\gamma}}^*).
\end{equation}
To conclude, asymptotically, ${{\gamma}}^*$-PSA is the most likely model. Hence, the maximum likelihood criterion alone finds the true model among the family of PSA models with the same type length.
\end{proof}
Hence we derived three simple strategies for model selection, taking into account the structure of the PSA models family. 
\begin{remark}
Many variants can be adopted depending on the problem at hand. For instance if the noise is known, or assumed with some explained variance ratio rules, one can first search for the associated intrinsic dimension $q$ like in classical PCA, and then try to equalize some of the $q$ first eigenvalues by optimizing the model selection criterion over the subfamily of models whose $p - q$ last eigenvalues are all equal.
\end{remark}
\begin{remark}
In high dimensions, some eigenvalues might be very small or even null. The case of small positive eigenvalues may yield {large} relative eigengaps in the last eigenvalue pairs---therefore PSA model selection tends to separate those eigenvalues---whereas those are traditionally considered as noise. The case of null eigenvalues may even yield undefined PSA models. To circumvent those two issues, a classical trick is the one of \textit{covariance regularization}, consisting in adding a small constant to all the covariance eigenvalues. This somewhat boils down to adding an isotropic Gaussian noise to the data. This notably has the effect of diminishing the relative eigengaps, especially for the small positive or null eigenvalues. Another idea is to constrain the model types to have at least the last $p - q$ eigenvalues equal, where $q$ is chosen sufficiently small such that the first $q$ eigenvalues are sufficiently large.
\end{remark}

\section{{Statistical evaluation of the PSA methodology}}\label{appsec:evaluation}
A key result in the previous section is that we rarely have enough samples to confidently assert that two adjacent sample eigenvalues are distinct.
Consequently, PPCA models could be made more parsimonious by equalizing the adjacent sample eigenvalues with small gaps in the signal space as well. {In this section, we provide additional theoretical and experimental evidences for the interest of PSA over PPCA. We thank the anonymous reviewers for suggesting us to explore some of these insightful ideas.}

\subsection{{Model selection for increasing sample sizes}}
In order to better understand how {our relative eigengap} result{s} apply in practice, we make the following PSA model selection experiment.
We consider a given multivariate Gaussian population density, with covariance matrix eigenvalues $\lrp{10, 9, 7, 4, 0.5}$, and sample $n \in \lrb{20, 50000}$ data points from it. We fit all the PSA models to this data distribution and select the one with the lowest BIC. The experiment is repeated several times independently  for each $n$, and the results are reported in Fig~\ref{appfig:BIC_traj}, where we plot only a few models among the $16$ for readability.
\begin{figure}[t]
     \centering
     \includegraphics[width=\linewidth]{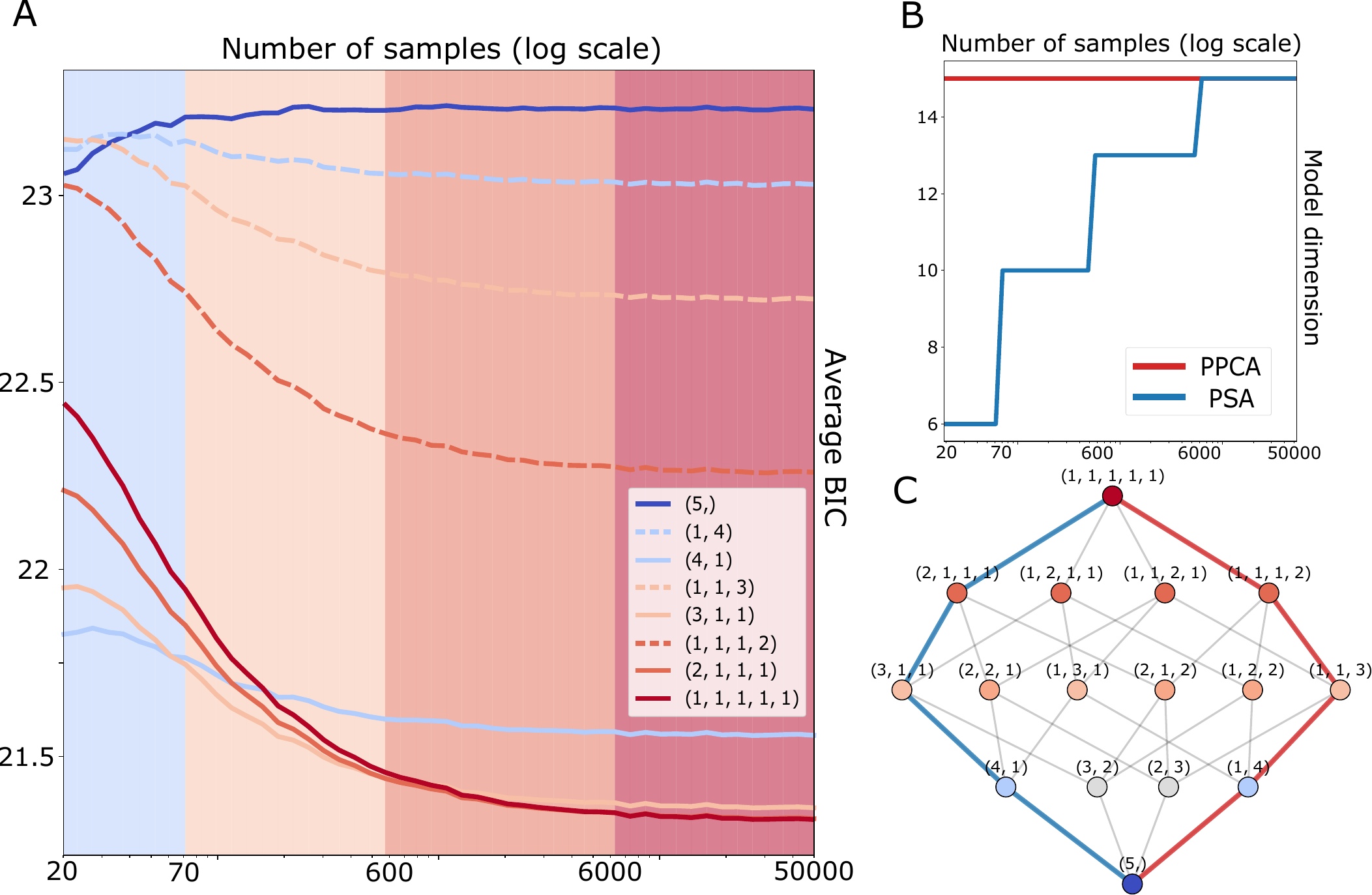}
    \caption{PSA model selection using the BIC for an increasing number of available samples.
        (A) Each curve represents the average BIC of a given PSA model over several independent experiments. The lowest curve at a given $n$ (horizontal coordinate) therefore corresponds to the most selected model.
        The curves corresponding to PPCA models are dashed.
        The curve color is related to the number of free parameters, from low (blue) to high (red). 
        The background color then corresponds to the most selected model at a given sample size.
        For instance, we can see that for $n \in [20, 70]$ (light blue), the model that is the most selected is ${\gamma} = (4,1)$. For $n \in [70, 600]$ (light orange), it is ${\gamma} = (3,1,1)$. For $n \in [600, 6000]$ (orange), it is ${\gamma} = (2,1,1,1)$. And for $n \in [6000, 50000]$ (red), it is ${\gamma} = (1,1,1,1,1)$.
        (B) Comparison of the complexities of the mostly selected models within the whole PSA family (blue) and within the PPCA family only (red).
        (C) PSA Hasse diagram. The blue curve corresponds to the trajectory followed by the optimal PSA selected model as the number of samples increases. We could expect that the PPCA models on the right follow the same kind of trajectory (in red), but it actually only stays on the top node as the other available models do not fit well the data distribution.
		}
        \label{appfig:BIC_traj}
\end{figure}
First, on the BIC plots, we can see that for $n \leq 6000$, PSA discloses a whole family of models that better explain the observed data than PPCA.
This shows that even for a very large number of samples with respect to the dimension, distinguishing the first eigenvalues and eigenvectors like PPCA does is not justified.
Second, on the complexity plots, we can see that PPCA mostly selects the full covariance model for any sample size, while PSA finds less complex models along the whole trajectory.
Moreover, interestingly, we note the consistent increase of model complexity with the number of samples. We deduce that as the sample size increases, PSA can more confidently distinguish the sample eigenvalues.
Third, on the Hasse diagram, we can see that PSA follows a trajectory as the number of available samples increases, which recalls the kind of subfamily generated by the hierarchical clustering strategy (cf. Fig~\ref{appfig:hierarchical_clustering}).
To conclude, we see on this synthetic example that PSA achieves a better complexity/goodness-of-fit tradeoff than PPCA in a wide range of sample sizes by equalizing the highest eigenvalues.

\subsection{{Statistical power of the relative eigengap}}
{
The hypothesis testing framework may be quite insightful in order to evaluate the quality of the proposed methodology.
To that extent, let us consider a dataset $(x_i)_{i=1}^n \sim \mathcal{N}(0, \diag{\lambda_1, \lambda_2})$ sampled from a two-dimensional Gaussian distribution with covariance eigenvalues $\lambda_1 \geq \lambda_2$, separated by a relative eigengap $\delta$ (i.e. $\lambda_2 = \lambda_1 (1-\delta)$). 
The null hypothesis is $\delta = 0$ (the eigenvalues are equal), and the alternative hypothesis is $\delta > 0$ (the eigenvalues are distinct).
Let $\ell_1 \geq \ell_2$ be the sample eigenvalues.
Using our relative eigengap condition~\eqref{eq:releigengap_threshold}---which itself somewhat fixes a significance level---we aim to evaluate the statistical power of the relative eigengap as a function of $\delta$ (effect size) and $n$ (sample size).}

{Let us first consider the case $\delta = 0$, with $\lambda_1 = \lambda_2 = 1$. We plot the percentage of correct identifications of isotropy in Fig~\ref{fig:R1Q1} (left).}
\begin{figure}
    \centering
    \includegraphics[width=\linewidth]{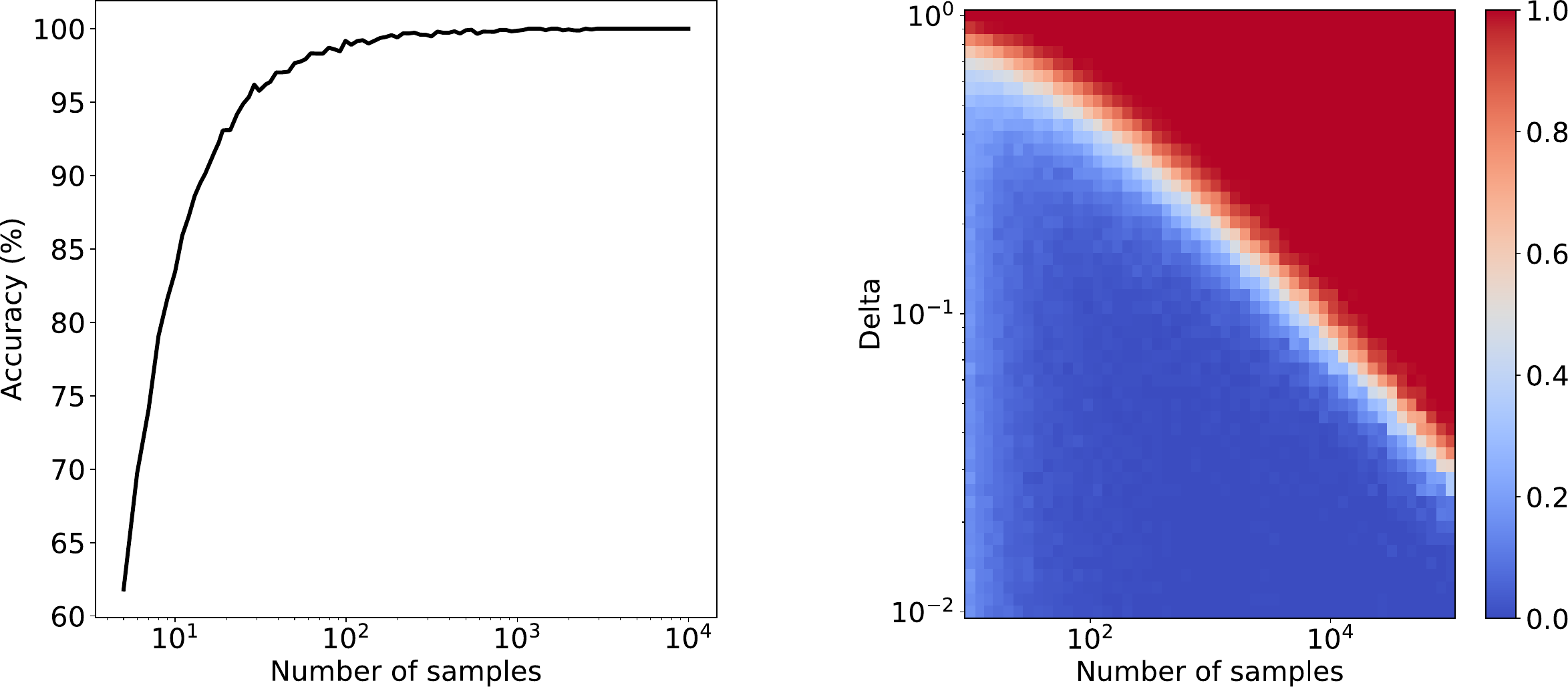}
    \caption{{Relative eigengap-based eigenvalue-equality testing under the two-dimensional Gaussian dataset $(x_i)_{i=1}^n \sim \mathcal{N}(0, \diag{1, 1-\delta})$.
    (Left) Percentage of correct identification of \textit{isotropy} ($\delta=0$) by our relative eigengap criterion for increasing $n$. (Right) Percentage of correct identification of \textit{anisotropy} ($\delta>0$) for increasing $n$ and $\delta$.}}
    \label{fig:R1Q1}
\end{figure}
{
We see that the accuracy increases with the number of samples and goes asymptotically to $100\%$; the relative eigengap condition gets more than $90\%$ accuracy for $n\geq 15$ and more than $95\%$ accuracy for $n\geq 27$.
Let us now consider the case $\delta > 0$, with $\lambda_1 = 1$ and $\lambda_2 = \lambda_1 (1 - \delta)$. For increasing $\delta$ and $n$, we plot in Fig~\ref{fig:R1Q1} (right) the percentage of correct identifications of anisotropy (statistical power) with the relative eigengap condition. We can see a sharp transition between the ``small $\delta$ small $n$'' regime where our relative eigengap condition always favors isotropic models whereas the true model has distinct eigenvalues, and the ``large $\delta$ large $n$'' regime where our relative eigengap condition always rightly favors anisotropic models. While this isotropic model misspecification in the ``small $\delta$ small $n$'' regime may sound fatal, we will see in the next subsection that it may actually have (very) positive consequences.}

\subsection{{Bias and variance of the PSA estimator}}
{
Intuitively, an expected outcome of equalizing eigenvalues (PSA) instead of inferring them individually (PPCA) is that the bias of the underlying estimator increases while the variance decreases. To assess this bias--variance tradeoff, we consider a dataset $(x_i)_{i=1}^n \sim \mathcal{N}(0, \diag{\lambda_1, \lambda_2})$ sampled from a two-dimensional Gaussian distribution with covariance eigenvalues $\lambda_1 \geq \lambda_2$, separated by a relative eigengap $\delta$ (i.e. $\lambda_2 = \lambda_1 (1-\delta)$). 
Let $\ell_1 \geq \ell_2$ be the sample eigenvalues and $v_1 \perp v_2$ some associated sample eigenvectors. 
We want to evaluate the average and standard Frobenius errors between the estimated covariance matrix and the true one: 
\begin{equation}
    \norm{\hat\Sigma - \diag{\lambda_1, \lambda_2}}_F,
\end{equation}
with $\hat\Sigma = \ell_1 v_1 {v_1}\T + \ell_2 v_2 {v_2}\T$ under the PPCA model and $\hat\Sigma = \frac{\ell_1 + \ell_2}{2} \lrp{v_1 {v_1}\T + v_2 {v_2}\T} = \frac{\ell_1 + \ell_2}{2} I_2$ under the PSA model.
}

\subsubsection{{Finite-sample simulations}}
{Let us first consider the case $\delta = 0$, with $\lambda_1 = \lambda_2 = 1$. We plot in Fig~\ref{fig:R2Q2} (top-left) the average and standard Frobenius errors for increasing $n$.}
\begin{figure}
    \centering
    \includegraphics[width=\linewidth]{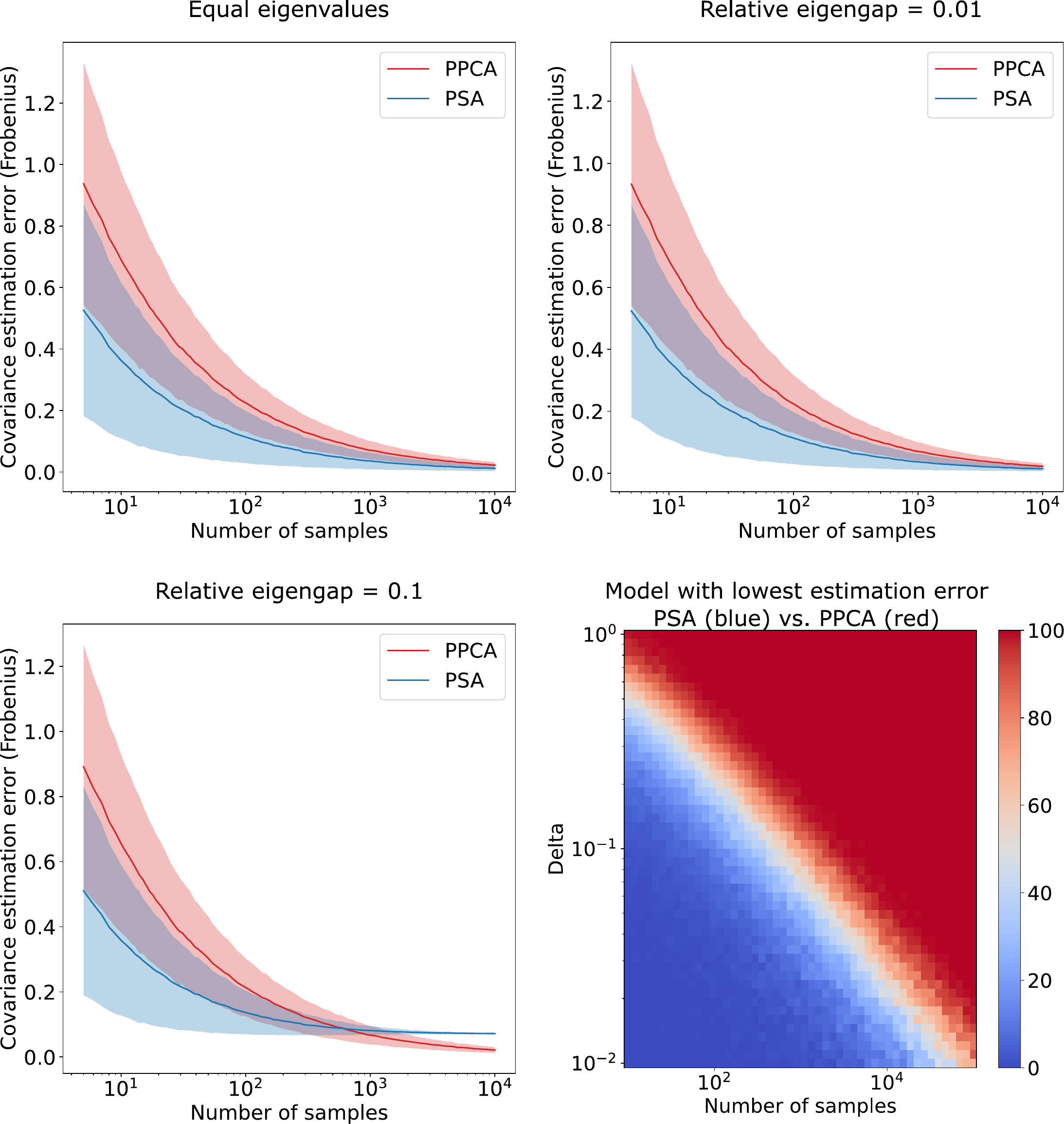}
    \caption{
    {
    Comparison of PPCA ($\lambda_1\geq\lambda_2$) and PSA ($\lambda_1=\lambda_2$) covariance estimation error $\|{\hat\Sigma - \diag{\lambda_1, \lambda_2}}\|_F$ under the two-dimensional Gaussian dataset $(x_i)_{i=1}^n \sim \mathcal{N}(0, \diag{1, 1-\delta})$. 
    (Top-left) Covariance estimation error for increasing $n$ and $\delta = 0$. 
    (Top-right) Covariance estimation error for increasing $n$ and $\delta = 0.01$. 
    (Bottom-left) Covariance estimation error for increasing $n$ and $\delta = 0.1$. 
    (Bottom-right) Lowest estimation error between PPCA (red) and PSA (blue) for increasing $n$ and $\delta$.
    }
    }
    \label{fig:R2Q2}
\end{figure}
{We see that the PSA model has a lower average estimation error for all $n$ and a lower variance too. Both estimation errors tend to $0$ asymptotically.}

{Let us  now consider the case where the two population eigenvalues are separated by a relative eigengap $\delta$, i.e. $\lambda_1 = 1$ and $\lambda_2 = \lambda_1 (1 - \delta)$. For increasing $n$, we plot in Fig~\ref{fig:R2Q2} (top-right and bottom-left) the average Frobenius errors of both methods for $\delta=0.01$ and $\delta=0.1$. 
While the variance of PSA is always lower than the variance of PPCA, both going to 0 asymptotically, we can now observe a bias in the PSA model: the PPCA error goes to $0$ asymptotically while the PSA error converges to a larger value. 
All the previous observations are quite natural---and they will be justified with simple theoretical insights later in this subsection.}

{
What is positively surprising is that when the number of samples is ``not-so-large'', the PSA estimator actually achieves a smaller error than the PPCA estimator, although being misspecified. This phenomenon is perhaps even better illustrated on the bottom-right plot of Fig~\ref{fig:R2Q2}, depicting the number of times the PPCA model yields a smaller estimation error than the PSA model for different $(n, \delta)$ values. We see that the PSA model almost surely yields a lower covariance estimation error than the PPCA model in the ``small $\delta$ small $n$'' regime.} 

{
This outcome nuances the results of the preceding subsection, which gave the impression that the PSA methodology was not suited to the ``small $\delta$ small $n$'' regime. Although the PSA models are misspecified (they assume equal eigenvalues while the true ones are distinct), the parsimony induced by equalizing the close eigenvalues actually yields smaller estimation errors. Interestingly, PPCA needs quite a lot of samples to outperform PSA's covariance estimation, although the latter is misspecified compared to the former.
}  %

{
Hence, this experiment shows that the true covariance matrix does not need to have repeated eigenvalues to make our PSA models interesting. They reduce \textit{both} the bias and the variance for small-to-moderate sample sizes.
}

\subsubsection{{Asymptotic theoretical insights}}
{The literature on asymptotic distributions of principal components (see~\cite[Section~3.6]{jolliffe_principal_2002} for a quick overview) enables us to get simple theoretical insights on the previous observations.}

{For instance, if we assume that the population eigenvalues are distinct ($\lambda_1 > \lambda_2$), then the asymptotic distribution of the (ordered decreasing) sample eigenvalues ($\ell_1 \geq \ell_2$) is given in Eq~(3.10) of Anderson's seminal paper~\cite{anderson_asymptotic_1963}:
\begin{equation}
    \begin{cases}
        \sqrt{n} (\ell_1 - \lambda_1) \sim \mathcal{N}(0, 2\lambda_1^2),\\
        \sqrt{n} (\ell_2- \lambda_2) \sim \mathcal{N}(0, 2\lambda_2^2).
    \end{cases}
\end{equation}
Therefore, one gets
\begin{equation}
    \begin{cases}
        \sqrt{n}\lrp{\frac{\ell_1 + \ell_2}{2} - \lambda_1} \sim \mathcal{N}\lrp{-\frac{\lambda_1 - \lambda_2}{2}, \frac{\lambda_1^2 + \lambda_2^2}{2}},\\
        \sqrt{n}\lrp{\frac{\ell_1 + \ell_2}{2} - \lambda_2} \sim \mathcal{N}\lrp{+\frac{\lambda_1 - \lambda_2}{2}, \frac{\lambda_1^2 + \lambda_2^2}{2}}.
    \end{cases}
\end{equation}
Consequently, as intuited with the experiments, the PSA estimator is biased while the PPCA estimator is not. Moreover, the PSA estimator has a lower variance than the PPCA estimator. The same reasoning generalizes seamlessly to any dimension $p$ and grouping of eigenvalues $\gamma\in\mathcal{C}(p)$. The PSA model, which block-averages the sample eigenvalues, is biased, but its variance is reduced quadratically with respect to the sizes of the blocks. More precisely, the variance is divided by $\gamma_k^2$ respectively for each block.
}

{
Let us now assume that the population eigenvalues are equal ($\lambda_1 = \lambda_2 \coloneqq \lambda$). Then the asymptotic distribution of the (ordered-decreasing) sample eigenvalues is derived in~\cite[Eq~(2.12)]{anderson_asymptotic_1963}. Denoting $h \coloneqq \sqrt{n}(\ell - \lambda)$, one has
\begin{equation}
    p(h_1, h_2) = \frac{1}{\sqrt{32 \pi} \lambda^3} e^{-\frac{h_1^2 + h_2^2}{4\lambda^2}} (h_1-h_2) \, \mathbf{1}_{\{(h_1, h_2)\in\R^2\colon h_1 > h_2\}} (h_1, h_2).
\end{equation}
Using changes of variables and truncated Gaussian integrals, one gets the following moments:
\begin{equation}
    \begin{cases}
        \int_{h_1>h_2}   \frac{h}{\sqrt{32 \pi} \lambda^3} e^{-\frac{h_1^2 + h_2^2}{4\lambda^2}} (h_1-h_2) \, \mathrm{d}h_1 \, \mathrm{d}h_2 = \lrp{+\sqrt{\frac{\pi}{2}}\lambda, -\sqrt{\frac{\pi}{2}}\lambda},\\
        \int_{h_1>h_2}   \frac{h^2}{\sqrt{32 \pi} \lambda^3} e^{-\frac{h_1^2 + h_2^2}{4\lambda^2}} (h_1-h_2) \, \mathrm{d}h_1 \, \mathrm{d}h_2 = (3 {\lambda}^{2}, 3 {\lambda}^{2}).
    \end{cases}
\end{equation}
Therefore, one has
\begin{equation}
    \begin{cases}
        \mathbb{E}[\sqrt{n}(\ell_1 - \lambda)] &= + \sqrt{\frac{\pi}{2}} \lambda,\\
        \mathbb{E}[\sqrt{n}(\ell_2 - \lambda)] &= - \sqrt{\frac{\pi}{2}} \lambda,\\
        \mathbb{V}[\sqrt{n}(\ell_1 - \lambda)] &= (3 - \frac{\pi}{2})\lambda^2,\\
        \mathbb{V}[\sqrt{n}(\ell_2 - \lambda)] &= (3 - \frac{\pi}{2})\lambda^2.
    \end{cases}
\end{equation}
We see that the (ordered decreasing) sample eigenvalues are biased.}

{
Conversely, using Eq~(3.10) of Anderson's seminal paper~\cite{anderson_asymptotic_1963}, one gets 
\begin{equation}
    \sqrt{n}\lrp{\frac{\ell_1 + \ell_2}{2} - \lambda} \sim \mathcal{N}\lrp{0, {\lambda^2}}.
\end{equation}
Hence the PSA estimator is not only unbiased but also has lower variance than the PPCA estimator.
}

{The last result on the PSA estimator generalizes seamlessly to any dimension $p$ and grouping of eigenvalues $\gamma\in\mathcal{C}(p)$, where the variance is divided by $\gamma_k$ (cf.~\cite[Eq~(3.10)]{anderson_asymptotic_1963}). The penultimate result on the PPCA estimator may generalize to higher dimensions but the formulas would be much more complicated.}

\subsection{{Model selection accuracy of hierarchical clustering algorithm}}
{
Let us now evaluate the quality of Algorithm~\ref{alg:hierarchical}, in terms of model selection accuracy. More precisely, given a synthetic PSA-distributed dataset, let us estimate the probability that Algorithm~\ref{alg:hierarchical} recovers the correct eigenvalue multiplicities.
Let $(x_i)_{i=1}^n \sim \mathcal{N}(0, \diag{\lambda_1 I_{20}, \lambda_2 I_{20}, \lambda_3 I_{10}})$ be a dataset with $n$ points sampled from a multivariate Gaussian distribution with $p=50$ and covariance eigenvalues $\lambda_1=10$ (of multiplicity $20$), $\lambda_2 = 10 \times (1-\delta)$ (of multiplicity $20$) and $\lambda_3 = 10 \times (1-\delta)^2$ (of multiplicity $10$). The idea of such a covariance profile is to have three blocks of eigenvalues, with constant inter-block relative eigengap $\delta$.
}

{We report in Fig~\ref{fig:R2Q3} (top-left, top-right, bottom-left) some typical sample eigenvalue profiles generated from this model. We can see that for $\delta =0.5$ and $n=100$, the three groups of eigenvalues are not visually identifiable. As $n$ increases, the three groups get more and more separated. Let us note that the top sample eigenvalue sometimes has a relatively large difference with the first block of eigenvalues, which could lead model selection methods to separate them.}

{We now report in  Fig~\ref{fig:R2Q3} (bottom-right) the percentage of accurate model selection with our hierarchical eigenvalue clustering method (Algorithm~\ref{alg:hierarchical}), as a function of $n$ and $\delta$.}
\begin{figure}
    \centering
    \includegraphics[width=\linewidth]{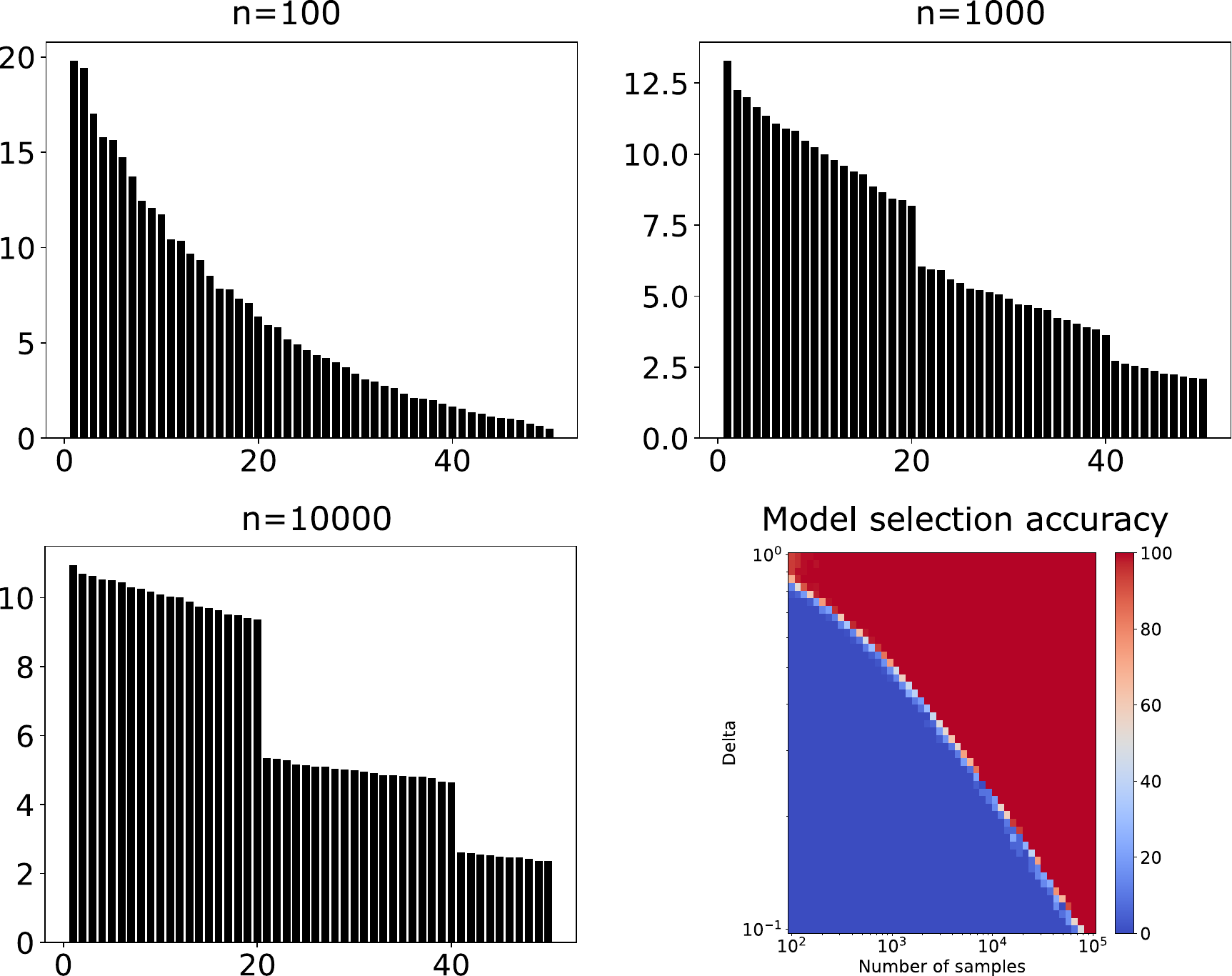}
    \caption{
    {
    Algorithm~\ref{alg:hierarchical}'s ability to recover the true eigenvalue multiplicities under the three-block population covariance matrix $\diag{\lambda_1 I_{20}, \lambda_2 I_{20}, \lambda_3 I_{10}}$. 
    (Top-left) Sample eigenvalue profile for $\delta = 0.5$ and $n = 100$.
    (Top-right) Sample eigenvalue profile for $\delta = 0.5$ and $n = 1000$.
    (Bottom-left) Sample eigenvalue profile for $\delta = 0.5$ and $n = 10000$.
    (Bottom-right) Percentage of accurate model selection with Algorithm~\ref{alg:hierarchical}, for increasing $n$ and $\delta$.
    }
    }
    \label{fig:R2Q3}
\end{figure}
{We can once again see a sharp transition in terms of model selection accuracy, from 0\% for the ``small $\delta$ small $n$'' regime to 100\% for the ``large $\delta$ large $n$'' regime.}

\section{{Alternative methods for grouping eigenvalues and perspectives}}\label{appsec:alternatives}
{
While our proposed BIC-based methodology for grouping the eigenvalues is certainly practical, it may seem rather heuristic than relying on strong theoretical foundations. This section discusses some alternative methods and perspectives to identify the curse of isotropy. 
We thank the anonymous reviewers for suggesting us to address these perspectives.}

{
\begin{remark}
We initially opted for a BIC-based methodology due to the ubiquity of such criteria in data science. An interesting anecdote is that the default method for estimating PCA's intrinsic dimension in one of the most used data science libraries (scikit-learn~\cite{pedregosa_scikit-learn_2011}) is Minka's penalized-likelihood~\cite{minka_automatic_2000}, which can be seen as a refinement of the BIC. 
Therefore, we believe that the BIC and related model selection criteria are quite widespread among practitioners, hence the practical interest of our methodology. Moreover, such criteria do enjoy theoretical foundations and guarantees~\cite{kass_bayes_1995,bai_consistency_2018}.
\end{remark}
}

\subsection{{Continuous relaxation of model selection}}
{
A natural alternative to (discrete) model selection for PSA is the penalized-likelihood approach, with a continuous penalty enforcing \textit{sparsity} of the \textit{eigengaps}, i.e. equal eigenvalues. We investigated this idea in a follow-up conference paper~\cite{szwagier_eigengap_2025}. The main findings are summarized in the following paragraphs.}

{
First, we derive an $\ell^1$-relaxation of the PSA model selection methodology. More precisely, the Bayesian information criterion~\eqref{appeq:BIC} used for model selection is rewritten as a penalized log-likelihood, where the penalty is a thoroughly-derived $\ell^0$-norm of pairwise distances between eigenvalues: the \textit{eigengaps}.
The BIC is then relaxed with $\ell^1$-norms, which results in a continuous optimization criterion. Such an approach has several advantages compared to a heuristic penalization. For instance, the regularization tuning hyperparameter $\alpha\in\R_+$ (which is often present in penalized optimization problems) is unique and automatically determined by the BIC ($\alpha = \ln n$). Moreover, the relaxed problem enjoys the statistical guarantees of the BIC whenever the relaxation is tight.
}

{Second, although penalizing the differences between \textit{adjacent}-eigenvalues-{only} seems intuitive, we show that the accurate way to relax the parsimony constraints is by penalizing the differences between \textit{all} eigenvalues---adjacent and non-adjacent. The justification is a bit technical, but in summary, we show that the number of covariance parameters related to the repeated eigenvalues, $d$, can be written as an $\ell^0$-norm of differences between adjacent eigenvalues, while the number of covariance parameters related to the flag of eigenspaces, $p(p-1) / 2 - \sum_{k=1}^d \gamma_k (\gamma_k - 1)/2$, can be written as an $\ell^0$-norm of differences between all pairs of eigenvalues. 
Hence, although penalizing the differences between adjacent eigenvalues seems intuitive, we show that accounting for the covariance eigenspaces requires to add the non-adjacent eigenvalues too, which importantly increases the ``strength'' of the penalty. 
This subtlety is actually very important, since it really enables to create large clusters of eigenvalues---therefore decreasing quadratically the number of parameters---while penalties on the adjacent eigengaps only may just equalize isolated pairs of eigenvalues.
}

{Third, we believe that the absolute distance between eigenvalues is not the right metric to use for the penalty. Indeed, we conjecture that the critical points of the penalized-likelihood objective function (when the penalty is on the absolute differences between eigenvalues) necessarily correspond to \textit{isotropic} covariance matrices. Hence, we decide to use relative eigengaps instead of absolute eigengaps in the methodology.
}

{Fourth, the final projected-gradient-descent algorithm that we propose unexpectedly draws interesting links with some classical covariance shrinkage methods~\cite{ledoit_quadratic_2022}. It notably suggests that parsimony in covariance matrices tends to ``mutually attract'' the eigenvalues, which is a well-known side effect of covariance shrinkage methods.
Moreover, our eigengap sparsity draws interesting links with the elasso method from Tyler and Yi~\cite{tyler_lassoing_2020} and follow-up works~\cite{basiri_fusing_2019}.
}

\subsection{{Bootstrap-based stability analysis}}

{
In view of the \textit{intersample variability}-related motivations for principal subspace analysis (cf. Section~\ref{sec:intro}), some alternative methodologies to detect the curse of isotropy based on bootstrapping may appear as natural. This subsection details two bootstrap-based methodologies to assess the stability of the principal components across independent samples. The first idea is based on eigenvalue confidence intervals: if two adjacent eigenvalues' confidence intervals intersect, then we equalize them. The second idea is based on eigenvectors variability: if one eigenvector ``fluctuates'' significantly, then we should merge it with the adjacent eigenvectors. 
}

{
First, the idea of confidence interval intersection for the eigenvalues is actually closely related to North's rule-of-thumbs~\cite{north_sampling_1982}, that we evoke in Section~\ref{sec:related} and discuss here in subsection~\ref{appsubsec:north}. 
Indeed, under the Gaussian assumption, one can derive the asymptotic normal law of the sample eigenvalues ($\ell_j\sim \mathcal{N}(\lambda_j, 2\lambda_j^2/n)$) and \textit{exactly} rewrite the intersection of the $95\%$ confidence intervals as a relative eigengap inequality (cf. Proposition~\ref{appprop:releigengap_North}). 
The curve of the threshold is plotted in Fig~\ref{fig:releigengap_curves} (NRT-2, for the $2\sigma$ confidence intervals). We see that the threshold is larger than our BIC-based threshold for small $n$ and smaller for large $n$, with a transition appearing at $n \approx 100$. This implies that the confidence-interval based approach equalizes more eigenvalues in the small-to-moderate sample regime and less eigenvalues in the large sample regime. Since, as shown in~\cite{north_sampling_1982}, the fluctuations of eigenvectors are first-order-proportional to the inverse of the eigengaps, then we believe that similar conclusions can be made for the idea on the fluctuation of the eigenvectors.
}

{
Now, since we are interested in the non-asymptotic regime, let us actually conduct the bootstrap experiments in a very simple case, that is $X\coloneqq (x_i)_{i=1}^n \sim \mathcal{N}(0, \diag{\lambda_1, \lambda_2})$ with $\lambda_1 = 1$ and $\lambda_2 = \lambda_1 (1 - \delta)$.
The intersection of the $95\%$ confidence intervals is relatively straightforward to implement. In contrast, the formulation of the eigenvector fluctuation idea is much more open, therefore we detail it hereafter. 
}

{
Let $(X'_l)_{l=1}^{n_\mathrm{res}}\in\R^{n_\mathrm{res}\times n\times p}$ correspond to $n_\mathrm{res}$ $n$-samples with replacement from $X\in\R^{n\times p}$. Let $v_l \in \R^p$ denote the leading eigenvector of the covariance matrix associated with the dataset $X'_l$.
Motivated by the classical principal-angle-based subspace distances (cf.~\cite[Section~2]{ye_schubert_2016} for instance), we define the following quantity as the \textit{fluctuation statistic}:
\begin{equation}
    \sigma\coloneqq \sqrt{\frac{1}{n^2}\sum_{l, l'=1}^{n_\mathrm{res}}\arccos({v_l}\T v_{l'})^2}.
\end{equation}
We decide to equalize the two eigenvalues when $2\sigma > \pi / 4$. The intuition is that the eigenvectors' orientation lies between $0$ and $\pi/2$, so that there will likely be a strong overlap between the two eigenvectors when $2\sigma > \pi / 4$.
The numerical tests for these two ideas are reported in Fig~\ref{fig:R2Q4}.
}
\begin{figure}
    \centering
    \includegraphics[width=\linewidth]{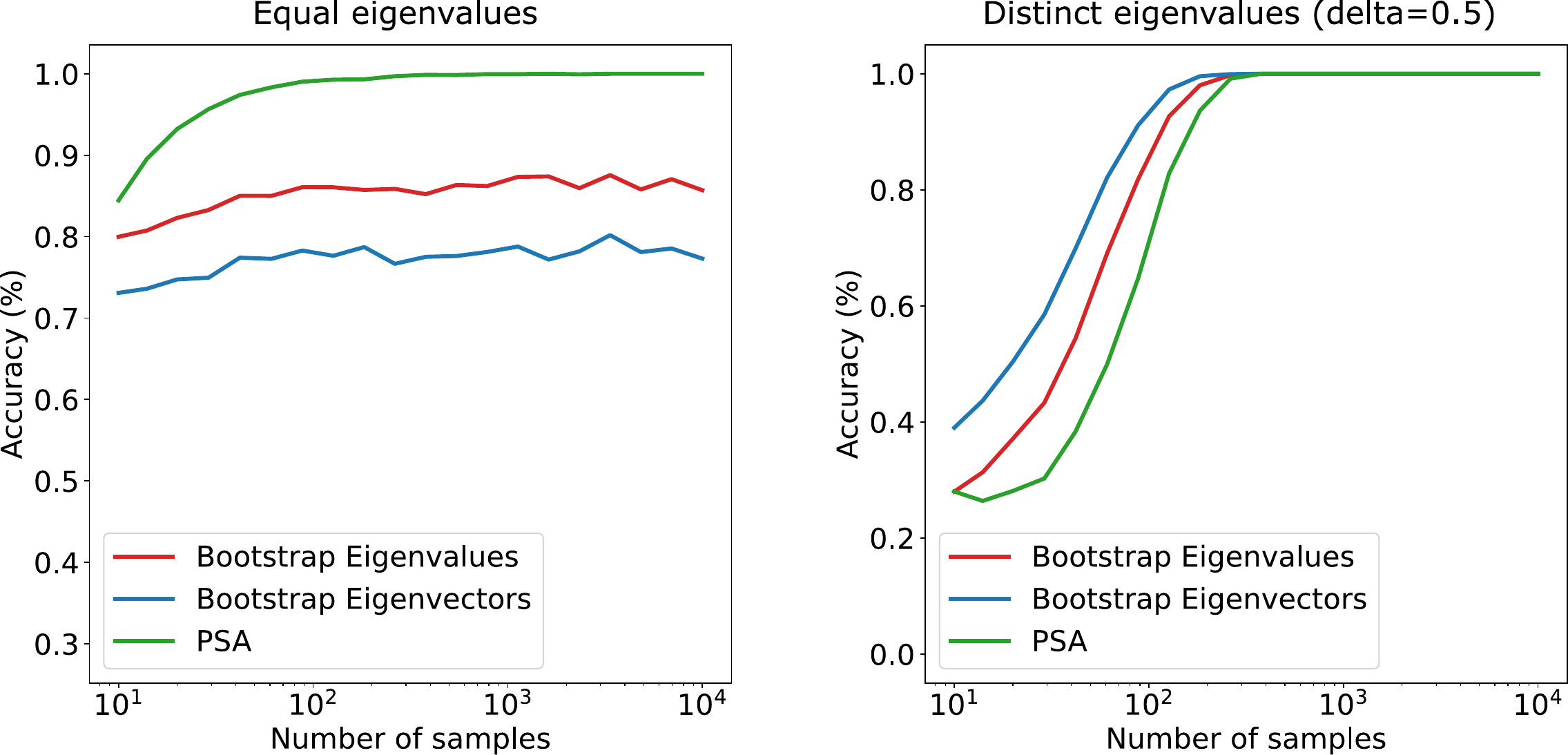}
    \caption{
    {Comparison of three eigenvalue grouping heuristics: relative eigengap, eigenvalue bootstrap and eigenvector bootstrap under the two-dimensional Gaussian dataset $(x_i)_{i=1}^n \sim \mathcal{N}(0, \diag{1, 1-\delta})$.
    (Left) Percentage of correct identification of \textit{isotropy} ($\delta=0$) by the three heuristics for increasing $n$. (Right) Percentage of correct identification of \textit{anisotropy} ($\delta=0.5$) by the three heuristics for increasing $n$.}
    }
    \label{fig:R2Q4}
\end{figure}

{
We can see that the PSA model with the BIC tends to more often favor parsimonious models than the bootstrap-based methods. This somewhat matches the asymptotic theory (cf. Fig~\ref{fig:releigengap_curves}) that the BIC favors more parsimonious models than North's rule of thumbs with $95\%$ confidence intervals. The bootstrap-based methods are naturally much longer to run (proportionally to the number of resamples $n_\mathrm{res}$), but they are quite practical and distribution-agnostic. Another issue with the bootstrap-based methods is that they rely on the choice of the width of the confidence intervals, which will obviously influence the parsimony of the selected model, while the BIC-based method is hyperparameter-free.
}

\subsection{{Bayesian extensions}}
{ 
We hereafter list some ideas of prior distributions for the type $\gamma\coloneqq(\gamma_1, \dots, \gamma_d)\in\mathcal{C}(p)$, the (ordered-decreasing) eigenvalues $\lambda_1, \dots, \lambda_d$ and the (mutually-orthogonal) frames $Q_1, \dots, Q_d$.
}

{The most natural prior for the type $\gamma\in\mathcal{C}(p)$ is the uniform prior over the (discrete) family of PSA models, i.e. $~{p(\gamma) \propto 1}$ (the normalizing constant being $\#\mathcal{C}(p)^{-1} = {2^{1-p}}$). An alternative prior is the uniform prior over a subset of PSA models. For instance, one can bound the complexity of the candidate models by considering priors of the form $p(\gamma) \propto \mathbf{1}_{d \leq d^*}(\gamma)$ for a given $d^*\in[1, p]$, the normalizing constant being $\bigl({\sum_{k=0}^{d^*-1} \binom{p-1}{k}}\bigr)^{-1}$. Such a prior imposes an upper bound on the number of eigenvalue blocks $d$, which is equivalent to considering a few lower floors in the Hasse diagram from Fig~\ref{fig:hasse_complexity}.
One can also bound the complexity of the model with priors of the form $p(\gamma) \propto \mathbf{1}_{p - \gamma_d \leq q^*}(\gamma)$ for a given $q^*\in[0, p-1]$, the normalizing constant being ${2^{-q^*}}$. Such a prior imposes an upper bound on the intrinsic dimension $q$. Finally, we can consider non-uniform priors putting more weights towards simpler models, like $p(\gamma) \propto \exp(-d)$ or $p(\gamma) \propto \exp(-(p-\gamma_d))$.
Let us point that in each case, we have the normalization constant in closed-form since we can easily---up to basic combinatorics---enumerate the candidate models.}

{There are plenty of possible priors for the eigenvalues. In the celebrated paper of Minka~\cite{minka_automatic_2000}, the prior is a scaled-inverse chi-squared distribution: $p(\lambda) \propto |\diag{\lambda}|^{-(\alpha+2)/2 }\exp(-(\alpha/2)\operatorname{tr}({\diag{\lambda}^{-1}}))$, where $\alpha$ is a hyperparameter controlling the ``sharpness'' of the prior. This choice is motivated by the use of a conjugate prior for the Gaussian likelihood of the covariance matrix, to facilitate the computations.
This automatically yields decreasingly-ordered eigenvalues for the maximum a posteriori estimate.
}

{The most natural prior for the frames is a uniform prior on the flag manifold, i.e. $(Q_1, \dots, Q_d) \sim \mathcal{U}(\Fl(\gamma))$. Since we are on Riemannian manifolds, the notion of ``uniformity'' is induced by the Riemannian measure, which itself is defined via the Riemannian metric. If we take the canonical metric, similarly as in the celebrated paper of Minka~\cite{minka_automatic_2000}---which is itself based on~\cite{james_normal_1954} and which involves Stiefel manifolds---then we can compute explicitly the normalizing constant, which is the reciprocal area of the flag manifold~\cite{boya_volumes_2003}. The latter is a generalization of the volume of Stiefel and Grassmann manifolds, via the quotient structure~\eqref{eq:quotient}.
We can also consider non-uniform priors on the frames, like matrix Von Mises--Fisher and Bingham distributions~\cite{khatri_von_1977,jupp_maximum_1979,hoff_hierarchical_2009,pal_conjugate_2020}, to shrink the flag of eigenspaces towards central values.}

{One can finally consider full covariance models with priors favoring equal eigenvalues. A natural prior for that is the reference prior of Yang and Berger~\cite{yang_estimation_1994} $p(\lambda) = c[|\diag{\lambda}|\prod_{i<j}(\lambda_i - \lambda_j)]^{-1}$. This prior puts more mass in the regions of eigenvalue equality~\cite{pourahmadi_covariance_2011}. Another natural idea is Wigner's surmise, which is directly on the ``spacing'' $\delta$ between eigenvalues $p(\delta)={\frac {\pi \delta}{2}}e^{-\pi \delta^{2}/4}$. In a similar vein, one could also consider Laplace or exponential distributions (similarly as in the seminal LASSO paper~\cite[Section~5]{tibshirani_regression_1996}) on the eigengaps, in order to favor exact equality of eigenvalues.}

\section{Information about datasets}\label{appsec:data}
In this section, we give a few more details about the data used for the experiments.

\subsection{Natural image patches}
In this experiment, we consider 10 flower images from the ImageNet database~\cite{deng_imagenet_2009}. Those were downloaded from Kaggle (\url{https://www.kaggle.com/datasets/prasunroy/natural-images}) and extracted from \texttt{natural\_images/flower/} folder, from \texttt{flower\_0000.jpg} up to \texttt{flower\_0009.jpg}.

\subsection{Eigenfaces}
In this experiment, we consider 31 digital images from the CMU Face Images database~\cite{mitchell_cmu_1997}. Those were downloaded from Kaggle (\url{https://www.kaggle.com/datasets/raviprakash22/cmu-face-images}) and extracted from the folder \texttt{faces/faces/choon}. We only extracted the $(60, 64)$ images, corresponding to all the files ending with \texttt{\_2.pgm}.

\subsection{Structured data}
For the structured data experiment (cf. Fig~\ref{fig:exp_rotation}) and the relative eigengap tables (cf. Fig~\ref{fig:releigengap_UCI} and Fig~\ref{appfig:releigengaps_real}), we consider data from the UCI Machine Learning Repository (\url{https://archive.ics.uci.edu/}): Ionosphere~\cite{sigillito_ionosphere_1989}, Wine~\cite{aeberhard_wine_1991}, Wisconsin~\cite{wolberg_breast_1995}, Glass~\cite{german_glass_1987}, Iris~\cite{fisher_iris_1936}, Spambase~\cite{hopkins_spambase_1999}, Digits~\cite{alpaydin_optical_1998}, Covertype~\cite{blackard_covertype_1998}.
\end{document}